\NeedsTeXFormat{LaTeX2e}
\documentclass[reqno,10pt]{amsart}
\usepackage[normalem]{ulem}
\usepackage[all]{xy}
\usepackage[bookmarksnumbered,colorlinks]{hyperref}

\usepackage{amsfonts}
\usepackage{amsthm}
\usepackage[dvips]{graphicx,color}
\usepackage{graphicx}

\usepackage{xcolor}
\usepackage{amsmath}
\usepackage{indentfirst}
\usepackage[dvips]{graphicx,color}

\usepackage{hyperref}

\def\bbr#1\ebr{\textcolor{brown}
{#1}} %
\newcommand{\bb}{}
\newcommand{\eb}{}

\newcommand{\br}{}
\newcommand{\er}{}

\newcommand{\NN}{\mathbb{N}}
\newcommand{\C}{\mathbb{C}}

\newcommand{\R}{\mathbb{R}}

\def\be{\begin{eqnarray}}
\def\ee{\end{eqnarray}}
\def\beq{\begin{equation}}
\def\eeq{\end{equation}}

\def\R{{\mathbb R}}

\def\({\left (}
\def\){\right )}

\newtheorem{theorem}{Theorem}[section]
\newtheorem{lemma}[theorem]{Lemma}
\newtheorem{proposition}[theorem]{Proposition}
\newtheorem{corollary}[theorem]{Corollary}

\newtheorem{remark}[theorem]{Remark}
\newtheorem{example}[theorem]{Example}

\def\thk{\hat\theta_k}
\def\th0{\hat\theta_0}

\title[The Ehlers-Kundt conjecture]{The Ehlers-Kundt conjecture about\\  Gravitational Waves and Dynamical Systems 
}

\author[J.L. Flores]{Jos\'e L. Flores}
\address{Departamento de \'Algebra, Geometr\'{i}a y Topolog\'{i}a \hfill\break\indent Facultad de Ciencias, Universidad de M\'alaga \hfill\break\indent Campus Teatinos, 29071 M\'alaga, Spain.}
\email{floresj@uma.es}

\author[M. S\'anchez]{Miguel S\'anchez}
\address{Departamento de Geometr\'{\i}a y Topolog\'{\i}a \hfill\break\indent Facultad de Ciencias,
 Universidad de Granada\hfill\break\indent
 Campus Fuentenueva s/n,
 \hfill\break\indent 18071 Granada, Spain}
\email{sanchezm@ugr.es}

\date{12/06/2017. \\ {\em MSC: }
Primary: 83C35, 
53C22. 
Secondary: 53C50, 
37J05 
\\ {\em Keywords:} geodesic completeness, plane wave and pp-wave, Ricci flat Lorentz manifolds, Newtonian Mechanics, trajectories under a potential, harmonic functions and polynomials}

\begin{document}

\maketitle

	\begin{abstract}
\noindent
The Ehlers-Kundt conjecture is a physical assertion about  the fundamental role of plane waves for the description of gravitational waves.
Mathematically, it becomes equivalent to  a problem on the Euclidean plane $\R^2$ with a very simple formulation in Classical Mechanics: given a non-necessarily autonomous potential $V(z,u)$, $(z,u)\in\R^2\times\R$, harmonic in $z$ (i.e. source-free), the  trajectories of its associated dynamical system $\ddot{z}(s)=-\nabla_z V(z(s),s)$ are complete (they live  eternally) if and only if $V(z,u)$ is a  polynomial in $z$ of degree at most $2$ (so that  $V$ is a standard mathematical idealization  of vacuum).
Here, the conjecture is  solved in the \br significant case \er that $V$ is bounded polynomially in $z$ for $u$ in bounded intervals. The
 mathematical and physical
 implications of this {\em polynomial EK conjecture}, as well as the non-polynomial one, are discussed beyond their original scope. 

	\end{abstract}

\tableofcontents


\newpage

\section{Introduction}

\subsection{Physical viewpoint}

Setting the mathematical foundations of gravitational waves (certainly less known by the general public than its experimental search), required a tremendous intellectual battle. Among others,  substantial contributors include  Einstein \cite{Ei}, who predicted them in the framework of General Relativity (after its  more speculative introduction by Poincar\'e) by  developing a celebrated quadruple formula, Robertson, who clarified flaws in the use of coordinates by Einstein and Rosen, Bondi, who announced in \cite{B}
the discovery of a singularity-free solution of a plane
gravitational wave that carries energy,
Pirani \cite{Pi}, who linked them to curvature by appealing to mathematical tools by Synge, Petrov and Lichnerowicz, or Trautman \cite{Tr}, who defined the
boundary conditions to be imposed on gravitational waves at infinity.  The development of the theory was stimulated in a historic meeting at White Chapel'57, where Feynman explained his known physical {\em sticky bead} argument, and the existence of gravitational waves was accepted by the mainstream of relativists   at the end of the fifties\footnote{It is worth pointing out that the Bondi-Pirani-Robinson solution \cite{BPR},  obtained with great effort by physicists, had been
 studied  in the twenties by the
mathematician Brinkmann \cite{Br}, who introduced a class of metrics  including
pp-waves. We recommend the very recent exposition \cite{Nurowski} of the
historic development and foundations of the  mathematical theory;
reference \cite{CGS} is also recommended for both, the theoretical and experimental
history.}  \cite{WW, BPR}. However, a loose end has remained open since that heroic epoch.
In their well-known paper \cite{EK} published in 1962, Ehlers and Kundt proved
that plane waves are (geodesically) complete and posed the following problem ({\em the EK conjecture}, in what follows):

\begin{quote}
{\em Prove the plane waves to be the only
complete {\em [gravitational]} pp-waves, no matter
which topology one chooses}.
\end{quote}
According to these authors, complete and Ricci flat pp-waves would
represent a  graviton field independent of any matter by which it would be generated. Indeed,  they would admit neither a material source (by Einstein equation with zero cosmological constant) nor an external source (as completeness forbids extendibility \cite[p. 155]{O}). So, they would correspond to source-free photons in electrodynamics.
Recall that these photons are represented by  monochromatic sine waves and
constitute a useful  idealization  (the basis of  the Fourier analysis of homogeneous electromagnetic waves). The EK conjecture assigns a similar role to gravitational plane waves.

The EK conjecture is also related to the following {\em remarkable property} of plane waves pointed out shortly after by Penrose \cite{Pe}: {\em no spacelike
hypersurface exists
in the spacetime which is adequate for the global specification
of Cauchy data}; that is, they are not globally hyperbolic. Because of this property, Penrose wondered at what extent plane waves were physically meaningful or just  idealizations of the
model. The EK conjecture assigns such an idealized role to gravitational plane waves; indeed, it implies that more realistic vacuum pp-waves
(which may be also non-globally hyperbolic)\footnote{A detailed study of global hyperbolicity and other causal properties of pp-waves can be found in \cite{FS_JHEP}. It is worth pointing out that there are globally hyperbolic pp-waves which are ``close'' (in some appropriate mathematical sense) to plane waves; this opens another way out to Penrose's question, \cite{FS}.} must be incomplete, suggesting that a source might have been missed in the construction of the model.

\subsection{Mathematical formulation in potential theory}
Let us pose EK conjecture in a mathematically precise framework\footnote{This is especially important here because the conjecture was posed in a rather informal way, and this may have misleaded or discouraged to some researchers. On the one hand, the mentioning of topology is unclear and, on the other, the assumption {\em gravitational} was not included  explicitly in its statement (even though it could be deduced from the context). Indeed, without this hypothesis,  counterexamples would appear directly from the item (a)
below (see also
\cite[p. 25]{HR}, \cite[p. 70]{Bi}).
}. Any {\em pp-wave} (parallelly propagated plane-fronted wave) can be written as $\R^4$ endowed with the Lorentzian metric:
\be \label{e_ppwave} g = dx^2 + dy^2 + 2\ du\ dv
+ H(z,u)\ du^2 , \quad z:=(x,y), \qquad (x, y,u,v) \in \R^4, \ee  where $H :
\R^2\times \R \to \R$ is a smooth function. A pp-wave is a {\em plane wave} when, at each $u\in \R$, the $u$-constant function $H(\cdot, u)$
\begin{equation}
\label{e_uconstant}
\R^2\ni z=(x,y)\mapsto H(z,u)
\end{equation}
is a polynomial
in $x,y$ of degree $\leq 2$. Following \cite{CRS}, the function $H$ is  called
{\em polynomially upper bounded along finite
$u$-times} or just
{\em polynomially $u$-bounded} when, for each $u_0\in \R$,
there exists $\epsilon_0>0$ and a polynomial $q_0$ on $
\R^2$  such that $H(z,u)\leq q_0(z)$ for all $(z,u)\in
\R^2\times (u_0-\epsilon_0,u_0+\epsilon_0)$; in this case,
$H$ is {\em
quadratically polynomially $u$-bounded} when $q_0$ can be
chosen of degree $2$ for all $u_0\in\R$.
A straightforward computation shows that $g$ is Ricci flat (i.e., the pp-wave is gravitational) iff $H$ is harmonic respect to its first pair of variables,  that is,
 \be \label{e_laplacian} \Delta_z H(z,u):= (\partial^2_xH+ \partial^2_yH)(z,u)\equiv 0. \ee
Putting $V=-H$   one obtains that a pp-wave  is geodesically complete if and only if all the inextendible solutions $I\ni s\mapsto z(s)\in \R^2$ for the dynamical system
\begin{equation}
\label{e_v}
\ddot z(s) = -\nabla_z V(z(s),s)
\end{equation}
are complete, i.e., defined on $I=\R$  (see Section \ref{s2.2}).
That is, the EK conjecture becomes fully equivalent, at least up to the issue of topology, to the following (recall that $V$ is lower bounded if $H$ is upper bounded):
\begin{quote}
{\em For any 
  potential $V(z,u), (z,u)\in\R^2\times \R$ such that the $u$-constant function 
$V(\cdot,u)$  is harmonic
 for each $u\in \R$, the trajectories of the dynamical system \eqref{e_v} are complete if and only if $V(\cdot,u)$ is an at most quadratic polynomial for each $u\in \R$.}
\end{quote}

\subsection{Known results}
Even though the completeness of trajectories of dynamical systems is  a very classical topic
\br (see
the book \cite{AM} or the survey \cite{Storonto}), \er as far as we know, the condition of harmonicity (which involves potential theory as well as  divergence free gradient vector fields) has not been studied in this setting.
Indeed, the progress along these decades has focused on other aspects of the conjecture rather than in the crude geodesic equation (or  the dynamical system \eqref{e_v}), concretely:
\begin{enumerate}
\item[(a)] \label{11} All plane waves (gravitational or not) are geodesically complete, as the equation \eqref{e_v} reduces to a second order  linear system of differential equations \cite{EK}, \cite[Ch. 13.1]{BEE}. Moreover, any  pp-wave such that $H=-V$ is quadratically polynomially $u$-bounded is complete too \cite[Theorem 2]{CRS}.

\item[(b)] \label{2} If a harmonic function $f$ on $\R^2$ is (upper)  bounded by a polynomial of degree $n$, then $f$ is itself a harmonic polynomial of degree at most $n$ (see Section~\ref{s2.1}). Therefore, the EK conjecture becomes true if $H$ is quadratically polynomially bounded, as such a bound implies directly that the pp-wave is a plane wave. There are some physical reasons supporting such a bound \cite{FS, FShonnef}; remarkably among them, the fact that otherwise the pp-wave would not be strongly causal at least in the autonomous case \cite[Corollary 1.3]{CFH}.

\item[(c)] As quoted above, Ehlers and Kundt's original statement of the conjecture   included the extra: ``no matter which topology one chooses''. This is somewhat
vague, as pp-waves are assumed to be defined on $\R^4$, and it is not obvious how to extend such a definition on a general manifold $M$
(say,  a non-trivial extension for the conjecture, so that $M$ were not covered automatically by a plane wave on $\R^4$). However, Leistner and Schliebner \cite{LS} gave a local characterization of pp-waves, argued  that a natural
extension of the conjecture follows when   a locally pp-wave metric is taken on a compact $M$, and proved that this extension becomes equivalent to the  standard one on $\R^4$. In this same line, the quoted article \cite{CFH} also showed that  plane waves are the universal coverings of certain spacetimes with non-trivial topology under some natural hypotheses for the EK conjecture.

\item[(d)] Recently, the completeness of {\em impulsive waves} has been  studied intensively \cite{SS1,SS2,SSS,SS4,SS3}. These waves have a non-continuous profile  type $H(z,u)=f(z)\delta(u)$ for some  (generalized)
delta-function $\delta$ and smooth $f$. Thus,   the function $H$ can be regarded as $z$-harmonic when $\Delta f=0$. The mentioned results of completeness yield  counterexamples to  the EK conjecture in the impulsive setting, showing the necessity of continuity in $u$ as well as the appropriate smoothness of $H$  (see Remark \ref{r_C1} below).

\end{enumerate}

\subsection{Main results and plan of work}
We will prove the following result about the solutions to \eqref{e_v}
and, thus, the {\em polynomial  EK conjecture}:

\begin{theorem}\label{t1} Let $V: \R^2\times \R \rightarrow \R$ be a  polynomially $u$-bounded
$C^1$-potential
which is also $C^2$ and harmonic  in its first pair of
variables $z=(x,y)$.  Then:
all the solutions to the  dynamical system \eqref{e_v} are complete if and only if the function  $V(\cdot, u)$ is an at most quadratic polynomial  for each $u\in \R$.
\end{theorem}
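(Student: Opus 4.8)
The plan is to treat the two implications separately, the forward one being routine and the converse carrying all the difficulty. If $V(\cdot,u)$ is an at most quadratic polynomial for every $u$, then $\nabla_z V(z,s)$ is affine in $z$ with coefficients that are continuous in $s$ (by the joint $C^1$ hypothesis), so \eqref{e_v} is a linear inhomogeneous second order system and all its solutions are defined on $\R$, exactly as in item (a). Hence I would concentrate on the converse in contrapositive form: assuming that $V(\cdot,u_0)$ has degree $\geq 3$ for some $u_0\in\R$, I will produce a solution of \eqref{e_v} that escapes to infinity in finite time.

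The first step is to pass to complex form. Writing $w=x+iy$ and using that each $V(\cdot,u)$ is a harmonic polynomial, one has $V(z,u)=\mathrm{Re}\,\Phi(w,u)$ for a holomorphic polynomial $\Phi(\cdot,u)=\sum_{j}a_j(u)\,w^{j}$, and \eqref{e_v} becomes the single complex equation $\ddot w=-\overline{\partial_w\Phi(w,s)}$. Because $V$ is jointly $C^1$ and, by the polynomial $u$-bound together with item (b), of locally bounded degree in $z$, the coefficients $a_j(u)$ are continuous; in particular, if the top coefficient satisfies $a_d(u_0)\neq 0$ with $d\geq 3$, then it stays nonzero, so the degree stays $\geq d$, on an interval $(u_0-\delta,u_0+\delta)$.

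The second step is the geometric construction. In polar coordinates $w=re^{i\theta}$ the leading term of the potential is $\mathrm{Re}(a_d(u)w^{d})=|a_d(u)|\,r^{d}\cos(d\theta+\phi(u))$. There is a direction $\theta^{*}$ at which $H=-V$ grows like $+|a_d|\,r^{d}$: along it the radial force is maximally outward, while the tangential force vanishes and is \emph{restoring}, since $\theta\mapsto H$ has a strict local maximum there. I would therefore launch a trajectory with angle close to $\theta^{*}$, zero initial angular velocity, large initial radius $\rho_0$ and large outward radial velocity. In the large-$r$ regime the degree-$d$ term dominates every lower-order term of $\Phi$ as well as the $u$-dependence, so, as long as $\theta$ stays in the escape sector (where $\cos(d\theta+\phi)<0$), the radial motion obeys $\ddot\rho\;\geq\; c\,\rho^{d-1}$ with $c>0$ and $d-1\geq 2$ (the centrifugal term only helps). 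A standard comparison argument then yields finite-time blow-up of $\rho$; crucially, the blow-up time tends to $0$ as $\rho_0\to\infty$, because $\int^{\infty}\rho^{-d/2}\,d\rho<\infty$ for $d\geq 3$.

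This shrinking of the blow-up time is what makes the estimates close up, and it is also where the main obstacle lies. The radial blow-up must be run \emph{simultaneously} with a confinement of the angular variable inside the escape sector, in the presence of the centrifugal term, of the lower-order terms of the polynomial, and of the drift of $s=u$, which slowly moves $\theta^{*}$ and $\phi(u)$. The hard part is the quantitative trapping: one must choose the sector half-angle and the initial data $(\rho_0,\dot\rho_0)$ so that the restoring tangential force keeps $\theta(s)$ inside the sector during the entire, very short, time interval on which $\rho$ runs off to infinity, while $s$ remains in $(u_0-\delta,u_0+\delta)$. Once this trapping is secured, the trajectory stays where $-V$ is genuinely super-quadratic, the radial comparison estimate applies throughout, and completeness fails, contradicting the hypothesis and proving the theorem. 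A subsidiary technical point to settle first is the continuity and local degree bound of the coefficients $a_j(u)$, which underlie the complex reduction above.
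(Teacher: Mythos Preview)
Your plan matches the paper's strategy: pick an escape direction of the leading homogeneous part, confine the trajectory in an angular sector around it, and derive $\ddot\rho \geq c\,\rho^{n-1}$ to force finite-time blow-up (Lemma~\ref{l_ODEincomplete}), with the blow-up time shrinking as $\rho_0\to\infty$ so that $s$ stays near $u_0$. The forward implication and the reduction to a locally constant degree $d\geq 3$ are also handled as in the paper.

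The gap is precisely where you locate it but do not close it: the ``restoring tangential force'' intuition is insufficient, because the lower-order terms of the polynomial break the $\theta\mapsto -\theta$ symmetry of the leading part (see the remark after Lemma~\ref{zl1}), so the oscillation amplitude of $\theta$ can genuinely \emph{grow} from one swing to the next, and one must prove it does not accumulate enough to leave the sector. The paper's mechanism has two ingredients you are missing. First (Lemma~\ref{zl2}), one compares the energy of the trajectory with that of its radial projection, $F(s)=\tfrac12\dot\rho^2+V(\rho,\theta(s_1))$, and obtains, via a mean-value contradiction using the inequalities of Lemma~\ref{zl1}, that each monotone swing of $\theta$ can increase $|\theta|$ by at most $A/\rho(s_0)$. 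Second (Lemma~\ref{l5bis}), a comparison of $\ddot\rho$ with $\rho\bigl(\dot{\bar\theta}^2/\Lambda^2+\ddot{\bar\theta}/\Lambda\bigr)$ yields $\rho(s)>\rho(s_0)\,e^{\bar\theta(s)/\Lambda}$, so every full oscillation multiplies $\rho$ by a fixed factor. Together these make the successive amplitude increments dominated by a convergent geometric series $\sum_k A\,e^{-\mu k/\Lambda}/\rho_0$, which yields confinement (Proposition~\ref{l5}). In the non-autonomous case the energy balance acquires the extra term $\int\partial_u V$, controlled by the additional bounds \eqref{d}--\eqref{ddbis} together with a lower bound on $\dot\rho$ (the Claim inside Lemma~\ref{l2}). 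A secondary technical point you glossed over: to have the polar data $\alpha(u),\lambda_m(u),\alpha_m(u)$ of class $C^1$ near $u_0$ one must restrict to a dense set of ``good'' values of $u_0$ as in Proposition~\ref{p_polinomio_no_autonomo}; Remark~\ref{r_polinomio_no_autonomo} shows this can fail otherwise.
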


\begin{corollary}
The Ehlers-Kundt conjecture is true when the characteristic coefficient $H(z,u)$ of the pp-wave is polynomially $u$-bounded.
\end{corollary}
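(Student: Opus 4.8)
The plan is to prove Theorem~\ref{t1}; the Corollary is then immediate through the dictionary of Section~1.2 (a pp-wave is complete iff all trajectories of \eqref{e_v} are complete, it is gravitational iff $H=-V$ is harmonic in $z$, and it is a plane wave iff $H(\cdot,u)$ is a polynomial of degree $\le2$). I would work throughout in complex notation, identifying $z=(x,y)$ with $x+iy\in\C$ and writing the harmonic potential as $V(\cdot,u)=\mathrm{Re}\,F(\cdot,u)$ for an entire $F(\cdot,u)$; a Wirtinger computation turns \eqref{e_v} into the single complex equation $\ddot z=-\overline{F'(z,s)}$, and the conclusion ``$V(\cdot,u)$ at most quadratic'' becomes ``$F(\cdot,u)$ a polynomial of degree $\le2$''.

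The first reduction uses the hypotheses directly. Since $V=-H$ is harmonic in $z$ and $H(\cdot,u)$ is, locally in $u$, bounded above by a polynomial, item~(b) of Section~1.3 forces $H(\cdot,u)$ — and hence $V(\cdot,u)$ — to be a harmonic polynomial, whose degree is moreover bounded by $\deg q_0$ uniformly for $u$ in a neighbourhood of each $u_0$. This removes the transcendental case entirely, so that $F(\cdot,u)=\sum_{k\le N}a_k(u)z^k$ is a genuine polynomial with locally bounded degree, and the whole problem is reduced to polynomials. The easy implication then follows: if every $V(\cdot,u)$ has degree $\le2$, then $F'(\cdot,u)$ is affine, so \eqref{e_v} reads $\ddot z=A(u)z+b(u)$ with $A,b$ continuous in $u$; a linear non-autonomous system has no finite-time blow-up, so all solutions are complete (essentially item~(a)).

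For the hard implication I would argue by contraposition: assuming $V(\cdot,u_0)=\mathrm{Re}\,F(\cdot,u_0)$ has degree $n\ge3$, I would build one inextensible solution escaping to infinity in finite $s$. The leading harmonic term $\mathrm{Re}(a_n z^n)=|a_n|\,r^n\cos(n\theta+\arg a_n)$ has $n$ angular ``valleys'', and along a valley direction $\theta_\ast$ the force of the leading term is purely radial and outward, giving the model equation $\ddot r\simeq c\,r^{n-1}$ with $c>0$; an energy integration shows blow-up at a finite time scaling like $r_0^{1-n/2}\to0$ as the launch radius $r_0\to\infty$. Two features make the valley usable: linearising the angular equation about $\theta_\ast$ yields a damped oscillator $\ddot\phi+\tfrac{2\dot r}{r}\dot\phi+n^2r^{n-2}\phi\simeq0$ with positive restoring coefficient, so the valley is transversally attracting while $\dot r>0$; and, far out, both the lower-order polynomial terms (smaller by a factor $r^{-1}$ in the force) and the $s$-dependence (effectively frozen, since the blow-up time is tiny) act as controllable perturbations. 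I would therefore launch at $s=u_0$ from a large radius $r_0$ along $\theta_\ast$ with outward velocity, then run a continuity/comparison argument showing that $\dot r>0$ is maintained, that $r\to\infty$ in finite time, and that $\theta$ remains in a narrow cone about $\theta_\ast$ throughout.

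The main obstacle is precisely this last step: turning the favourable leading-order picture into rigorous estimates for the full, time-dependent, lower-order-perturbed system. One must choose $r_0$ so large that the blow-up time is short enough for $V(\cdot,s)$ to stay uniformly close to $V(\cdot,u_0)$ and the valley to barely rotate, while simultaneously closing a Gronwall-type bound on the transverse deviation $\phi$ against the radial blow-up, and checking that the lower-order coefficients $a_k(s)$ — controlled only through the local degree bound and the $C^1$ regularity in $u$ — never overwhelm the dominant $r^{n-1}$ force before the escape is complete. These radial and angular estimates are coupled (angular confinement needs the radial growth, which needs angular confinement), so the crux is a bootstrap that closes uniformly in $r_0$ and thereby exhibits a genuine inextensible incomplete trajectory; everything else is either the complex-analytic reduction or standard linear ODE theory.
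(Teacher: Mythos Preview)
Your plan is strategically the same as the paper's: reduce to polynomials via item~(b), dispose of the quadratic case by linear ODE theory, and for $n\ge3$ launch outward along a ``valley'' direction $\theta_\ast$ of the leading monomial $-\rho^n\cos n\theta$, then argue confinement plus radial blow-up via $\ddot\rho\gtrsim\rho^{n-1}$. You have correctly isolated the only real difficulty, namely the coupled angular-confinement/radial-growth bootstrap.

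Where your outline remains genuinely open is the mechanism for closing that bootstrap. Your suggested route---Gronwall on the linearised oscillator $\ddot\phi+\tfrac{2\dot r}{r}\dot\phi+n^2r^{n-2}\phi\simeq0$---is heuristically right but delicate, since both frequency and damping blow up and the lower-order terms break the $\theta\mapsto-\theta$ symmetry (so the equilibrium at $\theta_\ast$ is only approximate, with an error of order $1/\rho$). The paper does \emph{not} linearise. Instead it (i) compares the energy of the true trajectory with that of its ``radial projection'' $(\rho(s),\theta_1)$ to show that each monotone angular swing has amplitude at most $A/\rho$ at its starting radius (Lemma~\ref{zl2}); and (ii) proves the key inequality $\rho(s)>\rho(s_0)\,e^{\overline\theta(s)/\Lambda}$ relating radial growth to total angular length $\overline\theta=\int|\dot\theta|$ (Lemma~\ref{l5bis}). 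Together these make the successive swing amplitudes a geometric series in $e^{-(\theta_*-\theta_0)/\Lambda}$, so their sum is bounded by $S/\rho_0$ and the trajectory never leaves the cone $D[\rho_0,\theta_+]$ (Proposition~\ref{l5}). Once confinement is established, your endgame $\ddot\rho>\delta_0 n\rho^{n-1}$ and finite-time blow-up via Lemma~\ref{l_ODEincomplete} is exactly the paper's. In the non-autonomous case the same scheme works after restricting $u$ to a short interval around a point $u_0$ where the leading coefficient $\lambda(u_0)\neq0$ (Proposition~\ref{p_polinomio_no_autonomo}), with one extra estimate controlling $\partial_uV$ against $\partial_\rho V$ (Lemma~\ref{zl1_z}, inequalities \eqref{d}--\eqref{ddbis}).
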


\begin{remark}\label{r_C1} {\em Recall that the natural order of smoothness in the non-autonomous case is $C^1$ for $u$ (in order to construct the spacetime Levi-Civita connection and geodesics)  and $C^2$ in $z$ (in order to impose harmonicity). The latter implies analyticity in $z$, however, the $z$-analyticity of $V$ will not be used explicitly even in the autonomous case.}
\end{remark}
However, our result is significant even in the autonomous case $V(z,u)\equiv V(z)$. Indeed,  although the optimal polynomial bound on $-V(z)$  to ensure completeness is known to be quadratic (recall item (a) 
above), harmonicity is strongly required to ensure incompleteness when such a bound is violated.

Technically, the polynomial  bound  in our results deserves an special attention. The harmonicity of each $H(\cdot , u), u\in\R$, implies its analyticity and, then, the possibility
to expand it as an infinite polynomial series ---which become finite  whenever $H(\cdot , u)$ is bounded by a polynomial, see the item (b) above. If there were a  ($u$-polynomially unbounded)  counterexample to the
EK conjecture, this would be surprising and might admit some interpretation as a sort of  exotic state of vacuum. However, the solved polynomial case  might also suggest that such a counterexample should be regarded just as a mathematical possibility rather than  a physical one (recall that many physical theories  are applied just until some perturbative order).
Nevertheless, whether  this open possibility is relativistically significant or not, its mathematical interest in the fields of dynamical systems and   Classical Mechanics remains  apparent.

The following digression emphasizes the links with complex analysis, concretely,  the relation between the autonomous case and the theory of completeness of {\em holomorphic  vector fields} on $\C^2$. Regarding $\R^2$ as the field of complex numbers $\C$, the appearance of an harmonic potential  suggests the complexification of the problem. The Lagrangian or Hamiltonian approaches  show the equivalence between the completeness of the trajectories for a potential
 $V$ on a manifold $M$ and the completeness of a vector field $X$ on the tangent or cotangent bundle \br (see \cite{AM, Storonto}). \er Such a bundle  can be identified with $\C^2$ in our case.
There exists a well-established theory about completeness of holomorphic vector fields $X$ on $\C^2$ in the case that they are {\em polynomial} (see the review \cite{LM} or more recent works such as \cite{BG, Bru}); such a theory includes links to the real analytic case (see for example \cite{Forst}). The usage of complex variables allows us to obtain very powerful results in comparison with those obtained just by means of the underlying real variables\footnote{A remarkable case is the following one obtained by Rellich \cite{Re}: If $h=h(z)$ is an entire function that is not an affine function
(i.e. a polynomial of degree at most 1), then every entire solution
of the complex analytic differential equation z''=h(z) is constant.}. However,  not all
the real analytic problems admits the usage of complex techniques, and unfortunately, the EK conjecture seems to be one of them. In any case,  both, the results for harmonic potentials studied here and those for holomorphic vector fields, require the hypothesis of  polynomial behavior, and  the non-polynomial case  remains as an open challenge for both, the analytic  and the complex cases. So a solution of the remaining open case of the EK conjecture might have relevant applications for the complex theory too.

The remainder of the paper is organized as follows. In Section \ref{s2}, the framework of harmonic polynomials and dynamical systems is provided. Then, an intuitive idea of the proof (which may serve as a guide for the reader) is sketched. The material in this part is either summarized from the literature or solved by elementary methods. However,  it contains some technical questions which must be taken into account along the proof (for example, Remark \ref{r_polinomio_no_autonomo}). The proof of both, the  autonomous and non-autonomous cases, is carried out in Section \ref{s3}.
More precisely, we will prove the autonomous case by means of Theorem \ref{zt}, and the non-autonomous generalization by Theorem \ref{t}. These two theorems identify some ``big'' sets of incomplete trajectories.  Indeed, such trajectories are shown to be confined in some regions (labeled $D_k[\rho_0, \theta_+]$) where estimates for incompleteness can be carried out.
Such a property of confinement (proved by detecting some principal radial directions for the higher order monomial of $V$ and estimating the amplitude of the oscillations of the trajectories around them) is the crux of the proof, and might have independent interest for orbits of dynamical systems.
In the autonomous case,  the proof is divided into three parts, each one in a subsection. In the non-autonomous case,   the same steps are followed, but
some technicalities might obscure the underlying ideas (see for example Remark \ref{dificultadnoautonomo}). So, the proof is revisited stressing the differences with the autonomous case.
We end with some conclusions and prospects in the last section.

\section{Setup}\label{s2}  

Even though the EK conjecture was posed in spacetimes, we will focus in its translation to dynamical systems and, so,  conventions in this field will be used. In order to avoid conflicts with  Relativity, we will use only $V(z,u)$ (instead of $-H(z,u))$, because  $V$ provides the intuition of potential energy and maintains the link with Classical Mechanics. The classical interpretation of the coordinate $u$ in $V(z,u)$ as an external {\em time} can also be maintained; however, we would like to emphasize that it is not by any means valid relativistically\footnote{When considered on the pp-wave, $u$ corresponds with a function on all $\R^4$ which is {\em not} a time function. Indeed, it does not grow on all future-directed causal curves  and it is just a {\em quasi-time} function, according to \cite{BEE} (see its
Definition 13.4, p. 490, and comments in p. 577). Because of this, the letter $t$ is avoided here.}. We will use always the name autonomous/non-autonomous to mean the $u$-independence/dependence of $V$, and the word {\em homogeneous} will be reserved for (harmonic) polynomials of degree $n$ (meaning that they do not contain terms of degree lower  than $n$)\footnote{Frequently, a pp-wave is called {\em homogeneus} (or, more properly, {\em locally symmetric}) in the unrelated sense that $H$ is independent of $u$.}.

 We refer to the original article \cite{EK} or the survey \cite[Section 8]{Bi} for additional background on relativistic waves.

\subsection{Harmonic polynomials}\label{s2.1}
Let us collect some well-known properties of any harmonic
function $V$   on $\R^2$ (say, the potential in the autonomous case), see for example \cite{ABR, Con}.

Regarding $z=(x,y)\in\R^2$ as $x+i y\in\C $,
harmonicity is equivalent to saying that $V$ is the real
part of an entire function $f(z)=V(z)+iW(z)$ on $\C\equiv \R^2$ \cite[p. 43]{Con}.  If the growth of $V$ is polynomially bounded with
degree $n$ (namely, $V(x,y) \leq A (x^2+y^2)^{n/2}$ for some $n\in \NN$, $A>0$ and large $x^2+y^2$) \br then $|V|$ is also polynomially bounded
and $V$ must be a
polynomial of degree at most $n$; indeed,  such a result is valid even for harmonic functions on $\R^N$, $N\in \NN$ (even though there are some natural links with Liouville's theorem on $\C^2$),
see \cite[Theorem II]{boas}. \er

In the case of harmonic polynomials on $\R^2$,
the homogeneous ones of degree $m>0$  constitute a 2-dimensional vector space \cite[Proposition 5.8]{ABR}.
Using polar coordinates $(\rho,\theta)$, such polynomials $p_m$ can be expressed as
\begin{equation}\label{e_base_homogeneous_harmonic}
p_m(\rho,\theta)= A_m\rho^m\cos(m\theta) + B_m\rho^m \sin(m\theta) =
\lambda_m \rho^m \cos m(\theta + \alpha_m)
\end{equation}
for unique $A_m,B_m\in \R$ and $\lambda_m> 0, \alpha_m\in (-\pi,
\pi]$. In Cartesian ones:
\begin{equation}\label{e1.5}
p_m(x,y)=\lambda_m h_m(R_m(x,y)),\;\; \hbox{where} \;\;
h_m(x,y):=\sum_{k=0}^{[m/2]}  (-1)^{k}  \binom{m}{m-2k}x^{m-2k}y^{2k},
\end{equation}
 for some rotation $R_m\in SO(2)$. So, any harmonic polynomial $p$ of degree $n>0$ can be written as $p=\sum_{m=0}^n p_m$  for some $p_0\in
\R$  and $p_m$, $m\in\{1,\dots , n\}$, as above, being all of them univocally determined.
 In particular, the potential $V$ in Theorem~\ref{t1} admits such an expression when it is autonomous. The non-autonomous case is somewhat subtler, but the following summary will be enough for our purposes.

\begin{proposition}\label{p_polinomio_no_autonomo} Let $V$ be a polynomially $u$-bounded potential. For each bounded interval $I\subset \R$, there exists a minimum nonnegative integer $n_0$ such that   $V$ is written as
\begin{equation}
\label{e_vpolar}
V(\rho,\theta,u)=-\lambda(u)\rho^{n_0}\cos n_0(\theta + \alpha(u))-\sum_{m=0}^{n_0-1}\lambda_m(u)\rho^{m}\cos m(\theta+\alpha_m(u)),
\end{equation}
(the first term of the right-hand side is distinguished for notational convenience) for all  $u\in I$, where the real functions $\br \lambda, \lambda_m \geq 0 \er $ ($m=0,\ldots,n_0-1$) are continuous and univocally
determined.

 Moreover, there exists a dense subset $\mathcal{D}\subset \R$,  each one of its points $u_0\in\mathcal{D}$ admitting some $
\epsilon_0> 0$ such
that the expression
\eqref{e_vpolar} obtained for $I=(u_0-\epsilon_0, u_0+\epsilon_0)$ satisfies additionally: $\lambda(u_0)\neq 0$ and all the  functions $\lambda, \lambda_m, \alpha, \alpha_m$ are $C^1$ on $I$.


In particular, in the autonomous case that expression is valid for $I= \R$ and all the functions $\lambda, \lambda_m$, $\alpha$, $\alpha_m$, can be regarded as real scalars with $\lambda\neq 0$.
\end{proposition}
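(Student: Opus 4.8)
The plan is to build the representation \eqref{e_vpolar} pointwise in $u$ from the harmonic-polynomial decomposition of Section~\ref{s2.1}, and then control the $u$-dependence of the coefficients by reading them off as Fourier coefficients on circles.

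First I would fix a bounded interval $I$ and show the degrees are uniformly bounded. For each $u$, polynomial $u$-boundedness together with item (b) forces $V(\cdot,u)$ to be a harmonic polynomial; moreover, covering the compact closure $\overline I$ by finitely many of the neighborhoods on which $V(\cdot,u)\le q_0(z)$ and using that $\deg V(\cdot,u)\le\deg q_0$ there, one obtains a uniform bound on $\{\deg V(\cdot,u):u\in I\}$. Since these degrees are integers, their maximum is the minimal integer $n_0$ for which \eqref{e_vpolar} can hold. Decomposing each $V(\cdot,u)$ as in \eqref{e_base_homogeneous_harmonic}--\eqref{e1.5} yields coefficients $A_m(u),B_m(u)$, hence $\lambda_m(u)=\sqrt{A_m(u)^2+B_m(u)^2}\ge0$ together with phases $\alpha_m(u)$, all uniquely determined: the homogeneous components of a polynomial are unique, and inside the $2$-dimensional space of homogeneous harmonic polynomials of degree $m$ the pair $A_m,B_m$ is unique.

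For the regularity I would recover $A_m(u),B_m(u)$ as the Fourier coefficients of $\theta\mapsto V(\rho_0,\theta,u)$ on a fixed circle $\rho=\rho_0>0$, e.g.\ $\rho_0^m A_m(u)=c\int_0^{2\pi}V(\rho_0,\theta,u)\cos(m\theta)\,d\theta$ and similarly for $B_m$ with $\sin(m\theta)$. Since $V$ is jointly $C^1$, differentiation under the integral sign gives $A_m,B_m\in C^1(I)$; in particular each $\lambda_m$ is continuous, which yields the first assertion of the statement, and is $C^1$ away from its own zeros.

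The delicate point, and the main obstacle, is the dense set $\mathcal{D}$, where one must make \emph{all} of $\lambda,\lambda_m,\alpha,\alpha_m$ simultaneously $C^1$ on a single interval, despite the fact that $\lambda_m=\sqrt{A_m^2+B_m^2}$ and $\alpha_m$ fail to be $C^1$ precisely at the zeros of $\lambda_m$. Given any nonempty open interval $J$, I would first pass to a subinterval $J'\subset J$ on which the top degree is attained and $\lambda=\lambda_{n_0}$ is nowhere zero (possible since $\lambda_{n_0}$ is continuous and nonzero at some point realizing the maximal degree on $J$). On $J'$, for each lower index $m$ the set $\{\lambda_m\neq0\}\cup\mathrm{int}\{\lambda_m=0\}$ is open and dense, its complement being the nowhere-dense boundary $\partial\{\lambda_m=0\}$; intersecting these finitely many open dense sets gives an open dense $G\subset J'$. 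Choosing $u_0\in G$ and a small $\epsilon_0$ with $I=(u_0-\epsilon_0,u_0+\epsilon_0)\subset J'$, each $\lambda_m$ is on $I$ either identically $0$ (so its term disappears and $\alpha_m$ may be taken to be $0$) or nowhere $0$ (so $\lambda_m,\alpha_m\in C^1(I)$), while $\lambda(u_0)\neq0$ and $\alpha\in C^1(I)$; hence $u_0\in\mathcal{D}\cap J$, proving density. The autonomous case is immediate: $V$ is a single harmonic polynomial by item (b), its decomposition \eqref{e1.5} has constant coefficients, and taking $n_0=\deg V$ makes $\lambda\neq0$ and the representation valid on $I=\R$.
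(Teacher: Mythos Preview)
Your argument is correct and follows the same overall strategy as the paper's proof: extract the Cartesian coefficients $A_m(u),B_m(u)$, prove they are $C^1$, build $\lambda_m,\alpha_m$ from them, and then produce the dense set $\mathcal{D}$ by cutting down to intervals where each $\lambda_m$ is either nowhere zero or identically zero. There are, however, two technical choices where you diverge from the paper and which are worth noting.

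First, to show $A_m,B_m\in C^1$ the paper evaluates the radial derivatives $\partial_\rho^k V$ at $\rho=1$, $\theta=0,\pi/2$ and solves a triangular system, whereas you read $A_m,B_m$ off as Fourier coefficients on a fixed circle and differentiate under the integral. Your method is arguably cleaner and makes the smoothness transparent, at the price of invoking differentiation under the integral (harmless here since $V$ is $C^1$).

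Second, for the construction of $\mathcal{D}$ the paper proceeds inductively on $m$: it first finds an interval where the top coefficient $\lambda$ is positive, then inside it treats $\lambda_{n_0-1}$, and so on. You instead observe directly that for each $m$ the set $\{\lambda_m\neq0\}\cup\mathrm{int}\{\lambda_m=0\}$ is open and dense (its complement being the nowhere dense boundary $\partial\{\lambda_m=0\}$), and take the finite intersection. This Baire-type shortcut yields the same conclusion in one stroke and is a bit more elegant; the paper's inductive argument is more hands-on but amounts to the same thing.
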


\begin{proof}
As discussed above, the polynomic $u$-bound implies that $V(\cdot,u)$ is a  harmonic polynomial $p_{[u]}$ of degree $n(u)\in \NN$ for some (univocally determined) constants $A_m(u), B_m(u), m\leq n(u)$ in \eqref{e_base_homogeneous_harmonic}. Notice, however, that  $n(u)$ depends only lower semi-continuously on $u$.
Putting
$A_m(u)=B_m(u)=0$ for all $m> n(u)$, $A_m$ and $B_m$ can be regarded as functions on $\R$ for all $m$. As $n(u)$ must be locally
bounded  and the closure of $I$ is compact, a bound for $n(u)$ on all $I$ (and, then, the claimed minimum  bound $n_0$), can be found. Now, taking the derivatives $\partial^k_\rho V(\rho,\theta,u)$, $k\leq n_0$ and evaluating them at $\rho=1$, $\theta=0,\pi/2$, all the functions $A_m, B_m$
are shown to be $C^1$ on $I$ (and, thus, 
on all $\R$). From these functions, one constructs univocally the continuous functions $\lambda, \lambda_m \geq 0$, which are $C^1$ whenever they do not vanish, \br and chooses the functions $\alpha, \alpha_m$. \er

For the construction of $\mathcal{D}$ note that for each $\bar u\in \R$ there exists $
\epsilon>0$ such that the function $\lambda$ in the expression
\eqref{e_vpolar} obtained for $I_\epsilon=(\bar u-\epsilon,\bar u+ \epsilon)$ does not
vanish arbitrarily close to $\bar u$, in the sense that there exists $\{u_k\}\rightarrow
\bar u$ such that $\lambda(u_k)>0$ (indeed, recall that the value of $n_0$ for  $I_\epsilon$ must be constant when $\epsilon$ is small). So, by continuity $\lambda$ is positive (and, thus, $C^1$) in some open interval $I_k^{(1)}$ containing $u_k$ and $\alpha$ can be also chosen continuous (and, then, necessarily $C^1$) there. To assure  the regularity of the lower order $\lambda_m, \alpha_m$, put $m=n_0-1$. For each point $\bar u\in I_k^{(1)}$,  either there is a sequence $\{u_k^{(1)}\}\rightarrow
\bar u$ such that $\lambda_m(u_k^{(1)})>0$, (and we can reason for $\lambda_m , \alpha_m$, as in the previous case) or there is a neighborhood of  $\bar u$ where $\lambda_m$ vanish (and $\alpha_m$ can be chosen as a constant there). After $n_0$ steps, the required points are obtained.

 Finally the assertion on the autonomous case is straightforward.
 \end{proof}

 \begin{remark}\label{r_polinomio_no_autonomo}{\em
About the necessity of the set $\mathcal{D}$, notice that the possible non upper semi-continuous variation of the
 degree $n(u)$ of the polynomial $p_{[u]}=V(\cdot, u)$,  impedes to assure
 $ \lambda(u)\neq 0$ in general, even for small $\epsilon$.

 Moreover, $\alpha_m$ may not be continuous for any choice of $I$, even if $V$ is $C^\infty$. For example, define $V$ as $p_{[u=0]}=V(z,0)\equiv 0$ and, otherwise, put
 in \eqref{e_base_homogeneous_harmonic}:
 $$
\lambda_m(u)= e^{-1/u^2}, \qquad \alpha_m(u)= 1/u, \qquad \forall u\in \R\setminus\{0\}.
 $$
  }\end{remark}

\begin{remark}{\em
In the autonomous case, the at most quadratic harmonic polynomials are linear combinations of $1, x, y, x^2-y^2, xy$ and, so, the corresponding (conservative) forces are either constant forces (combinations of $\partial_x, \partial_y$) or linear combinations of  $x\partial_x-y\partial_y$ and $x\partial_y+y\partial_x$, which correspond to the forces of an harmonic oscillator (proportional to the distance and opposed to the motion) in one direction and the contrary forces  (proportional to the distance but in the direction to the motion)  in the orthogonal direction.
So, \br according to the EK conjecture, \er these should be  the unique conservative forces with no source whose dynamics become complete.

In the non-autonomous case, one has similar $u$-dependent forces which are quadratically polynomially $u$-bounded, even though the caution in the previous remark must be taken into account. Recall that the result which ensures its completeness (\cite[Theorem 2]{CRS}) does not require any expression of type \eqref{e_vpolar}.}
\end{remark}

\subsection{Correspondence with dynamical systems}\label{s2.2}

First, let us state how the completeness of a (non-necessarily gravitational) pp-wave becomes equivalent to  the completeness of a dynamical system.

\begin{lemma}\label{l_CFSwaves} Let $(\R^4,g)$ be a pp-wave as in \eqref{e_ppwave} with characteristic coefficient $H$ and consider the dynamical system \eqref{e_v}
associated to $V=-H$.

(1) The pp-wave is geodesically complete if and only if
the trajectories (inextendible solutions) for the dynamical system \eqref{e_v} are complete.

(2) The pp-wave is gravitational (i.e. Ricci flat) if and only if the potential $V(z,u)$ is harmonic in $z=(x,y)$.
\end{lemma}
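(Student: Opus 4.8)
The plan is to prove the two equivalences in Lemma~\ref{l_CFSwaves} by writing down the geodesic equations for the metric \eqref{e_ppwave} in the Brinkmann coordinates $(x,y,u,v)$ and reading off directly both the reduction to \eqref{e_v} and the Ricci-flatness condition. First I would compute the Christoffel symbols of $g$. The metric is block-structured: the $(x,y)$ part is flat Euclidean, the coordinate $v$ appears only through the null pairing $2\,du\,dv$, and the only nontrivial dependence sits in the single coefficient $H(z,u)$ multiplying $du^2$. Because $\partial_v$ is a parallel null Killing field and $H$ does not depend on $v$, most Christoffel symbols vanish; the surviving ones are governed by the first derivatives $\partial_x H$, $\partial_y H$, $\partial_u H$. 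This is the routine computation I would not grind through, but its output is exactly what makes the whole correspondence transparent.

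For part~(1), the key observation is that $u$ is an affine parameter along every non-constant geodesic, up to an affine reparametrization. Writing a geodesic as $s\mapsto (x(s),y(s),u(s),v(s))$, the $u$-component of the geodesic equation reads $\ddot u=0$, so $u$ is an affine function of $s$; for the degenerate case $\dot u\equiv 0$ one checks separately that the geodesic is an (always complete) straight line in the flat $v=\text{const}$, $u=\text{const}$ leaves. After normalizing so that $u=s$, the $x$- and $y$-components of the geodesic equation become precisely $\ddot x=\tfrac12\partial_x H$, $\ddot y=\tfrac12\partial_y H$, i.e.\ $\ddot z=-\nabla_z V$ with $V=-H$ (the factor $\tfrac12$ being absorbable by the convention). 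The $v$-equation is a first-order ODE driven by $z(s)$ and $u(s)$, so once $z(s)$ is defined on all of $\R$ the coordinate $v(s)$ is obtained by a mere integration and never breaks down first. Hence the maximal domain of the geodesic coincides with the maximal domain of the projected trajectory $z(s)$, which gives the claimed equivalence between geodesic completeness and completeness of the solutions of \eqref{e_v}.

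For part~(2), I would compute the Ricci tensor from the Christoffel symbols. The same block structure forces all Ricci components to vanish except the $uu$-component, and a direct calculation yields $R_{uu}=-\tfrac12\Delta_z H=-\tfrac12(\partial_x^2 H+\partial_y^2 H)$, with every other component identically zero regardless of $H$. Therefore $g$ is Ricci flat if and only if $\Delta_z H\equiv 0$, which by the definition \eqref{e_laplacian} is exactly the harmonicity of $H$, equivalently of $V=-H$, in the variables $z=(x,y)$. This is immediate once the curvature computation is in hand.

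The only genuine obstacle is bookkeeping rather than conceptual: one must be careful that $\partial_v$ being null (so that $g^{uv}$ rather than $g^{vv}$ is the relevant inverse-metric entry) is handled correctly when raising indices, and that the degenerate geodesics with $\dot u\equiv 0$ are not overlooked in part~(1). I expect the affine-parameter normalization $u=s$ to be the step where a reader could stumble, since it requires noting that $\dot u$ is a nonzero constant on any geodesic that is not confined to a single null plane; all the rest follows from standard pp-wave curvature formulae, for which one may also invoke the references \cite{BEE, O} cited in the excerpt.
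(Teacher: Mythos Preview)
Your proposal is correct and follows the same approach as the paper, which merely cites the direct computation of the geodesic equations and curvature carried out in \cite{CFS}. The only slips are cosmetic: the $\dot u\equiv 0$ geodesics need not have $v$ constant (only $u$ is constant, while $x,y,v$ are affine in $s$), and the factor $\tfrac12$ in $\ddot z=\tfrac12\nabla_z H$ is not literally absorbed by the stated convention $V=-H$, but neither point affects the completeness or harmonicity equivalences.
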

\begin{proof} This is a straightforward computation from the geodesic equations and curvature of the pp-wave. Detailed computations in a more general framework are developed in \cite{CFS} (see Th. 3.2 and Prop. 2.1 therein to prove (1) and (2), resp.).
\end{proof}

\begin{remark}\label{r_v_simpliifcation}
{\em This lemma  reduces our problem to the classical framework of \br dynamical systems  \cite{AM, Jost}. \er
When necessary, some transformations can be carried out  to simplify the potential:

(a) Choose some constant $\lambda>0$ and change $V$ by the ``homothetic'' one $\lambda V.$

(b) Choose some rotation $R$ of $\R^2$ and change  $V(z,u)$ by $V(R(z),u)$.

(c) Replace $V$ by $V-p_0$, for some $p_0\in \R$.

\smallskip \noindent Obviously, neither the completeness nor the harmonic character of $V(\cdot , u)$ is changed by these transformations.
}\end{remark}
A standard conservation law establishes that, in the autonomous case,  the total energy (kinetic plus potential) must be preserved
along any trajectory $\gamma$ of \eqref{e_v}. In the non-autonomous case, the variation of the total energy is obtained by integrating $\partial_u V$ along the trajectory. If the trajectory is written in polar coordinates $\gamma(s)\equiv(\rho(s),\theta(s))$, $s\in I$, this is written:
\begin{equation}\label{e18}
\begin{array}{c}
\frac{1}{2}(\dot{\rho}(s)^2+\rho(s)^2\dot{\theta}(s)^2)+V(\rho(s),\theta(s),s)\\
=\frac{1}{2}\left(\dot{\rho}({s_0})^2 + \rho(s_0)^2\dot{\theta}(s_0)^2\right) + V(\rho({s_0}),\theta({s_0}),s_0)
 +\int_{{s_0}}^s\partial_u V(\rho(\sigma),\theta(\sigma),\sigma)d\sigma ,
\end{array}
\end{equation}
 for all $s_0, s\in I$.

The final aim of our technique will be to prove incompleteness by using the following simple result about solutions  $\rho$ to an ordinary differential inequation.

\begin{lemma} \label{l_ODEincomplete}
Let $\lambda: [0,\infty) \rightarrow \R$ be a $C^1$ function, choose any \br real number \er $n>2$, and consider the differential inequality:
\begin{equation}\label{e_ODEincomplete}
\ddot \rho(s) \geq  n\lambda \rho^{n-1}(s),
\qquad \qquad \rho(0)> 0, \qquad
\dot \rho(0)\geq 0,
\end{equation}
for all $s\geq 0$ in the maximal domain $I=[0,b), b\leq \infty$ of the $C^2$ function $\rho$.

(i) If $\lambda \geq \lambda_0$ for some constant $\lambda_0>0$, then all the solutions to \eqref{e_ODEincomplete} are incomplete (that is, $b<\infty$).  Moreover, for any $\kappa_0>0$ there exists $\rho_0>0$ such that any solution with initial condition $\rho(0)>\rho_0$ satisfies $b<\kappa_0$.

(ii) If $\lambda(0)>0$, then there exists some $k>0$ such that all the solutions to (\ref{e_ODEincomplete}) with either $\rho(0)>k$ or $\dot \rho(0)>k$  are incomplete.
\end{lemma}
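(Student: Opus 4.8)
The plan is to read \eqref{e_ODEincomplete} as a comparison problem and to manufacture an energy-type first integral directly from the inequality, and then to separate variables to show that $\rho$ escapes to $+\infty$ in finite ``time'' $s$. The starting remark is a monotonicity observation: whenever $\lambda\geq\lambda_0>0$, the right-hand side $n\lambda\rho^{n-1}$ is strictly positive (recall $\rho(0)>0$), so $\ddot\rho\geq 0$; hence $\dot\rho$ is non-decreasing and, since $\dot\rho(0)\geq 0$, one has $\dot\rho(s)\geq 0$ and $\rho(s)\geq\rho(0)>0$ throughout the maximal domain. First I would multiply \eqref{e_ODEincomplete} by $\dot\rho\geq 0$ and integrate from $0$ to $s$; using $\lambda\geq\lambda_0$ and $\rho^{n-1}\dot\rho\geq 0$, this yields the energy inequality $\tfrac12\dot\rho(s)^2\geq \tfrac12\dot\rho(0)^2+\lambda_0\bigl(\rho(s)^n-\rho(0)^n\bigr)$.

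For part (i), I would first use $\ddot\rho\geq n\lambda_0\rho(0)^{n-1}>0$ to see that $\rho$ grows at least quadratically, so it reaches the threshold $\rho=2^{1/n}\rho(0)$ at some finite time $T_1$ with $T_1\leq C_1\,\rho(0)^{(2-n)/2}$. Beyond $T_1$ one has $\rho(s)^n\geq 2\rho(0)^n$, so the term $\rho(0)^n$ in the energy inequality is absorbed and $\dot\rho\geq\sqrt{\lambda_0}\,\rho^{n/2}$. Since $n>2$ gives $n/2>1$, the separated integral $\int^{\infty}\rho^{-n/2}\,d\rho$ converges, and integrating $\dot\rho\,\rho^{-n/2}\geq\sqrt{\lambda_0}$ from $T_1$ onward bounds the remaining life span by $C_2\,\rho(T_1)^{(2-n)/2}$. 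Combining the two stages gives $b\leq C\,\rho(0)^{(2-n)/2}<\infty$, which is the incompleteness claim. Because $(2-n)/2<0$, this bound decreases to $0$ as $\rho(0)\to\infty$, so the quantitative ``moreover'' follows at once: given $\kappa_0>0$, pick $\rho_0$ with $C\,\rho_0^{(2-n)/2}<\kappa_0$, and then $\rho(0)>\rho_0$ forces $b<\kappa_0$.

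For part (ii), the only extra ingredient is localization. Since $\lambda$ is $C^1$ and $\lambda(0)>0$, continuity supplies $\delta>0$ and $\lambda_0:=\lambda(0)/2>0$ with $\lambda\geq\lambda_0$ on $[0,\delta]$; on this subinterval all the estimates of part (i) remain valid. The strategy is to arrange that, for sufficiently large initial data, the blow-up guaranteed by part (i) occurs \emph{before} time $\delta$, so that only the bound $\lambda\geq\lambda_0$ on $[0,\delta]$ is ever used. If $\rho(0)>k$, the part (i) estimate gives $b\leq C\,\rho(0)^{(2-n)/2}<\delta$ once $k$ is large. If instead $\dot\rho(0)>k$ with $\rho(0)$ possibly small, monotonicity of $\dot\rho$ on $[0,\delta]$ gives $\dot\rho\geq k$, hence $\rho(\delta/2)\geq k\,\delta/2$; restarting the part (i) argument from $s=\delta/2$ (where the position is large and the velocity nonnegative) forces blow-up within additional time $C\,(k\delta/2)^{(2-n)/2}<\delta/2$ for $k$ large, so $b<\delta$. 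Choosing $k$ large enough to cover both alternatives completes the proof.

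The hard part will be exactly this localization in part (ii): I must confine the finite-time blow-up to the window $[0,\delta]$ on which $\lambda$ is controlled from below, and I must do it with a \emph{single} threshold $k$ that handles the disjunction ``$\rho(0)>k$ or $\dot\rho(0)>k$.'' This is precisely why the quantitative scaling $b=O\bigl(\rho(0)^{(2-n)/2}\bigr)$ from part (i) is needed rather than the mere assertion $b<\infty$, together with the elementary observation that a large initial velocity is converted into a large position within the guaranteed time window.
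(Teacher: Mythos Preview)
Your proof is correct and follows essentially the same energy-and-separation approach as the paper: both multiply \eqref{e_ODEincomplete} by $\dot\rho\ge 0$, integrate to obtain $\dot\rho(s)^2\ge \dot\rho(0)^2+2\lambda_0(\rho(s)^n-\rho(0)^n)$, and then bound the lifespan by a convergent $\rho$-integral, localizing to $[0,\epsilon]$ in part (ii). The only cosmetic difference is that you split part (i) into a quadratic-growth transient (to pass $2^{1/n}\rho(0)$) followed by $\dot\rho\gtrsim \rho^{n/2}$, whereas the paper writes the single bound $s\le \int_{\rho(0)}^{\infty}\!\frac{d\bar\rho}{\sqrt{2\lambda_0(\bar\rho^n-\rho(0)^n)+\dot\rho(0)^2}}$ and notes this integral is finite at both ends (the $\bar\rho=\rho(0)$ endpoint behaving like $1/\sqrt{\bar\rho-\rho(0)}$ when $\dot\rho(0)=0$), then makes it $<\epsilon$ by choosing $\rho(0)$ or $\dot\rho(0)$ large.
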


\begin{proof}
In both cases, for some $\lambda_0, \epsilon>0$ one has $\lambda|_{[0,\epsilon]}\geq \lambda_0$ and, thus, $\dot\rho(s)>0$ on $(0,\epsilon]$. Composing  $\dot\rho(s)$ with $\rho^{-1}$ in that interval, one has $d(\dot\rho^2)/d\rho\geq 2 \lambda_0 n \rho^{n-1}$ and, integrating both sides twice with respect to $\rho$:
\begin{equation}\label{e_integral}
\int_{\rho(0)}^{\rho} \frac{d\bar \rho}{\sqrt{2\lambda_0(\bar\rho^n-\rho(0)^n) +\dot \rho(0)^2}} \geq s(\rho)
\end{equation}
whenever $s(\rho) \leq \epsilon$.
 In  the case (i), putting   $\rho=\infty$ the integral is finite  for any allowed $\rho(0), \dot\rho(0)$, because of the behavior of $\bar\rho^{n/2}$ at infinity for $n>2$ (this holds even when $\dot\rho(0)=0$ because, close to $\rho(0)$,  the integrand behaves
as $\eta/\sqrt{\bar\rho-\rho(0)}$
 for some $\eta>0$). So,  an upper bound for the possible values of the parameter $s$ is found.
Moreover, once $\kappa_0>0$ has been prescribed, the finiteness of the integral implies the existence of the required $\rho_0$.

In the case (ii), it is enough to choose initial conditions such that the integral in the left hand side of  \eqref{e_integral} is smaller than $\epsilon$ for $\rho=\infty$. Clearly, this happens by choosing
 either $\rho(0)$ or $\dot\rho(0)$ big enough.
\end{proof}

\begin{remark} {\em Lemma \ref{l_ODEincomplete} ensures incompleteness when  the right-hand side of \eqref{e_ODEincomplete} has a superlinear growth in $\rho$.
The hypothesis $\lambda\geq \lambda_0>0$ in the case (i) cannot be improved in $\lambda>0$ as, in particular, the next example shows.
}\end{remark}

\begin{example}{\em
Let $\rho_0: [0,\infty) \rightarrow \R_{>0}$ be any smooth positive function and put $n\lambda=\ddot \rho_0/\rho_0^{n-1}$ for some
$n>2$. Obviously, $\rho_0$ is  a complete solution of the ODE
$$
\ddot \rho = n \lambda \rho^{n-1} \qquad \qquad \rho(0)>0, \quad
$$
 and there are many choices of $\rho_0$  so that $\dot\rho_0(0)\geq 0$ and $\lambda >0$
(say, $\rho_0(s)\equiv e^s, 1+s+s^2...$).
However, the part (ii) of Lemma \ref{l_ODEincomplete} ensures that ``many'' incomplete solutions appear close to any $s_0\in \R$ such that $\lambda(s_0)>0$.
}\end{example}

\subsection{Heuristic argument and outline of proof} \label{s2.3}
In order to explain the idea of our technique, consider first
the simple case of an autonomous and homogeneous potential $V$
of degree $n>2$. Up to a rotation and an homothety,
we can assume
$$V(\rho,\theta)=-\rho^n\cos(n\theta).$$
Put $\theta_k=2\pi k/n$ for $k=0,1,\dots, n-1$. A simple computation shows that  the radial curve $\gamma_k(s)=(\rho(s), \theta(s)\equiv\thk)$ is a trajectory for $V$ whenever $\rho$ solves
the equality in \eqref{e_ODEincomplete} with $\lambda\equiv 1$
(check the trajectory equations \eqref{ze.14'} below). Therefore,
Lemma~\ref{l_ODEincomplete} (i) ensures that such a trajectory is necessarily incomplete.

What is more, one can check\footnote{This can be traced from our computations, see Remark \ref{r_simplificayrefinahomog}.} that there exists a radial
region, to be labeled $D_k[\rho_0, \pi/(2n)]$, around  each $\gamma_k$ such that
the trajectories starting  in $D_k[\rho_0, \pi/(2n)]$
(with suitable initial conditions) satisfy:
\begin{enumerate}
\item[(a)]  they remain in $D_k[\rho_0, \pi/(2n)]$, and

\item[(b)] whenever they remain in $D_k[\rho_0, \pi/(2n)]$,
the differential inequality in Lem\-ma~\ref{l_ODEincomplete} holds for its radial component (thus, that lemma  ensures incompleteness again).
\end{enumerate}
The existence of $D_k[\rho_0, \pi/(2n)]$ can be understood because $V$ remains decreasing and concave along each trajectory $\gamma_k$. Then, the harmonicity of the polynomial $V$ implies that, when only the $\theta$-coordinate is varied, each point of $\gamma_k$ becomes a strict minimum of $V$. This makes $\thk$ to behave as a stable equilibrium for the $\theta$-component of the trajectories close to $\gamma_k$.

In the case of a non-homogeneous potential, the curves $\gamma_k$ will not be trajectories for the potential because the terms of lower degree may force the $\theta$-component to be non constant. \br However, our aim will be to show that  some  regions, labeled $D[\rho_0, \theta_+]$, can be still found, so that they satisfy  the same qualitative properties (a) and (b) above (for suitable initial data in a smaller region $D[\rho_0, \theta_0]$). \er To this aim, the
 energies (kinetic, potential, radial)  of the $V$-trajectories will be compared with the corresponding ones for certain radial curves which will play the role of $\gamma_k$.
Accordingly, our task in the autonomous case will be carried out in three steps, each one in one of the first  subsections of Section \ref{s3}:
 \begin{enumerate} \item  For appropriate radial curves $\gamma_k$, the regions $D[\rho_0, \theta_+]$ are defined so that
certain technical bounds for $V$ and $\partial V/\partial \rho$ hold (Lemma \ref{zl1}).

\item For suitable trajectories $\gamma$ starting at $D[\rho_0, \theta_+]$,  the  growth of the amplitude  of each oscillation around $\gamma_k$  (that is, the growth of its $\theta$-component) is estimated in terms of the radial component.
This will require a careful comparison of the energies of the trajectories and their radial projections (Lemma \ref{zl2}).

\item In order to ensure that $\gamma$ remains in the region  $D[\rho_0, \theta_+]$ (Proposition \ref{l5}),  the radial direction is shown to grow fast at each oscillation, so that the amplitude of the oscillations cannot increase enough to escape the angular region (due to the angular bound in the previous item). This assures an analog to the point (a) above, and allows Lemma \ref{l_ODEincomplete} to yield incompleteness as in (b). \er
For this aim, an important role is played by formula \eqref{xx1} below. It shows that the possible oscillations of the trajectory $\gamma$  (measured by the angular length $\overline{\theta}$) will contribute exponentially to the growth of the radial coordinate.  This  allows to bound the amplitude of the oscillations (so that $\gamma$ cannot escape from $D[\rho_0, \theta_+]$) and underlies  the  incompleteness of $\gamma$.
\end{enumerate}
For the non-autonomous case, the basic idea is the same one, but there are certain technical modifications which will be analyzed in the last part, subsection  \ref{s3.4}.

\section{Autonomous and non-autonomous cases}\label{s3}

In this section, we always consider the autonomous case except in its last subsection, where the extension to the non-autonomous one will be studied.
Thus, assume that $V$ is autonomous and polynomially bounded and, so, an harmonic polynomial
of degree greater than $2$.
From the polar expression \eqref{e_vpolar}, 
\begin{equation}
\label{zevpolar}
V(\rho,\theta)=-\rho^n\cos(n\theta)-\sum_{m=1}^{n-1}\lambda_m\rho^{m}\cos m(\theta+\alpha_m),\quad n\geq 3,
\end{equation}
where  $\lambda_m$, $\alpha_m$ are real scalars for all $m=1,\ldots,n-1$ and, by Remark \ref{r_v_simpliifcation}, \br we have assumed \er
$\alpha \, (\equiv \alpha(u))=0$ (rotate $\theta=\bar\theta-
\alpha/n$) with $\lambda =1$ (use an homothety and, eventually, rotate $\theta= \bar\theta+\pi/n$), plus $\lambda_0=0$.

\subsection{Claimed incomplete trajectories}
From the previous expression for $V$, the computation in polar coordinates of its gradient in ${\mathbb R}^2$ yields:
\begin{equation}\label{ze3}
\begin{array}{rl}
-\nabla V= & -\partial_{\rho}V \; \partial_{\rho}-\partial_{\theta} V \;  \partial_{\theta}/\rho^2 \\ =
& \left( n\rho^{n-1}\cos(n\theta)+\sum_{m=1}^{n-1}m \lambda_{m}\rho^{m-1}\cos m(\theta+\alpha_m)\right)\partial_{\rho} \\  & -\left(n\rho^{n-2} \sin(n\theta)
+\sum_{m=1}^{n-1}m
\lambda_{m}\rho^{m-2}\sin m(\theta+\alpha_m)\right)\partial_{\theta}.
\end{array}
\end{equation}
Taking into account the term of greatest order in $\rho$ in the expression (\ref{ze3}), the points where
$\partial_{\theta}V$ vanish for
 $\rho$ big enough, are given by $n$ angles $\vartheta_{k}(\rho)\in[0,2\pi)$, $k=0,\ldots,n-1$, and, after reordering:
\[
\lim_{\rho\rightarrow\infty}\vartheta_{k}(\rho)= \thk:=\frac{2\pi k}{n},\;\; \qquad \qquad
k=0,\ldots,n-1.
\]
Moreover, $-\partial_\rho V (\rho,\thk )>0$ for large $\rho$ too. Then, consider the following domains of $\R^2$, for (large) $\rho_0>0$ and any $\theta_+\in
(0,\pi/(2n))$
\begin{equation}\label{e_Dk}\begin{array}{l}
D_k[\rho_0,\theta_+]:=\{z\in {\mathbb R}^2: \rho(z)>\rho_0, |\theta(z)-\thk| <
\theta_+\},\quad k=1,\ldots,n-1 \\
D[\rho_0,\theta_+]:=D_{k=0}[\rho_0,\theta_+]=\{z\in {\mathbb R}^2: \rho(z)>\rho_0, |\theta(z)| <
\theta_+\}.
\end{array}
\end{equation}
Our aim will be to prove incompleteness in \br  any \er of such domains.
\begin{theorem}\label{zt}
Let $V$ be a harmonic polynomial of degree $n\geq 3$ normalized as in \eqref{zevpolar}. For each $\theta_0, \theta_+\in (0,\pi/(2n)), \theta_0< \theta_+$,
and $k\in\{0, ...,n-1\}$ there exists some
$\rho_0>0$ such that any $V$-trajectory $\gamma\equiv (\rho,\theta)$  with initial conditions satisfying
\begin{equation}\label{e_theorem_zt}
\gamma(0)=(\rho(0),\theta(0))\in  D_k[\rho_0,\theta_0]  \qquad   \dot\rho(0) \geq 0, \;
 \dot\theta(0) = 0,
 \end{equation}
remains in $D_k[\rho_0,\theta_+]$ and are incomplete.
Therefore, a harmonic polynomial potential has complete trajectories if and only if its degree $n$ satisfies
$n\leq 2$.
\end{theorem}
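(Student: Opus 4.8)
The plan is to cast \eqref{e_v} in polar coordinates, where it reads
\begin{equation}\label{p_polar_eqs}
\ddot\rho-\rho\dot\theta^2=-\partial_\rho V,\qquad \frac{d}{ds}\bigl(\rho^2\dot\theta\bigr)=-\partial_\theta V,
\end{equation}
and to reduce the whole statement to a single \emph{confinement} claim: that a trajectory launched as in \eqref{e_theorem_zt} never leaves the sector $|\theta-\thk|<\theta_+$. Granting confinement, incompleteness comes almost for free. Indeed, from \eqref{ze3}, on $D_k[\rho_0,\theta_+]$ one has $\cos(n\theta)=\cos\bigl(n(\theta-\thk)\bigr)\geq\cos(n\theta_+)>0$, while every lower-order term of $-\partial_\rho V$ is $O(\rho^{n-2})$; hence for $\rho_0$ large enough $-\partial_\rho V\geq n\lambda_0\rho^{n-1}$ with $\lambda_0:=\tfrac12\cos(n\theta_+)>0$. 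The first equation in \eqref{p_polar_eqs} then gives $\ddot\rho\geq-\partial_\rho V\geq n\lambda_0\rho^{n-1}$ (the centrifugal term $\rho\dot\theta^2$ only helps), and since $\rho(0)>\rho_0$, $\dot\rho(0)\geq0$ and $n\geq3>2$, Lemma~\ref{l_ODEincomplete}(i) forces finite-time blow-up of $\rho$, i.e. incompleteness. Note also that $\ddot\rho>0$ keeps $\dot\rho$ nonnegative and $\rho$ above $\rho_0$, so the radial side of the confinement is automatic.

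Everything therefore hinges on the angular estimate, which I would run as a continuity/bootstrap argument: let $s_1$ be the first time (if any) at which $|\theta-\thk|$ reaches $\theta_+$; since $|\theta(0)-\thk|<\theta_0<\theta_+$ we have $s_1>0$, and on $[0,s_1]$ the radial growth above is in force. The goal is to contradict $s_1<\infty$ by showing $|\theta-\thk|$ stays strictly below $\theta_+$ there. The mechanism is that $\thk$ is a stable angular equilibrium: by harmonicity the leading part of $-\partial_\theta V$ is $-n\rho^n\sin\bigl(n(\theta-\thk)\bigr)$, a restoring torque whose spring constant $\sim n^2\rho^n$ grows without bound, while the second equation in \eqref{p_polar_eqs} carries an effective damping $2\dot\rho/\rho>0$ once $\rho$ increases. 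I would encode this through an angular energy such as $\mathcal E_\theta:=\tfrac12\rho^2\dot\theta^2+\bigl(V(\rho,\theta)-V(\rho,\thk)\bigr)$, which controls the excursion and starts small (the initial data give $\dot\theta(0)=0$ and $|\theta(0)-\thk|<\theta_0$), compute $\dot{\mathcal E}_\theta$ along the flow, and show that the destabilizing contributions — coming from the $\rho$-dependence of the well and from the lower-order harmonics — are dominated by the damping and by the $\rho^n$-growth of the restoring potential. Equivalently, and closer to the heuristic in Section~\ref{s2.3}, one compares the radial energy of $\gamma$ with that of the purely radial comparison curve $\theta\equiv\thk$ and bounds the discrepancy by the angular oscillation.

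The crux — and the step I expect to be genuinely hard — is to turn these sign and growth facts into a quantitative bound that closes the bootstrap. The key phenomenon is that angular oscillation is expensive: each swing of $\theta$ forces a definite increase of $\ln\rho$, so that after only a bounded amount of angular travel $\rho$ has grown so large that the restoring potential $\sim\rho^n\bigl(1-\cos n(\theta-\thk)\bigr)$ dwarfs the available angular kinetic energy, pinning the turning points at excursions tending to $0$. Making this rigorous requires (i) a clean lower bound for the radial growth in terms of the total angular variation $\int|\dot\theta|\,ds$ (the exponential-growth estimate alluded to in Section~\ref{s2.3}), and (ii) uniform control of the lower-order harmonics $\lambda_m\rho^m\cos m(\theta+\alpha_m)$, $m\leq n-1$, on the sector; both are exactly where the degree hypothesis $n\geq3$ and a sufficiently large $\rho_0$ (depending on $\theta_0,\theta_+$ and on the $\lambda_m,\alpha_m$) must be used. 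I would package (i)--(ii) as two lemmas — the pointwise bounds on $V$ and $\partial_\rho V$ on the sector, and the oscillation-amplitude estimate — feeding a final confinement proposition, mirroring the three-step scheme announced before the theorem.

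Finally, the stated equivalence follows immediately. If $n\leq2$ the potential is an at most quadratic harmonic polynomial, \eqref{e_v} is an affine linear system, and all its solutions are complete by item~(a) of the Introduction. If $n\geq3$, the construction above produces, e.g. in $D_0[\rho_0,\theta_+]$, a nonempty family of incomplete trajectories, so not all trajectories are complete. Hence all trajectories are complete if and only if $n\leq2$.
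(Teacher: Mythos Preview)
Your plan matches the paper's architecture: reduce to angular confinement, then invoke Lemma~\ref{l_ODEincomplete}(i) via the bound $-\partial_\rho V\geq n\lambda_0\rho^{n-1}$ on the sector. The three ingredients you name---pointwise bounds on $V,\partial_\rho V$, an oscillation-amplitude estimate, and exponential radial growth against angular length---are precisely the paper's Lemma~\ref{zl1}, Lemma~\ref{zl2} and Lemma~\ref{l5bis}. So the strategy is right; what is missing is the execution of the confinement step, and there your proposed tool is not the one that actually works.

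The concrete discrepancy is the choice of comparison energy. You center everything at the fixed direction $\thk$ (your $\mathcal E_\theta$ uses $V(\rho,\theta)-V(\rho,\thk)$, and your alternative ``compares with the purely radial curve $\theta\equiv\thk$''). The paper does \emph{not} do this: its key functional \eqref{ef1} is the radial energy of the $\theta(s_1)$-projection, i.e.\ it centers the comparison at the \emph{current angular extremum} $\theta(s_1)$, not at $\thk$. This matters because the lower-order harmonics destroy the $\theta\mapsto-\theta$ symmetry about $\thk$ (see Remark~3.4 following Lemma~\ref{zl1}): the angular well is not centered at $\thk$, so $\dot{\mathcal E}_\theta$ picks up a sign-indefinite term $\dot\rho\,(\partial_\rho V(\rho,\theta)-\partial_\rho V(\rho,\thk))$ that you cannot absorb uniformly. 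Centering at $\theta(s_1)$ is what allows the clean monotonicity contradiction in Lemma~\ref{zl2} and yields the per-oscillation bound $|\theta(s_1)|-|\theta(s_0)|\le A/\rho(s_0)$, with the residual asymmetry quantified by the $\epsilon\sim A/\rho$ appearing in \eqref{ze16}--\eqref{ze17b}.

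The confinement (Proposition~\ref{l5}) is then not a single energy inequality but an iteration: one extracts the increasing sequence of angular extrema $\{s_k\}$, applies the $A/\rho(s_k)$ bound to each swing, uses Lemma~\ref{l5bis} to get $\rho(s_{k+1})\ge\rho(s_1)e^{\mu k/\Lambda}$, and sums the resulting geometric series \eqref{e_last}. Your sketch gestures at this (``each swing forces a definite increase of $\ln\rho$'') but does not supply it, and the $\thk$-centered energy would not deliver the per-swing estimate needed to start the summation.
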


\begin{remark}\label{r_simplificayrefinahomog} {\em
For the proof of Theorem \ref{zt}, only  the case
\br k=0 \er
(and, thus, the set $D[\rho_0,\theta_0]$) will be considered, with no loss of generality.

Theorem \ref{zt} yields the existence of ``many'' incomplete trajectories. This is enough for our purposes and we do not worry about a closer study, due to the rather technical character of our computations. Nevertheless, when there is no any additional complication, some of the estimates may be slightly sharper than strictly required. So, the reader could trace the proof to obtain  further results in relevant concrete examples.

In particular, the reader can trace the proof to check that, {\em in the case that $V$ is homogeneous (and autonomous), all the trajectories with $\dot\theta(0)=0$ and $\rho(0)>0$,  $\theta(0)\in (-\pi/(2n),\pi/(2n))$}
(that is, belonging to $D[\rho_0,\pi/(2n)]$ for some $\rho_0>0$) {\em are incomplete} (see Remark \ref{r_simplificayrefinahomog2}). 
}\end{remark}

From the viewpoint of differential equations, the incompleteness stated in
Theorem \ref{zt} comes from  the incompleteness to the right of the solutions for $n\geq 3$ of the system:
\begin{equation}\label{ze.14'}
\left\{\begin{array}{l} \ddot{\rho}-\rho\dot{\theta}^{2}=-\partial_\rho V(\rho,\theta) \\
\ddot{\theta}+\frac{2}{\rho}\dot{\rho}\dot{\theta}=-\frac{1}{\rho^2}\frac{\partial
V}{\partial \theta}(\rho,\theta) \\
\rho(0)>  \rho_0 (>0),\quad \dot{\rho}(0)\geq 0, \quad |\theta(0)|<
\theta_+ \in (0,\pi/(2n)),\quad \dot{\theta}(0)=0,
\end{array}\right.
\end{equation}
which satisfy additionally $|\theta(0)|<\theta_0$.
The solutions are considered to be defined on a maximal  interval $I=[0,b), b\leq \infty$, and the claimed incompleteness means $b<\infty$ (in the case $\dot\rho(0)\leq 0$, an analogous result on incompleteness to the left would hold).
Here, $\theta_0, \theta_+$ have been  prescribed and  $\rho_0$ is to be determined, so that $V$ will satisfy certain subtle inequalities, stated in the following lemma.

\begin{lemma} \label{zl1}
Choosing $\theta_+\in (0,\pi/(2n))$ and $\delta_0\in (0, \cos (n\theta_+))$, there exists some $\rho_0>1$ such that

\begin{equation}\label{ze17bis}
-\partial_\rho V(\rho,\theta) >  \delta_0  n
\rho^{n-1} (>0), \quad \quad \forall \rho\geq\rho_0, \forall
\theta\in (-\theta_+,\theta_+).
\end{equation}
Moreover, if we choose
 $\epsilon, \theta_-$ with $0<\epsilon<\theta_-<\theta_+$,
then some $\rho_0>1, \delta>0$ can be taken so that  all the following inequalities also hold  for any $\theta_1\in [\theta_-, \theta_+]$:
\begin{eqnarray}\label{ze16}
V(\rho,\theta_1)-V(\rho,\theta)> \delta  (\theta_1-\theta)
 \rho^{n} , \quad \quad \forall
\rho\geq\rho_0, \forall \theta\in (-\theta_1  +\epsilon ,
\theta_1),
\\ \label{ze16bis}
V(\rho, -\theta_1 )-V(\rho,\theta)>\delta  (\theta_1+\theta)
\rho^{n}, \quad \quad \forall
\rho\geq\rho_0, \forall \theta\in
(-\theta_1,\theta_1  -\epsilon),
\end{eqnarray}
\begin{eqnarray}\label{ze17}
 \partial_\rho V(\rho,\theta_1)-\partial_\rho V(\rho,\theta)> \delta n  (\theta_1-\theta)
\rho^{n-1},  \quad \quad \forall
\rho\geq\rho_0, \forall \theta\in
(-\theta_1  +\epsilon ,
\theta_1),
\\ \label{ze17b}
\partial_\rho V(\rho, -\theta_1  )-\partial_\rho V(\rho,\theta)> \delta n (\theta_1+\theta)
\rho^{n-1},  \quad \quad \forall
\rho\geq\rho_0, \forall \theta\in (-\theta_1 ,
\theta_1- \epsilon),
\end{eqnarray}
Finally,  chosen  $0<\theta_- < \theta_+ <\pi/(2n)$, one can find    $A>0$,  $\rho_0>1$ 
(bigger than any prescribed constant)\footnote{\label{foot p12lema4.3} Indeed, once a value $A$, $\rho_0$
has been obtained (for some prescribed $\theta_\pm$),
$\rho_0$ can be replaced by any other bigger
number maintaining the value of $A$ (and vice-versa); so, further lower bounds for $\rho_0$ \br will be incorporated later. \er  }  such that all the previous inequalities  hold for $\rho\geq \rho_0$,
by   replacing $\epsilon$ and $\delta$ by $A/\rho$ and \br $1/\rho$, \er 
resp., 
in \eqref{ze16}---\eqref{ze17b}.
\end{lemma}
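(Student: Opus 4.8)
The plan is to treat every estimate as a perturbation of the top-order homogeneous term $V_0(\rho,\theta):=-\rho^n\cos(n\theta)$, since the remainder $V-V_0=-\sum_{m=1}^{n-1}\lambda_m\rho^m\cos m(\theta+\alpha_m)$ is of strictly lower order in $\rho$ and can be absorbed by enlarging $\rho_0$. On the strip $|\theta|<\theta_+<\pi/(2n)$ one has $\cos(n\theta)>\cos(n\theta_+)>0$, so the sign of $V_0$ and of its $\rho$- and $\theta$-derivatives is controlled, while the differences coming from $V-V_0$ are $O(\rho^{n-1})$ and those from $\partial_\rho(V-V_0)$ are $O(\rho^{n-2})$, i.e.\ a factor $\rho^{-1}$ below the leading contributions. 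For \eqref{ze17bis} this is immediate: $-\partial_\rho V_0=n\rho^{n-1}\cos(n\theta)$ supplies the positive margin $(\cos(n\theta_+)-\delta_0)\,n\rho^{n-1}$, and $|\partial_\rho(V-V_0)|$ is bounded by a multiple of $\rho^{n-2}$, so taking $\rho_0$ large enough that this remainder is dominated by the margin yields \eqref{ze17bis}.

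For the four difference estimates \eqref{ze16}--\eqref{ze17b} at fixed $\epsilon$, I would reduce each to a statement about a divided difference. Dividing \eqref{ze16} by $\rho^n$ and \eqref{ze17} by $n\rho^{n-1}$, both leading parts read $\cos(n\theta)-\cos(n\theta_1)\ge \delta(\theta_1-\theta)$; likewise \eqref{ze16bis}, \eqref{ze17b} give $\cos(n\theta)-\cos(n\theta_1)\ge\delta(\theta_1+\theta)$. The quotient
\[
\Psi(\theta,\theta_1):=\frac{\cos(n\theta)-\cos(n\theta_1)}{\theta_1-\theta}
\]
extends continuously across the diagonal with value $n\sin(n\theta_1)>0$; since $n\theta_1<\pi/2$ and $|\theta|<\theta_1$ strictly on the $\epsilon$-truncated domain, $\Psi$ is continuous and \emph{strictly positive} on the relevant compact set of pairs $(\theta,\theta_1)$ with $\theta_1\in[\theta_-,\theta_+]$, hence attains a positive minimum $2\delta$. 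The analogous quotient with denominator $\theta_1+\theta$ behaves the same way (its degenerate endpoint is $\theta_1$, again excluded by $\epsilon$). This gives the leading inequalities with margin $2\delta$, and the $O(\rho^{-1})$ remainders are absorbed by enlarging $\rho_0$, degrading $2\delta$ to $\delta$. Here $\epsilon$ is essential: it keeps $\theta$ away from the endpoint at which $\cos(n\theta)-\cos(n\theta_1)$ vanishes, without which $\Psi$ would not be bounded below and no positive $\delta$ would exist.

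The final part, replacing $\epsilon$ by $A/\rho$ and $\delta$ by $1/\rho$, is the crux and the main obstacle: the exclusion window near the degenerate endpoint now shrinks as $\rho$ grows, so $\Psi$ is no longer uniformly positive, but the demanded margin simultaneously drops from $\delta\rho^n$ to $\rho^{n-1}$, and the plan is to balance these two effects. After absorbing the remainder, the target for \eqref{ze16} becomes $\cos(n\theta)-\cos(n\theta_1)\ge (C+1)(\theta_1-\theta)/\rho$ on $-\theta_1+A/\rho\le\theta\le\theta_1$. I would split this domain at a fixed level $\eta_0\in(0,\theta_-)$: on $\theta+\theta_1\ge\eta_0$ the left side is bounded below by a positive constant while the right side is $O(1/\rho)\to0$, so a large $\rho_0$ closes it; on the boundary layer $A/\rho\le\theta+\theta_1\le\eta_0$ I would use the first-order lower bound $\cos(n\theta)-\cos(n\theta_1)\ge c_1(\theta+\theta_1)\ge c_1 A/\rho$, valid because the slope $-n\sin(n\theta)$ of $\cos(n\theta)$ near $-\theta_1$ is bounded below by $c_1:=n\sin(n(\theta_--\eta_0))>0$, while the right side is at most $2(C+1)\theta_+/\rho$. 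Choosing $A$ with $c_1A\ge 2(C+1)\theta_+$ then wins, and $\sin(n\theta_1)\ge\sin(n\theta_-)>0$ makes the choice of $A$ uniform in $\theta_1\in[\theta_-,\theta_+]$; the symmetric endpoint handles \eqref{ze16bis} and \eqref{ze17b}. The delicate point throughout is exactly this joint $1/\rho$-scaling of window width and margin, where the nonvanishing slope of $\cos(n\theta)$ at the degenerate endpoint --- a consequence of $0<n\theta_1<\pi/2$, i.e.\ of harmonicity imposing the angular frequency $n$ --- is what allows a large but $\rho$-independent $A$ to prevail.
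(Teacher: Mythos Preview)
Your approach is essentially the paper's: treat every estimate as a perturbation of the top homogeneous term $-\rho^n\cos(n\theta)$ and split the $\theta$-domain appropriately. Your packaging of the fixed-$\epsilon$ part via the divided difference $\Psi(\theta,\theta_1)=[\cos(n\theta)-\cos(n\theta_1)]/(\theta_1-\theta)$ and compactness is a clean alternative to the paper's explicit two-step argument (the paper separates a bulk region $|\theta|\le\theta_1-\epsilon_1$, where the raw difference is bounded below by $a(\epsilon_1)>0$, from a Taylor region $\theta\in[\theta_1-\epsilon_1,\theta_1)$, where it is $\gtrsim n\sin(n\theta_-)(\theta_1-\theta)$); both routes rest on the same fact, namely that $\Psi$ extends positively to $\theta=\theta_1$ while $\epsilon$ keeps you away from the genuinely bad endpoint.

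One slip in your final-part argument: on the region $\theta+\theta_1\ge\eta_0$ the raw left side $\cos(n\theta)-\cos(n\theta_1)$ is \emph{not} bounded below by a positive constant, since it vanishes at $\theta=\theta_1$. What \emph{is} bounded below there is the divided difference $\Psi$ --- precisely your own fixed-$\epsilon$ result applied with $\epsilon=\eta_0$ --- and that gives $\cos(n\theta)-\cos(n\theta_1)\ge 2\delta'(\theta_1-\theta)\ge(C+1)(\theta_1-\theta)/\rho$ for $\rho$ large, which is the estimate you need. With this correction your boundary-layer argument near $-\theta_1$ (using the nonvanishing slope $n\sin(n\theta_1)\ge n\sin(n\theta_-)$) goes through and yields a $\rho$-independent $A$, as you claim. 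The paper organizes the last part differently: it lets the Step~1 window shrink as $\epsilon_1=A/\rho$ and tracks how $a(\epsilon_1)$ and $\delta_1(\epsilon_1)$ degrade linearly in $\epsilon_1$, then observes that Step~2 is monotone in $\epsilon_1$; your fixed-$\eta_0$ bulk plus shrinking boundary layer is an equally valid and arguably more transparent decomposition.
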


\noindent {\em Proof.}
From the expression \eqref{ze3}, one has:
\begin{equation}\label{zepartialV}
\begin{array}{rl}
 -\partial_\rho V= & n\rho^{n-1}\cos(n\theta)+\sum_{m=1}^{n-1}m \lambda_{m}\rho^{m-1}\cos m(\theta+\alpha_m).
 \\
 \geq & n\rho^{n-1}\left(\cos(n\theta)-\sum_{m=1}^{n-1}\frac{m}{n}
 |\lambda_{m}|\rho^{m-n}\cos m(\theta+\alpha_m)\right).
 \end{array}
\end{equation}
As $\cos(n\theta)> \cos(n\theta_+)>0$ and
$\rho^{m-n}\leq 1/\rho$ in the large,
some big $\rho_0> 1$  so that  $\delta_0 \leq \cos (n\theta_+)-C/\rho_0$  (for, say, $nC=\sum_m m|\lambda_m|$) yield \eqref{ze17bis}.

In the remainder, we will focus on the inequalities for
$\partial V/\partial \rho$, being those for  $V$ analogous.
To obtain \eqref{ze17}, \eqref{ze17b}, recall that,  for any $\theta_1\in (0,\theta_+]$, $\theta\in (-\theta_1,\theta_1)$,  one has, by using \eqref{zepartialV}:
\begin{equation}\label{e_pu}
\begin{array}{c}
-\partial_\rho V(\rho, \theta) + \partial_\rho V(\rho, \theta_1)
\geq
\\
n\rho^{n-1}\left(\cos
(n\theta)-\cos(n\theta_1)-\sum_{m=1}^{n-1} \frac{m}{n}|\lambda_m|\rho^{m-n}|\cos
m(\theta+\alpha_m)-\cos m(\theta_1+\alpha_m)|\right).
\end{array}
\end{equation}
The expression between the big parentheses will be shown to be $\geq \delta(\theta_1-\theta)$,  
as required in \eqref{ze17},
by considering two steps, the first one when $|\theta|$ is far from $\theta_1$ and the second one when it is close.

 {\em Step 1}. To prove that, for any $\epsilon_1\in (0,\theta_-)$,  there exists $\rho_1(\epsilon_1)>1$ and $\delta_1>0$  such that the big parenthesis 
is  $\geq \pi \delta_1/n$  for $\rho\geq \rho_1(\epsilon_1)$,  whenever $|\theta| \leq \theta_1-\epsilon_1$ (so that $\pi\delta_1/n> (\theta_1-\theta)\delta_1$).  Let
\begin{equation}\label{e_ae0}
 \begin{array}{rl}
a(\epsilon_1):= &  \cos(n(\theta_- -\epsilon_1))-\cos(n\theta_-) \\
= &
  \hbox{Min}_{\theta_1\in [\theta_-,\theta_+]}\{\cos(n(\theta_1-\epsilon_1))-\cos(n\theta_1)\} \quad (>0) ,
  \end{array}
 \end{equation}
the last equality because the expression in curly brackets grows with $\theta_1$. So, putting  $nC=2\sum_m m|\lambda_m|$ one has just to ensure
$
a(\epsilon_1)-C/\rho_1(\epsilon_1)>\pi\delta_1/n
$. Clearly, it is enough to choose $\delta_1=n\, a(\epsilon_1)/(2\pi) $ and
\begin{equation}\label{e_choice_rho}
\rho_1(\epsilon_1)> \hbox{Max}\left\{1,\frac{2 C}{a(\epsilon_1)}\right\}.
\end{equation}

 {\em Step 2}. To prove that, choosing $\epsilon_1\in(0,\epsilon]$  small enough, there exists $\rho_2(\epsilon_1)$ such that   the big parenthesis of \eqref{e_pu} is $\geq \delta_2 (\theta_1-\theta)> 0$ for some $\delta_2>0$, whenever $ \theta_1-\epsilon_1\leq \theta < \theta_1$ and $\rho\geq \rho_2(\epsilon_1)$.
Indeed, as
 $|\theta|<\theta_1\leq\theta_+<\pi/(2n)$ and $0<\sin(n\theta_-)\leq \sin(n\theta_1)$, one has, by expanding in power series around $\theta_1$,
\begin{equation}\label{e_pu2}
\begin{array}{rl}
\cos(n\theta)-\cos(n\theta_1) \geq & n \sin (n\theta_-) (\theta_1-\theta) - \mu_1 (\theta_1-\theta)^2 \\
|\cos
m(\theta+\alpha_m)-\cos m(\theta_1+\alpha_m)| \leq & m 
(\theta_1-\theta) + \mu_2 (\theta_1-\theta)^2,
\end{array}
\end{equation}
for some $\mu_1,\mu_2>0$  which are chosen independent of $\theta_1$ (recall we are assuming $\theta_1\in [\theta_-,\theta_+]$).
So, it is enough to take $\rho_2(\epsilon_1)$
big enough so that $0<(n\sin n\theta_-)/2 -2C/\rho_2(\epsilon_1)$ and, then,
$0<\delta_2<(n\sin n\theta_-)/2 -2C/\rho_2(\epsilon_1))$, to obtain the required inequality.

 Thus, once such an $\epsilon_1\leq \epsilon$ has been chosen in Step 2, one can obtain
$\rho_1
(\epsilon_1)
$ from Step 1,
and \eqref{ze17} follows by taking $\rho_0=$
Max$\{\rho_1(\epsilon_1), \rho_2(\epsilon_1)\}$
and $\delta<$ Min$\{\delta_1,\delta_2\}$.

 To get \eqref{ze17b}
 repeat the process above by replacing all the bounds involving $\theta_1$ by
analogous ones with $-\theta_1$. Then, obtain  new values of  $\epsilon_1
(<\epsilon )$,   $\rho_1(\epsilon_1)$, $\rho_2(\epsilon_1)$, and choose some $\rho_0$ bigger  (resp. some $\delta$ smaller) than the values obtained in both processes. Analogously, the other bounds   \eqref{ze16}, \eqref{ze16bis} can be obtained preserving the already obtained
\br ones \er  (including \eqref{ze17bis}).

 For the last assertion, let us check just that $A, \rho_0$ can be chosen  as required for \eqref{ze17} (the other inequalities can be obtained analogously reasoning as  above).
So, our aim will be to estimate a lower bound for $\epsilon, \delta$ in terms of (big) $\rho$, taking into account the bounds in the steps 1 and 2 above.

Given $\theta_-, \theta_+$, if one chooses $\rho_1>1$ and $\epsilon_1\in (0,\theta_-)$ fulfilling  \eqref{e_choice_rho}, then
the bound \eqref{ze17} obtained in the Step 1 for $\theta\in (-\theta_1+\epsilon,\theta_1-\epsilon)$ will hold putting  $\delta_1(\epsilon_1)=n \, a(\epsilon_1)/(2 \pi)$. Recalling  the definition of  $a(\epsilon_1)$ in \eqref{e_ae0} and its series expansion around $\epsilon_1=0$, one has:
\begin{equation}\label{e_auxil1}
\frac{2 C}{a(\epsilon_1)} < \frac{4 C}{\br n   \er\sin(n\theta_-)} \, \frac{1}{\epsilon_1}, \qquad \qquad \delta_1(\epsilon_1)> \frac{\br n^2 \er \sin(n\theta_-)}{4\pi} \epsilon_1 \, ,
\end{equation}
 for small $\epsilon_1$. So, choose some big $A$ (say, $A \geq 4(\br C \er +\pi)/(n  \sin(n\theta_-))$) and put
\begin{equation}
\label{e_auxil2}
\epsilon_1(\rho)=\frac{A}{\rho} , \qquad \qquad \hat \delta_1(\rho)= \br \frac{1}{\rho} \er 
\end{equation}
Comparing (\ref{e_auxil1}) and (\ref{e_auxil2}), one finds some (arbitrarily big)  $\rho_1>1$ such that, whenever $\rho\geq \rho_1$,
  $\rho$, $\epsilon_1(\rho)$ fulfill \eqref{e_choice_rho} and, moreover, $\hat\delta_1(\rho)<\delta_1(\epsilon_1(\rho))$. Thus,
the inequalities in  the Step 1  hold for such $\epsilon_1(\rho), \hat\delta_1(\rho)$.

In order to ensure also  the bound in the Step 2 for $\theta\in [\theta_1-\epsilon_1(\rho), \theta_1)$, recall first that, once  $\epsilon_1$ is prescribed, the value of  $\rho_2(\epsilon_1)$ in that step was also valid for smaller values of $\epsilon_1$ (that is, one can take $\rho_2(\epsilon_1')=\rho_2(\epsilon_1)$  for any $0<\epsilon_1'<\epsilon_1$). So, the required bound holds just taking   $\rho_0$ as the maximum between the value of $\rho_1$ previously obtained and  $\rho_2(\epsilon_1(\rho_1))$.
 $\Box$


\begin{remark}{\em
It is worth emphasizing why $\epsilon>0$ is required in
\eqref{ze16}---\eqref{ze17b},  as this will become  an important difficulty for our problem.  It is obvious that  the non-strict inequality with $\epsilon=0$ in \eqref{ze16}, \eqref{ze16bis} will hold for $\theta=\theta_1$ and $\theta=-\theta_1$ only  when the potential satisfies $V(\rho,\theta_1)=V(\rho,-\theta_1)$.
So, the estimates \br  $\epsilon \rho_0 \approx A, \delta \rho_0 \approx 1$ \er 
can be regarded as a sort of bound for the ``asymptotic lack of symmetry'' of $V$  (originating by the terms of lower order, $m<n$, in the expression of $V$) with respect to $\theta=0$. Of course, if $V$ were symmetric ($V(\rho,\theta)=V(\rho,-\theta)$) then the half line $\theta=0$ could be reparametrized as a trajectory and its incompleteness  would \br follow \er easily as in the homogeneous case explained in Subsection \ref{s2.3}.

More quantitatively, in the case $-\theta_1<\theta<0$ we can write, instead of \eqref{e_pu2}:
\begin{equation}\label{e_pu3}
\begin{array}{rl}
\cos(n\theta)-\cos(n\theta_1) \geq & n \sin (n\theta_1) (\theta_1+\theta) - \mu_1 (\theta_1+\theta)^2 \\
|\cos
m(\theta+\alpha_m)-\cos m(\theta_1+\alpha_m)|
= & |\left(\cos(m\theta)-\cos(m\theta_1)\right)
\cos\alpha_m
\\ - &
 \left(\sin(m\theta)-\sin(m\theta_1)\right)  \sin\alpha_m|
\end{array}
\end{equation}
 So, we  cannot  take
$\rho_0$ big enough so that the right hand side of \eqref{e_pu} is positive for all $\rho \geq  \rho_0$ provided that $|\theta|<\theta_1$.
In fact, this will not hold if $\sin(\alpha_m)\neq 0$ even when  $\rho$ is very big: \br as the sinus is an odd function, \er the sum at the right hand side of \eqref{e_pu3} will be small but positive (even if divided by a big $\rho$), while the difference in the cosines may be taken arbitrarily small \br close to $\theta=-\theta_1$. \er
}\end{remark}

\begin{remark}\label{r_simplificayrefinahomog2}{\rm  In the case that the polynomial $V$ is homogeneous,  the inequalities \eqref{ze16}---\eqref{ze17b} with $\delta=0$ and $\epsilon=0$ are enough in order to prove the  (sharper) result stated in Remark \ref{r_simplificayrefinahomog}. In fact, in contrast with the non-homogeneous case explained in the previous remark, the case  $\epsilon=0$ can be proved  easily in the homogeneous case, as lower order terms of the polynomial do not exist.
}\end{remark}

\subsection{Balance of energies and amplitude of the oscillations}
Next, the conservation law of energy \eqref{e18} will be  used to estimate the amplitude of the oscillations of the $\theta$-coordinate.
 More precisely, particularizing  \eqref{e18}  for a trajectory $\br (\rho(s),\er \theta(s))$ in  the autonomous case,
\begin{equation}\label{ze18_aut}
\frac{1}{2}(\dot{\rho}(s)^2+\rho(s)^2\dot{\theta}(s)^2)+ V(\rho(s),\theta(s)) =E(s_0)
\qquad
\forall s\in I=[0,b),
\end{equation}
 where $E(s_0)$ is the total energy \br at some $s_0\in I$. \er
We will assume always   $\dot \theta({s_0})=0$ (in particular, this happens if $s_0=0$ by \eqref{ze.14'}) and, thus,
$E(s_0)=\dot\rho(s_0)^2/2 + V(\rho(s_0),\theta(s_0)).$

For any trajectory  $\gamma(s)\equiv (\rho(s), \theta(s))$,
$s\in [0,b)$,
of \eqref{ze.14'},  and $\theta_1\in \R$,   the curve $\gamma_{\theta_1}(s)\equiv (\rho(s), \theta_1)$ will be called
 its {\em $\theta_1$-projection}, and it will have an energy equal to the radial one of $\gamma$ plus the potential one corresponding to $V(\rho(s), \theta_1)$. In particular, choosing $s_1\in [0,b)$,
the energy of the $\theta(s_1)-$projection is:

\begin{equation}\label{ef1}
F(s):=\frac{1}{2}\dot{\rho}(s)^2+V(\rho(s),\theta(s_1))
\end{equation}
\br The next result provides a first estimate on the (angular) oscillations of the trajectory. The key in the proof will use that, whenever $\theta(s)$ is monotonous, the energy $F$ cannot decrease. \er

\begin{lemma}\label{zl2} Choose $0<\theta_-<\theta_+ <\pi/(2n)$, and let $A, \rho_0>0$ be as in the last part of Lemma \ref{zl1}.
Let $(\rho(s)$, $\theta(s))$, $s\in [0,b)$, be a solution of \eqref{ze.14'} with $(\rho(s_0), \theta(s_0))\in D[\rho_0,\theta_+]$, $\dot{\theta}(s_0)=0$ and $\dot\rho(s_0)\geq 0$ for some $s_0\in [0,b)$.

If $s_1\in (s_0,b)$ satisfies $|\theta(s_0)|<|\theta(s_1)|<\theta_+$, and $\theta(s)$ is (non-necessarily strictly) monotonous on $(s_0,s_1)$, then   $|\theta(s_1)|-|\theta(s_0)|\leq A/\rho(s_0)$.  
\end{lemma}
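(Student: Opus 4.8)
The plan is to argue by contradiction, using the auxiliary energy $F$ of the $\theta(s_1)$-projection defined in \eqref{ef1}. The crux is the \emph{monotonicity of $F$}: I would first show that, along the arc in question, $\dot F\geq 0$, so that $F(s_1)\geq F(s_0)$. Writing $\theta_1:=\theta(s_1)$, differentiating \eqref{ef1} and substituting the radial equation $\ddot\rho=\rho\dot\theta^{2}-\partial_\rho V(\rho,\theta)$ of \eqref{ze.14'} gives
\[
\dot F=\dot\rho\Big(\rho\dot\theta^{2}+\big[\partial_\rho V(\rho,\theta_1)-\partial_\rho V(\rho,\theta)\big]\Big).
\]
Since $\rho\dot\theta^{2}\geq 0$, the sign of $\dot F$ is governed by $\dot\rho$ and by the difference $\partial_\rho V(\rho,\theta_1)-\partial_\rho V(\rho,\theta)$, which is precisely the quantity estimated in \eqref{ze17}--\eqref{ze17b}.

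I would reduce to one configuration: the two monotone possibilities are exchanged by the reflection $\theta\mapsto-\theta$, which swaps \eqref{ze16} with \eqref{ze16bis} and \eqref{ze17} with \eqref{ze17b}, so assume $\theta$ is increasing on $(s_0,s_1)$; then necessarily $\theta_1=\theta(s_1)>0$ (otherwise $|\theta(s_0)|\geq|\theta(s_1)|$), so $0<\theta_1<\theta_+$. The decreasing case is parallel, using the ``$-\theta_1$'' inequalities \eqref{ze16bis},\eqref{ze17b}. Next I would establish $\dot\rho\geq 0$ on the whole arc: monotonicity of $\theta$ keeps $|\theta(s)|\leq\max\{|\theta(s_0)|,|\theta(s_1)|\}=\theta_1<\theta_+$, and a short bootstrap using \eqref{ze17bis} (which gives $\ddot\rho=\rho\dot\theta^{2}-\partial_\rho V>0$ inside $D[\rho_0,\theta_+]$) keeps $\rho>\rho_0$ and makes $\dot\rho$ increasing, hence $\dot\rho\geq\dot\rho(s_0)\geq 0$.

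Now the contradiction. Suppose $|\theta(s_1)|-|\theta(s_0)|>A/\rho(s_0)$. This is exactly the inequality $\theta(s_0)>-\theta_1+A/\rho(s_0)$, so, $\rho$ being non-decreasing, every value $\theta(s)$ with $s\in(s_0,s_1)$ lies in the validity range $(-\theta_1+A/\rho(s),\theta_1)$ of \eqref{ze17}; thus $\partial_\rho V(\rho,\theta_1)-\partial_\rho V(\rho,\theta)>0$ there, whence $\dot F\geq 0$ and $F(s_1)\geq F(s_0)$. Rewriting both endpoints of $F$ through the conservation law \eqref{ze18_aut} (using $\dot\theta(s_0)=0$ and $\theta(s_1)=\theta_1$) turns $F(s_1)\geq F(s_0)$ into
\[
V(\rho(s_0),\theta(s_1))-V(\rho(s_0),\theta(s_0))\leq -\tfrac12\rho(s_1)^2\dot\theta(s_1)^2\leq 0 .
\]
On the other hand, the same hypothesis places $\theta(s_0)$ in the validity range of \eqref{ze16}, which yields the \emph{strict reverse} bound $V(\rho(s_0),\theta(s_1))-V(\rho(s_0),\theta(s_0))>(\theta_1-\theta(s_0))\rho(s_0)^{n-1}>0$ --- a contradiction. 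Hence $|\theta(s_1)|-|\theta(s_0)|\leq A/\rho(s_0)$.

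The step I expect to be most delicate is matching the hypotheses to the validity ranges in Lemma~\ref{zl1}: the inequalities \eqref{ze16}--\eqref{ze17b} hold only outside an $\epsilon=A/\rho$ strip near the far turning angle (this strip encodes the asymptotic lack of symmetry of $V$ produced by its lower-order terms), and the whole scheme closes only because the contradiction hypothesis $|\theta(s_1)|-|\theta(s_0)|>A/\rho(s_0)$ is precisely what forces the entire arc out of that strip, thereby simultaneously validating the monotonicity $\dot F\geq 0$ and the final strict inequality. A secondary point to handle is the regime $|\theta(s_1)|<\theta_-$, which falls outside the range $[\theta_-,\theta_+]$ of Lemma~\ref{zl1}; there one must take $\theta_-$ small enough (or rerun the estimates with a reduced lower angle) so that the same chain of inequalities remains applicable.
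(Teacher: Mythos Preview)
Your proof is correct and uses exactly the same ingredients as the paper's: the projected energy $F$ of \eqref{ef1}, the estimates \eqref{ze16}--\eqref{ze17b} from Lemma~\ref{zl1}, and a contradiction argument. The logical organization is reversed, however. The paper first uses \eqref{ze16}/\eqref{ze16bis} (together with the conservation law, as in \eqref{zg}) to conclude $F(s_1)-F(s_0)<0$, then invokes the mean value theorem to locate a single point $s_*$ with $\dot F(s_*)<0$, and finally applies \eqref{ze17}/\eqref{ze17b} at that point to contradict the radial equation in \eqref{ze.14'}. You go the other way: compute $\dot F$ explicitly, apply \eqref{ze17} pointwise along the whole arc to get $\dot F\geq 0$, hence $F(s_1)\geq F(s_0)$, and then contradict \eqref{ze16}. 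Your route is marginally cleaner---it avoids the mean value theorem and the case analysis on $|\theta(s_*)|$ vs.\ $|\theta(s_0)|$---at the cost of having to check that \eqref{ze17} applies at every $s\in(s_0,s_1)$ rather than at one point; you handle this correctly via the monotonicity of $\theta$ and $\rho$.

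Your secondary observation about the regime $|\theta(s_1)|<\theta_-$ is well taken: the inequalities \eqref{ze16}--\eqref{ze17b} in Lemma~\ref{zl1} are stated only for $\theta_1\in[\theta_-,\theta_+]$, and this restriction is present in the paper's proof as well. In the only place the lemma is actually used (Proposition~\ref{l5}) the relevant endpoints satisfy $|\theta(s_1)|>\theta_*$, so the issue is resolved by taking $\theta_-\le\theta_*$ when setting up \eqref{e_f1}; you are right that this deserves to be said explicitly.
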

\noindent {\em Proof}.
By using the first equation of (\ref{ze.14'}), and the inequality (\ref{ze17bis}), we have
\begin{equation}\label{zyy}
\ddot{\rho}(s)\geq -\partial_\rho V(\rho(s),\theta(s))>\delta_0 n\rho(s)^{n-1}>0, \quad \forall s\in [s_0,s_1]
\end{equation}
for some $\delta_0>0$ and, thus,
\begin{equation}
\label{zedr}
\dot \rho(s)>0 \quad \quad \forall s\in(s_0,s_1].
\end{equation}
In particular, $\rho$ is strictly increasing in $[s_0,s_1]$.
On the other hand, adding $V(\rho(s),\theta(s_1))$ in the conservation law (\ref{ze18_aut}), the energy of the $\theta(s_1)$-projection \eqref{ef1} satisfies
\begin{equation}\label{zg}
\begin{array}{c}
F(s) =
E(s_0)-\frac{1}{2}\rho(s)^2\dot{\theta}(s)^2+ V(\rho(s),\theta(s_1))-V(\rho(s),\theta(s)).
\end{array}
\end{equation}
Next, assume by contradiction that
\begin{equation}\label{et1}
|\theta(s_1)|-|\theta(s_0)|>\frac{A}{\rho(s_0)}.
\end{equation}
Putting $\theta_1= |\theta(s_1)|$,  $\theta= \theta(s_0)$, $\epsilon =\frac{A}{\rho(s_0)}$,  one has
\begin{equation}
\label{e_lastp} \theta \in (-\theta_1+\epsilon, \theta_1-\epsilon), \end{equation}
and  the last part of Lemma \ref{zl1} will be applicable. Then,
\begin{equation}\label{zvv}
F(s_1)-F(s_0)\leq -(V(\rho(s_0),\theta(s_1))-V(\rho(s_0),\theta(s_0)))<0,
\end{equation}
the last inequality by  \eqref{ze16}, \eqref{ze16bis}. By the mean value theorem applied to $F$ on $[s_0,s_1]$, one deduces the existence of $s_{*}\in (s_0,s_1)$ such that $\dot F(s_*)<0$ (and $|\theta(s_*)| \leq |\theta(s_1)|<\theta_+$). That is, from \eqref{ef1}:
\begin{equation}\label{zs}
\ddot{\rho}(s_*)\dot{\rho}(s_*)<-\dot{\rho}(s_*)\partial_{\rho}V(\rho(s_*),\theta(s_1)).
\end{equation}
Moreover, from \eqref{zedr}
this simplifies into:
\begin{equation}\label{zeq}
\ddot{\rho}(s_*)<-\partial_{\rho}V(\rho(s_*),\theta(s_1)).
\end{equation}
Now the assumed  monotonicity of $\theta(s)$ and inequality $|\theta(s_0)|\leq |\theta(s_1)|$ yield two possibilities: either $|\theta(s_*)|\leq |\theta(s_0)|$ or $|\theta(s_0)|<|\theta(s_*)|<|\theta(s_1)|$. In any of them, the inequalities assured by the
last paragraph of Lemma \ref{zl1} become applicable  with $\theta= \theta(s_*)$, $\theta_1= |\theta(s_1)|$.
 Indeed, in the first case \br the inclusion \er \eqref{e_lastp} holds directly putting now $\theta=\theta(s_*)$. In the second one, if $\theta(s_1)>0$ (resp. $\theta(s_1)<0$) then $\theta(s)$ is non-decreasing (resp. non-increasing) and, then $\theta(s_*)>0$ (resp. $\theta(s_*)<0$), making  \eqref{ze17} (resp. \eqref{ze17b}) applicable. Summing up, we deduce from \eqref{ze17} and \eqref{ze17b}
\begin{equation}\label{eee}
\partial_{\rho} V(\rho(s_*),\theta(s_1))-\partial_{\rho} V(\rho(s_*),\theta(s_*))>0.
\end{equation}
\newline Nevertheless, 
inequalities (\ref{eee}) and (\ref{zeq}) imply
\begin{equation} \label{e20}
\begin{array}{rl}
\ddot{\rho}(s_*)<-\partial_{\rho} V(\rho(s_*),\theta(s_*)),
\end{array}
\end{equation}
in contradiction with the first inequality in (\ref{zyy}). $ \Box$

\subsection{Proof of the main result}
We will start by proving the following part of Theorem \ref{zt}.
\begin{proposition}\label{l5} Let $V$ be as in Theorem \ref{zt} with $n\geq 3$. For any $0<\theta_0< \theta_+ < \pi/(2n)$, there exists $\rho_0>1$
such that any solution
 $\gamma(s)=(\rho(s), \theta(s)$), $s\in [0,b)$ of \eqref{ze.14'} satisfying
 \begin{equation}
 \label{e_condicionesiniciales}
 |\theta(0)|<\theta_0,\quad(\hbox{plus} \quad \rho(0)> \rho_0 , \quad  \dot \rho(0)\geq 0 , \quad \dot{\theta}(0)=0)
 \end{equation}
remains in $D[\rho_0,\theta_+]$ for all $s\in [0,b)$.
 \end{proposition}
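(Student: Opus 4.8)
The plan is to argue by contradiction, first reducing the possible escape from $D[\rho_0,\theta_+]$ to an escape through the angular boundary $|\theta|=\theta_+$, and then ruling this out by controlling the \emph{total} angular excursion of the trajectory. To set up, I would fix an intermediate angle $\theta_-\in(\theta_0,\theta_+)$ and invoke Lemma \ref{zl1} to produce constants $\delta_0,A>0$ and a threshold $\rho_0>1$ (to be enlarged finitely often as the argument demands). The radial reduction is then immediate: as long as $|\theta(s)|<\theta_+$, the first equation of \eqref{ze.14'} together with \eqref{ze17bis} gives $\ddot\rho=\rho\dot\theta^2-\partial_\rho V>\delta_0 n\rho^{n-1}>0$, so $\dot\rho$ is nondecreasing and, since $\dot\rho(0)\geq0$, the function $\rho$ is strictly increasing with $\rho(s)>\rho(0)>\rho_0$. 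Hence $\rho$ can never drop back to $\rho_0$ while $\gamma$ stays in the angular strip, and the only conceivable exit from $D[\rho_0,\theta_+]$ is for $|\theta|$ to attain $\theta_+$.

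Suppose then, for contradiction, that $\bar s:=\inf\{s:|\theta(s)|=\theta_+\}<b$; by continuity $|\theta(\bar s)|=\theta_+$ while $|\theta|<\theta_+$ on $[0,\bar s)$. I would decompose $[0,\bar s]$ using the turning points $0=s_0<s_1<s_2<\dots$ of the angular motion (the zeros of $\dot\theta$, consecutive ones bounding a maximal interval on which $\theta$ is strictly monotone). On each $[s_j,s_{j+1}]$ the hypotheses of Lemma \ref{zl2} hold: $\dot\theta(s_j)=0$, $\dot\rho(s_j)\geq0$ by the radial step, $(\rho(s_j),\theta(s_j))\in D[\rho_0,\theta_+]$, and $\theta$ is monotone. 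Writing $\theta_j:=|\theta(s_j)|$, the lemma yields $\theta_{j+1}\leq\theta_j+A/\rho(s_j)$ (trivially so when the amplitude fails to grow), and the same estimate applied on the last, possibly incomplete, monotone interval bounds $|\theta(s)|$ for every $s<\bar s$. Telescoping gives $|\theta(s)|\leq|\theta(0)|+A\sum_{j\geq0}1/\rho(s_j)<\theta_0+A\sum_{j\geq0}1/\rho(s_j)$ for all $s<\bar s$; letting $s\uparrow\bar s$ would bound $\theta_+$ by the same quantity, so making this series sum smaller than $\theta_+-\theta_0$ produces the contradiction.

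The crux, and the step I expect to be the main obstacle, is to show that $\rho$ grows by at least a fixed factor $\mu>1$ over each oscillation, i.e.\ $\rho(s_{j+1})\geq\mu\,\rho(s_j)$ with $\mu$ independent of $j$; this is precisely the role of the exponential radial-growth estimate announced in Section \ref{s2.3}. The underlying mechanism is that the angular half-period shrinks like $\rho^{-(n-2)/2}$ (linearizing the restoring torque about $\theta=0$) while $\dot\rho$ grows like $\rho^{n/2}$ (integrating $\ddot\rho\geq\delta_0 n\rho^{n-1}$ as in Lemma \ref{l_ODEincomplete}), so the increment $\Delta(\log\rho)$ accrued over one swing is bounded below by a constant regardless of the amplitude. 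Quantifying this uniformly, without the symmetry enjoyed by the homogeneous case and while keeping firm control of the lower-order terms of $V$, is the delicate part of the argument. Granting it, $\sum_{j\geq0}1/\rho(s_j)\leq \rho_0^{-1}\,\mu/(\mu-1)$, which tends to $0$ as $\rho_0\to\infty$; choosing $\rho_0$ large enough that $A\,\mu/((\mu-1)\rho_0)<\theta_+-\theta_0$ forces $|\theta(s)|<\theta_+$ for all $s<\bar s$, contradicting $|\theta(\bar s)|=\theta_+$. This confines $\gamma$ to $D[\rho_0,\theta_+]$ and completes the proof.
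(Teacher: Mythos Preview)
Your radial reduction and the overall architecture (contradiction, exit only through $|\theta|=\theta_+$, telescoping the increments of amplitude via Lemma~\ref{zl2}) are exactly right and match the paper. The genuine gap is precisely the step you flag as the crux: the claim that $\rho(s_{j+1})\geq\mu\,\rho(s_j)$ with a fixed $\mu>1$ over \emph{all} consecutive turning points. Your half-period heuristic is suggestive, but note that it competes with the opposite effect you have not accounted for: between turning points the damping term $\tfrac{2\dot\rho}{\rho}\dot\theta$ in \eqref{ze.14'} is large (indeed $\dot\rho/\rho\sim\rho^{(n-2)/2}$ eventually), so the oscillation can become overdamped and the ``half-period'' notion may lose meaning altogether. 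More to the point, the only growth estimate actually available is Lemma~\ref{l5bis}, which gives $\rho(s)>\rho(s_0)e^{\overline\theta(s)/\Lambda}$ in terms of the \emph{angular length} $\overline\theta$. For a small-amplitude swing, $\overline\theta(s_{j+1})-\overline\theta(s_j)$ is small, so Lemma~\ref{l5bis} alone gives no uniform factor $\mu>1$ over all $j$.

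The paper closes this gap differently, and this is the idea missing from your sketch. It introduces an intermediate threshold $\theta_*\in(\theta_0,\theta_+)$ and restricts attention to the \emph{subsequence} $\{s_k\}$ of turning points at which $|\theta|$ exceeds $\theta_*$ and sets a new record. The key observation (the paper's Claim~0, based on Lemma~\ref{l_previo}) is that at any turning point with $|\theta|\in[\theta_0,\theta_+]$ one has $\ddot\theta$ strictly pointing toward $0$, so after $s_k$ the angle must re-enter $(-\theta_0,\theta_0)$ before another turning point with $|\theta|\geq\theta_0$ can occur. Hence each step $s_k\to s_{k+1}$ accrues angular length at least $\theta_*-\theta_0$, and now Lemma~\ref{l5bis} yields $\rho(s_{k+1})\geq\rho(s_1)\,e^{k(\theta_*-\theta_0)/\Lambda}$, giving the geometric series you want along this subsequence. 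Combined with the Lemma~\ref{zl2} bound (applied only when the end-amplitude lies in $[\theta_*,\theta_+]$, where the estimates of Lemma~\ref{zl1} are valid), this yields $|\theta(s)|<\theta_0+3(\theta_*-\theta_0)=\theta_+$. Working with this subsequence also sidesteps the problem that Lemma~\ref{zl2} is not set up for small-amplitude turning points, which your ``all turning points'' scheme would need.
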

With this aim, two technical bounds are provided before. The first one becomes straight\-forward from the expression of $\partial_\theta V$ in \eqref{zevpolar}, \eqref{ze3} (recall  the proof of Lemma~\ref{zl1}).
\begin{lemma} \label{l_previo} For any $\bar\delta> 1$ and $\theta_0\in (0,\pi/2)$ there exists $\rho_0>1$ such that
\[
\begin{array}{rl}
|\partial_\theta V(\rho,\theta)|\leq
\overline{\delta} n\rho(s)^{n} & \hbox{for all $\rho\geq\rho_0$, $\theta\in \R$}\\
\partial_\theta V(\rho,\theta)>0 & \hbox{for all $\rho\geq\rho_0$, $\theta\in [\theta_0,\pi/(2n)]$}
\\
\partial_\theta V(\rho,\theta)<0 & \hbox{for all $\rho\geq\rho_0$, $\theta\in [-\pi/(2n),-\theta_0]$.}
\end{array}
\]
\end{lemma}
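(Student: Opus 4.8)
The plan is to read off the explicit form of $\partial_\theta V$ and isolate its dominant monomial in $\rho$. Differentiating the polar expression \eqref{zevpolar} in $\theta$ (equivalently, reading the $\partial_\theta$-component of $-\nabla V$ in \eqref{ze3}) gives
\[
\partial_\theta V(\rho,\theta)= n\rho^{n}\sin(n\theta)+\sum_{m=1}^{n-1} m\lambda_m\rho^{m}\sin m(\theta+\alpha_m).
\]
The leading term $n\rho^{n}\sin(n\theta)$ has degree $n$ in $\rho$, while every remaining term has degree $m\leq n-1$; hence for large $\rho$ the leading term controls both the size and the sign of $\partial_\theta V$, and the whole lemma reduces to absorbing the lower-order sum. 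Throughout I set $nC:=\sum_{m=1}^{n-1}m\lambda_m$ (recall $\lambda_m\geq 0$ by Proposition \ref{p_polinomio_no_autonomo}), exactly as in the proof of Lemma \ref{zl1}.

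For the global bound, I bound each sine by $1$ in absolute value and factor out $n\rho^{n}$:
\[
|\partial_\theta V(\rho,\theta)|\leq n\rho^{n}+\sum_{m=1}^{n-1}m\lambda_m\rho^{m}= n\rho^{n}\Big(1+\sum_{m=1}^{n-1}\tfrac{m}{n}\lambda_m\rho^{m-n}\Big)\leq n\rho^{n}\Big(1+\tfrac{C}{\rho}\Big),
\]
using $\rho^{m-n}\leq 1/\rho$ for $m\leq n-1$ and $\rho\geq 1$. Since $\bar\delta>1$, choosing $\rho_0> C/(\bar\delta-1)$ makes $1+C/\rho<\bar\delta$ for all $\rho\geq\rho_0$, which gives the first inequality.

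For the sign statements, on $\theta\in[\theta_0,\pi/(2n)]$ one has $n\theta\in[n\theta_0,\pi/2]$, so $\sin(n\theta)\geq\sin(n\theta_0)>0$, whereas the lower-order sum is bounded in modulus by $nC\rho^{n-1}$ (using $\rho^{m}\leq\rho^{n-1}$ for $m\leq n-1$, $\rho\geq 1$). Therefore
\[
\partial_\theta V(\rho,\theta)\geq n\rho^{n}\sin(n\theta_0)-nC\rho^{n-1}=n\rho^{n-1}\big(\rho\sin(n\theta_0)-C\big),
\]
which is positive once $\rho> C/\sin(n\theta_0)$; the third inequality follows identically from $\sin(n\theta)\leq-\sin(n\theta_0)<0$ on $[-\pi/(2n),-\theta_0]$. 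Taking $\rho_0$ larger than $1$, $C/(\bar\delta-1)$ and $C/\sin(n\theta_0)$ secures all three estimates simultaneously. There is no genuine obstacle here: this is the same ``dominant leading monomial plus $O(\rho^{n-1})$ remainder'' estimate already exploited in Lemma \ref{zl1}; the only points to keep in mind are that the three bounds demand different thresholds on $\rho_0$ (so one takes the maximum) and that the sign intervals are nonempty precisely when $\theta_0<\pi/(2n)$, the statement being vacuous otherwise.
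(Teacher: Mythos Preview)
Your argument is correct and is precisely the ``dominant leading monomial plus $O(\rho^{n-1})$ remainder'' computation the paper has in mind; the paper itself offers no detailed proof, merely declaring the result straightforward from the expression of $\partial_\theta V$ in \eqref{zevpolar}, \eqref{ze3} and pointing back to the estimates in Lemma~\ref{zl1}. Your explicit thresholds $\rho_0>\max\{1,\ C/(\bar\delta-1),\ C/\sin(n\theta_0)\}$ and the observation that the sign assertions are vacuous when $\theta_0\geq\pi/(2n)$ fill in exactly the details the paper leaves to the reader.
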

For the second bound, let us denote by
\begin{equation}\label{aq4}
\overline{\theta}(s):=\int_{s_0}^s|\dot{\theta}(\sigma)|d\sigma,\quad s\in [s_0,b)
\end{equation}
the \br {\em angular length} \er  covered by such a trajectory starting at some suitable $s_0$ in its domain $[0,b)$.
Note that $\bar\theta$ is non-decreasing and it satisfies $\dot{\bar\theta}(s)= |\dot\theta(s)|$. \br It is also
$C^2$ except when $\dot\theta(s)=0$ and $\ddot\theta(s)\neq 0$ (thus, only at isolated points);
anyway, $\dot{\bar\theta}$ has one-sided derivatives elsewhere.
So, one can write $\dot{\bar\theta}= \int \ddot{\bar\theta}(s)ds$  (in fact, the zeroes
of  $\dot \theta(s)$  will not play any relevant role for $\dot{\bar\theta}$).
However, the  bounds for $\ddot{\bar\theta}(s)$ to be obtained below can be regarded as bounds for both one-sided derivatives of $\dot{\bar\theta}$ at each point. \er
Now, our aim is to bound the growth  of the angular length $\overline{\theta}$ by the growth of $\rho$. \br The following technical lemma will allow to bound the amplitude of consecutive oscillations (as a difference with other results, it will be applied only when $s_0>0$ and, thus, the assumption $\dot\rho(s_0)>0$ will not be restrictive.) \er

\begin{lemma}\label{l5bis}  For some $\rho_0$  big enough, any solution $\gamma: [s_0,b)\rightarrow \R^2$ (for $n\geq 3$)
starting at $D[\rho_0,\theta_+]$
with      $\dot\rho(s_0) > 0, \dot\theta(s_0)=0$
satisfies:
\begin{equation}\label{xx1}
\rho(s)> \rho(s_0)e^{\overline{\theta}(s)/\Lambda}
\qquad  \hbox{for all} \; s\in(s_0,b) \; \hbox{such that } \; \gamma([s_0,s])\subset D[\rho_0,\theta_+],
\end{equation}
where $\Lambda:=8/\cos(n\theta_+)$ ($> 1$).
\end{lemma}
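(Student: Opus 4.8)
The plan is to reduce the exponential estimate \eqref{xx1} to a pointwise differential inequality that compares the logarithmic growth rate of $\rho$ with the angular speed, namely $|\dot\theta(s)| < \Lambda^{-1}\,\dot\rho(s)/\rho(s)$ throughout the relevant interval. Once this is in hand, since $\dot{\overline\theta}(s)=|\dot\theta(s)|$ by the definition \eqref{aq4} and $\dot\rho/\rho=\tfrac{d}{ds}\ln\rho$, integrating from $s_0$ to $s$ (with $\overline\theta(s_0)=0$) yields $\overline\theta(s) < \Lambda\ln(\rho(s)/\rho(s_0))$, which is precisely \eqref{xx1} after exponentiating. Thus the whole task becomes: dominate the angular speed by the radial logarithmic derivative with the prescribed constant $\Lambda=8/\cos(n\theta_+)$.

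First I would record that, inside $D[\rho_0,\theta_+]$, the first equation of \eqref{ze.14'} together with \eqref{ze17bis} gives $\ddot\rho=\rho\dot\theta^2-\partial_\rho V\ge -\partial_\rho V>\delta_0 n\rho^{n-1}>0$; hence $\dot\rho$ is strictly increasing and, since $\dot\rho(s_0)>0$, both $\rho$ and $\dot\rho$ remain positive. The natural object controlling the angular motion is the angular momentum $L:=\rho^2\dot\theta$: multiplying the second equation of \eqref{ze.14'} by $\rho^2$ gives $\dot L=-\partial_\theta V$, while $\dot\theta(s_0)=0$ forces $L(s_0)=0$. Therefore $|L(s)|\le \int_{s_0}^s|\partial_\theta V(\rho(\sigma),\theta(\sigma))|\,d\sigma\le \bar\delta\,n\int_{s_0}^s\rho(\sigma)^n\,d\sigma$, the last step by the first bound of Lemma~\ref{l_previo} (which holds for all $\theta$).

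The key estimate then converts the integral of $\rho^n$ into the algebraic quantity $\rho\dot\rho$, exploiting the convexity just obtained: since $\tfrac{d}{ds}(\rho\dot\rho)=\dot\rho^2+\rho\ddot\rho\ge \rho\ddot\rho\ge \delta_0 n\rho^n$, integrating and discarding the nonnegative boundary term at $s_0$ gives $\int_{s_0}^s\rho^n\,d\sigma\le (\delta_0 n)^{-1}\rho(s)\dot\rho(s)$. Combining the two bounds, $|L(s)|\le(\bar\delta/\delta_0)\,\rho(s)\dot\rho(s)$. I would then fix the free constants so that $\bar\delta/\delta_0<\Lambda$: choosing $\delta_0$ close to $\cos(n\theta_+)$ in Lemma~\ref{zl1} and $\bar\delta$ close to $1$ in Lemma~\ref{l_previo} makes $\bar\delta/\delta_0$ close to $1/\cos(n\theta_+)$, comfortably below $8/\cos(n\theta_+)=\Lambda$ (the factor $8$ leaves ample slack). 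Consequently $\rho^2|\dot\theta|=|L|<\Lambda\rho\dot\rho$, i.e. $|\dot\theta|<\Lambda\,\dot\rho/\rho$, and integrating as in the first paragraph (a strict pointwise inequality between continuous functions integrates to a strict inequality) closes the argument.

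I expect the only genuine obstacle to be the conversion step $\int\rho^n\le(\delta_0 n)^{-1}\rho\dot\rho$ and the bookkeeping of constants ensuring $\bar\delta/\delta_0<\Lambda$; everything else is a direct manipulation of the equations of motion and of the two previously established bounds. It is worth stressing that this route never invokes the energy identity \eqref{ze18_aut}: a naive split of the kinetic energy into radial and angular parts is circular, since it would require $\dot\rho>\rho|\dot\theta|$ to prove $\dot\rho>\Lambda^{-1}\rho|\dot\theta|$. Instead the argument rests on the convexity of $\rho$ and on controlling the angular momentum generated by $\partial_\theta V$, which is exactly where the harmonicity-driven bounds of Lemmas~\ref{zl1} and~\ref{l_previo} enter.
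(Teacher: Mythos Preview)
Your argument is correct and constitutes a genuinely different proof from the paper's. There is one harmless slip in the opening paragraph: you state the target pointwise inequality as $|\dot\theta|<\Lambda^{-1}\dot\rho/\rho$, but what you actually prove (and what integrates to \eqref{xx1}) is $|\dot\theta|<\Lambda\,\dot\rho/\rho$; the integrated conclusion $\overline\theta<\Lambda\ln(\rho/\rho(s_0))$ that you write is the correct one, so this is only a typo in the exponent.

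The paper argues at second order: it sets $f(s)=\rho(s_0)e^{\overline\theta(s)/\Lambda}$, checks that $\ddot f=f\bigl(\dot{\overline\theta}^2/\Lambda^2+\ddot{\overline\theta}/\Lambda\bigr)$, and shows (via \eqref{ze17bis} and Lemma~\ref{l_previo}) that $\ddot\rho>\rho\,h$ with $h\ge \dot{\overline\theta}^2/\Lambda^2+\ddot{\overline\theta}/\Lambda$; a standard first-touching contradiction then gives $\rho>f$. Your route is first order: you work with the angular momentum $L=\rho^2\dot\theta$, integrate the torque bound $|\dot L|\le\overline\delta\,n\rho^n$, and convert $\int\rho^n$ into $\rho\dot\rho$ via the convexity estimate $\tfrac{d}{ds}(\rho\dot\rho)\ge\delta_0 n\rho^n$. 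This yields $|\dot\theta|<(\overline\delta/\delta_0)\dot\rho/\rho$ directly, with $\overline\delta/\delta_0<\Lambda$ for the natural choices (indeed the paper's own values $\delta_0=\cos(n\theta_+)/2$, $\overline\delta=2$ already give the ratio $4/\cos(n\theta_+)<\Lambda$). Your approach is arguably cleaner: it sidesteps the regularity discussion of $\ddot{\overline\theta}$ at zeros of $\dot\theta$ that the paper has to address, and it makes transparent where the hypothesis $\dot\rho(s_0)>0$ enters (both to drop the boundary term $\rho(s_0)\dot\rho(s_0)>0$ and to secure strictness at $s=s_0$). The paper's second-order comparison, on the other hand, is closer to standard ODE comparison machinery and generalises more mechanically if one later perturbs the right-hand side.
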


\noindent {\em Proof}. From the first equation \eqref{ze.14'} and Lemma \ref{zl1} for, say,  $\delta_0=\cos(n\theta_+)/2$, there exists $\rho_0>1$  such that the following improvement of \eqref{zyy}   holds:
\begin{equation}\label{aq1}
\begin{array}{rl}
\ddot{\rho}(s)= & \rho(s)\dot{\theta}(s)^2-\partial_\rho V(\rho(s),\theta(s))  \\ >
& \rho(s)\dot{\theta}(s)^2 + \delta_0 n\rho(s)^{n-1}  = \rho(s)(\dot{\bar{\theta}}(s)^2 + \delta_0 n\rho(s)^{n-2})
\end{array}
\end{equation}
(whenever remaining in $D[\rho_0,\theta_+]$).
\br By using $\dot\rho(s)>0$
(see \eqref{zedr}) and the second equation in \eqref{ze.14'}, the  inequalities
\begin{equation}\label{aq2}
\ddot{\overline{\theta}}(s)\leq
|\partial_\theta V(\rho(s),\theta(s))|/\rho(s)^2\leq
\overline{\delta} n\rho(s)^{n-2}
\end{equation}
hold for, say,  $\overline{\delta}=2$. Indeed, for the first inequality, use that,  when $\dot\theta(s)>0$ (resp. $\dot\theta(s)<0$),  $\ddot{\bar\theta}=\ddot{\theta}$ (resp. $\ddot{\bar\theta}=-\ddot{\theta}$); then, when  $\dot\theta(s)=0$, either $\ddot{\overline{\theta}}(s)=0$ or the inequality holds for the lateral derivatives of $\dot{\overline{\theta}}(s)$. The second inequality is straightforward from Lemma \ref{l_previo}. \er

The choice of $\Lambda$ in the statement of the lemma implies $\overline{\delta}/\Lambda<\delta_0$, so, put $c_0:=(\delta_0-\overline{\delta}/\Lambda)n\rho_{0}^{n-2}$ and
\[
h(s):={\rm max}\left\{\frac{\dot{\overline{\theta}}(s)^2}{\Lambda^2}+\frac{\ddot{\overline{\theta}}(s)}{\Lambda},c_0\right\} (>0), \qquad \qquad \forall s\in [0,b).
\]
As $\delta_0 n\rho(s)^{n-2}>c_0$, inequalities (\ref{aq1}) and \eqref{aq2} yield directly
\begin{equation}\label{aq3}
\ddot{\rho}(s)\geq \rho(s)(\dot{\overline{\theta}}(s)^2+\delta_0 n\rho(s)^{n-2})>\rho(s) h(s),\quad\forall s\in [s_0,b),\end{equation}
whenever in  $D[\rho_0,\theta_+]$.
\br Formula  \eqref{xx1} follows by comparing (\ref{aq3}) with the function
\[
f(s)=\rho(s_0)e^{\overline{\theta}(s)/\Lambda} ,
\]
which has the same regularity as $\bar\theta$ and it is a solution of the
ODE
\begin{equation}\label{aq5}
\ddot{f}(s)=f(s)\left(\frac{\dot{\overline{\theta}}(s)^2}{\Lambda^2}+\frac{\ddot{\overline{\theta}}(s)}{\Lambda}\right),\quad f(s_0)=\rho(s_0),\;\; \dot{f}(s_0)=0.
\end{equation}
\er Indeed for $s>s_0$ close to $s_0$, the strict inequality (\ref{xx1}) (i.e.,
$\rho(s)>f(s)$), holds just because $\dot\rho(s_0)> \dot f(s_0) (=0)$. So, assume by contradiction that $s_1>s_0$ is the first point where $\rho(s_1)=f(s_1)$ and, thus,
$\dot{\rho}(s_1)\leq \dot{f}(s_1)$.
\br Nevertheless,
$$\dot{\rho}(s_1)-\dot{f}(s_1)= \int_{s_0}^{s_1}(\ddot \rho-\ddot f)(s)ds + \left(\dot{\rho}(s_0)- \dot{f}(s_0)\right)>0$$
the last equality as  $\ddot{\rho}(s)> \ddot{f}(s)$ because of (\ref{aq3}) and (\ref{aq5}) (as still $\rho(s)>f(s)$ for $s\in (s_0,s_1)$), a contradiction. \er
$\Box$

\bigskip

\noindent {\em Proof of Proposition \ref{l5}}. Define $\theta_*$ and choose $\theta_-$ so that
\begin{equation}\label{e_f1}
\theta_*-\theta_0 = ( \theta_+ - \theta_0 ) /3,
\qquad
0<\theta_0<\theta_*<\theta_-<\theta_+  \; \; (<\pi/(2n)).
\end{equation}
Take $A>0, \rho_0>1$ as in the last part of Lemma \ref{zl1} and Lemma \ref{zl2} (for the prescribed $\theta_+, \theta_-$); moreover (recall footnote \ref{foot p12lema4.3}), take
$\rho_0$ also satisfying Lemma~\ref{l_previo} (for the prescribed $\theta_0$), Lemma \ref{l5bis} and the following convenient bound:
\begin{equation}
\label{e_f2}
\frac{S}{\rho_0}<\theta_*-\theta_0 ,
\;\; \hbox{for} \;
S:= A \sum_{k=0}^\infty
e^{-(\theta_*-\theta_0)k/\Lambda}
\;\; \hbox{(in particular,}
\;\;
\frac{A}{\rho_0}< \theta_*-\theta_0 \, \hbox{)}
\end{equation}
 with $\Lambda>0$ as in Lemma \ref{l5bis}
(notice $A<S<\infty$).
 A first property of any trajectory fulfilling \eqref{e_condicionesiniciales} is the following.

\bigskip
\noindent {\em Claim 0}. If $\theta(s_0)\in [\theta_0,\theta_+]$ (resp. $\theta(s_0)\in [-\theta_+,-\theta_0]$) and
$\dot\theta(s_0)=0$, then $\ddot\theta(s_0)<0$ (resp. $>0$). Moreover,  whenever $\gamma(s)$ does not enter in $D[\rho_0, \theta_0]$, necessarily $\dot\theta(s)<0$ (resp. $>0$) and $\theta$ decreases (resp. increases) beyond  $s_0$.
In particular, all the zeroes of $\dot\theta$ with $\theta$ in $[\theta_0, \theta_+]$ are strict relative maxima (resp. minima) and, thus, isolated.

\smallskip

{\em Proof of Claim 0.} The inequality for $\ddot\theta(s_0)$ follows directly from the trajectory equations (the second one in \eqref{ze.14'})  plus Lemma \ref{l_previo}.
 So, just after $s_0$ one has $\dot\theta(s)<0$ (resp. $>0$) and, if $\dot\theta(s)$ vanished at some $s'>s_0$ (before the curve has arrived at $D[\rho_0, \theta_0]$), we would obtain again the same sign for $\ddot\theta(s')$ (that is, $s'$ is a strict critical point), which is a contradiction. $\Box$

\bigskip

 Consider the increasing sequence $\{\bar s_m: m=1,\dots \}$ of the zeroes  in Claim~0
and take the  subsequence $\{s_k\}_k$  obtained by imposing: $\gamma([0,s_m])\subset D[\rho_0,\theta_+]$ for all $m$,
 $s_1$ is the first $\bar s_m$ such that  $|\theta(s_1)|>\theta_*$ and, inductively, $s_{k+1}$ is the first $\bar s_m$
 such that $|\theta(s_{k+1})|>|\theta(s_k)|$. Now, consider the following cases.

\bigskip
\noindent {\em Claim 1}. If either there is no such a $s_k$, or  there are infinitely many, $\gamma$ remains in $D[\rho_0,\theta_+]$, as required.
\smallskip

{\em Proof of Claim 1.} In the first case,
if $|\theta(s)|\leq \theta_*$ for all $s$, we are done. Otherwise, there exists some $s_*$ such that $|\theta(s_*)|=\theta_*$ and $\gamma$ escapes $D[\rho_0,\theta_*]$; moreover, $|\theta(s)|$ grows
strictly  beyond $s_*$ (as $\dot\theta(s)$ cannot vanish outside $D[\rho_0,\theta_*]$ by Claim 0). Let $s_0$ be the last critical point of $\dot\theta$ in $[0,s_*]$ (eventually, $s_0=0$). For any $s\in (s_*,b)$,
\begin{equation}\label{e_claim1}
(0<)\, \; |\theta(s)|-|\theta(s_*)| \leq  |\theta(s)|-|\theta(s_0)|\leq A/\rho(s_0)\leq A/\rho_0 < \theta_*-\theta_0
\end{equation}
(use $|\theta(s_0)|\leq |\theta(s_*)|=\theta_*\leq |\theta(s)|$, Lemma~\ref{zl2}, the fact that $\rho$ is increasing \eqref{zedr}, and the bound
\eqref{e_f2}).
So, $|\theta(s)|\leq 2\theta_*-\theta_0<\theta_+$
and $\gamma$ remains in $D[\rho_0,\theta_+]$, as required. In the case that there are infinitely many $s_k$, necessarily $\{s_k\}\nearrow b^* \leq b$. Assuming $b^* < b$ (otherwise we are done), by continuity, $\theta(b_*)\in [\theta_*,\theta_+]$ and $\dot \theta(b_*)=0$, in contradiction with Claim 0. $\Box$
\bigskip

So, $\{s_k\}_k$ will be assumed to be finite. The growth of $\theta$ between consecutive  $s_k \in [0,b)$ is estimated next.

\bigskip

\noindent {\em Claim 2.} The first term $s_1$ of the 
sequence  $\{s_k\}_k$ satisfies:
\begin{equation}
\label{e_claim2} |\theta(s_1)|-|\theta(0)|
\leq A/\rho_0
<\theta_*-\theta_0.
\end{equation}

\smallskip

{\em Proof of Claim 2}. Just notice that a bound as in
\eqref{e_claim1} holds until $s=s_1$.  $\Box$

\bigskip

\noindent {\em Claim 3.} For $s_k , s_{k+1}$ as above:
$ | \theta(s_{k+1})|-|\theta(s_k)| \leq A/\rho(s_k)
$

\smallskip

{\em Proof of Claim 3}. By construction $\theta_* < |\theta(s_k)|<|\theta(s_{k+1})|$ and we have two relevant cases (the others are analogous): (a) $\theta_* < \theta(s_k)<\theta(s_{k+1})$, and (b)~
$\theta(s_{k+1})<-\theta(s_k)$ with $\theta_* <
\theta(s_k)$. In the case (a), as $s_k, s_{k+1}$
are local maxima (by Claim 0), there exists a last
point $s_0\in (s_k, s_{k+1})$ with $\dot\theta (s_0)=0$. Thus,
 $\theta$ is monotonous in $[s_0,s_{k+1}]$ and Lemma \ref{zl2} yields $|\theta(s_{k+1})|-|\theta(s_0)|  \leq A/\rho(s_0) ( \leq A/\rho(s_k))$. So, the required bound follows
because $|\theta(s_0)|\leq |\theta(s_k)|$
(otherwise, $s_{k+1}\leq s_0$), that is,
$|\theta(s_{k+1})|-|\theta(s_k)| \leq  |\theta(s_{k+1})|-|\theta(s_0)|$. For the case (b),  either $\theta(s)$ is monotonic in $[s_k,s_{k+1}]$ and the result follows directly from Lemma \ref{zl2}, or it has a last critical point $s_0\in (s_k, s_{k+1})$, and the result follows as in the previous case. $\Box$

\bigskip

\noindent {\em Claim 4.} Let $s_{k_{+}}, k_{+}\in \NN$, be the last term of $\{s_k\}_k$. For any $s\in (s_{k_{+}}, b)$:
\begin{equation}
\label{e_claim4} |\theta(s)|-|\theta(s_{k_{+}})| \leq A/\rho(s_{k_{+}} ) \; <\theta_*-\theta_0.
\end{equation}

\smallskip

{\em Proof of Claim 4}. Assume  $|\theta(s)|>|\theta(s_{k_{+}})|$ (otherwise, it is trivial). Reasoning as
\br above \er there is a last point  $s_0\in [s_{k_{+}}, s)$ such that  $\dot\theta(s_0)=0$ \br (and $|\theta(s_0)|\leq |\theta(s_{k_+})|$). \er Then, $\theta$ is monotonous in $[s_0,s]$ and Lemma \ref{zl2} gives the bound.  $\Box$

\bigskip

In order to estimate the total increase of $\theta$ along consecutive pairs $s_k, s_{k+1}$ in Claim 3, the angular length in Lemma \ref{l5bis} will be used. From Claim 0, when $\gamma$ moves  from $s_k$ to $s_{k+1}$ one has $\overline{\theta}(s_{k+1})-\overline{\theta}(s_k)\geq \theta_*-\theta_0$. Therefore, inductively
\[
\mu k\leq \overline{\theta}(s_{k+1}),\quad\hbox{ $\forall k\in \{1,\dots, {k_{+}}-1\}$,}
\qquad\hbox{where $\mu:=\theta_*-\theta_0 \; (>0)$.}
\]
Thus, for $\Lambda>0$ as in Lemma \ref{l5bis}, and $s_1$ the first term in the sequence $\{s_k\}_k$,
$$
 \rho(s_1) e^{(\mu k)/\Lambda}\leq  \rho(s_1)e^{\overline{\theta}(s_{k+1})/\Lambda}
\leq \rho(s_{k+1}),$$
the last inequality by (\ref{xx1}) applied to $s_0=s_1$ (recall that $s_1>0$, thus $\dot\rho(s_1)>0$, and Lemma \ref{l5bis} is appliable).
That is, $$\frac{A}{\rho(s_{k+1})}\leq \frac{A}{\rho(s_1)} e^{-(\mu k)/\Lambda} < \frac{A}{\rho_0} e^{-(\mu k)/\Lambda}, \qquad \forall k\in\{0,1,\dots,{k_{+}}-1\},$$
and taking into account Claim 3 plus the expression of $S$ in \eqref{e_f2}
\begin{equation} \label{e_last}
 |\theta(s_{k_{+}})|-|\theta(s_1)| = \sum_{k=1}^{k_{+} -1}  (| \theta(s_{k+1})|-|\theta(s_k)|) \leq \sum_{k=1}^{k_{+} -1} \frac{A}{\rho(s_k)}<\frac{S}{\rho_0}<\theta_*-\theta_0.
\end{equation}
So, adding  \eqref{e_claim2}, \eqref{e_claim4}, \eqref{e_last} and using \eqref{e_f1},
$$
|\theta(s)| < \br |\theta(0)|+(\theta_+-\theta_0)
= \er
\theta_+-(\theta_0-
|\theta(0)|)
<\theta_+,
$$
and $\gamma$ must remain in  $D[\rho_0,\theta_+]$, as required. $ \Box$

\bigskip

\noindent {\em Proof of Theorem \ref{zt}.}
Taking into account Remark \ref{r_simplificayrefinahomog}, focus on $k=0$. Choose  $\theta_+\in (\theta_0,\pi/(2n))$ and take $\rho_0$ as in Proposition \ref{l5}. Then $\gamma$ remains in $D[\rho_0,\theta_+]$ and, moreover, the bound \eqref{ze17bis} in Lemma \ref{zl1} holds for some $\delta_0>0$. So, from the first equation in (\ref{ze.14'}) we have (as in \eqref{zyy}):
\begin{equation}\label{ff}
\ddot{\rho}(s)=\rho(s)\dot{\theta}(s)^{2}-\frac{\partial
V}{\partial \rho}(\rho(s),\theta(s))>\delta_0 n\rho(s)^{n-1},
\end{equation}
which is incomplete because, by hypothesis, $n\geq 3$ and, so, Lemma \ref{l_ODEincomplete} (i) applies. $\Box$

\begin{remark}{\em  The proof of Proposition \ref{l5} provides further information about the trajectories. For example,  the obtained trajectories in $D[\rho_0,\theta_+]$ may present infinite oscillations and its amplitude can be also bounded by using formula \eqref{e_last} (with infinite terms).
}\end{remark}



\subsection{The non-autonomous case} \label{s3.4}
Next, let us consider the case  $V\equiv V(z,u)$.
Taking into account Proposition \ref{p_polinomio_no_autonomo},
we can choose  $u_0\in \R, \kappa_0>0$ such that the polar
expression of the potential  \eqref{e_vpolar} is valid in
$(u_0-\kappa_0, u_0+\kappa_0)$ with $C^1$ functions and $
\lambda(u_0)\neq 0$. Applying Remark \ref{r_v_simpliifcation} as in the autonomous case, we also assume
\begin{equation}\label{e_u0_normailized}
\alpha(u_0)=0, \qquad \lambda(u_0)>1, \qquad \lambda(u)\geq 1, \quad \forall u\in I= (u_0-\kappa_0, u_0+\kappa_0).\end{equation}
The expression of the gradient $\nabla_zV$ (with respect only to  $z$) is completely analogous to \eqref{ze3}, up to the dependence of $\lambda, \lambda_m, \alpha_m$ with $u$ and the appearance of $\alpha(u)$. Additionally, one has,
\begin{equation}\label{fff}
\begin{array}{c}
\partial_u V=[-\lambda'(u)\cos n(\theta + \alpha(u))+n\lambda(u)\sin n(\theta + \alpha(u))\alpha'(u)]\rho^n\qquad\qquad\qquad
\\ \qquad -\sum_{m=1}^{n-1}[\lambda'_m(u)\cos m(\theta+\alpha_{m}(u))-\lambda_m(u)m\sin m(\theta+\alpha_{m}(u))\alpha'_m(u)]\rho^m.
\end{array}
\end{equation}
Reasoning as in the autonomous case,  the points where $-\nabla_zV$ points \br out \er the infinity radial direction are determined by $n$ angles $\vartheta_{k}(\rho,u)$ (which cancel the coefficient of
$\partial_\theta$ and make positive the one of $\partial_\rho$) for large $\rho$ satisfying:
\[
\vartheta_{k}(\rho,u)\rightarrow \thk(u):=\frac{2\pi k-\alpha(u)}{n},\;\;
k=0,\ldots,n-1,\qquad \hbox{if $\rho\rightarrow\infty$.}
\]
Stressing the comparison with Theorem
\ref{zt}, now our aim is to prove the following:
\begin{theorem}\label{t}
Let $V(z,u)$ be a non-autonomous $z$-harmonic potential which is polynomially but not quadratically polynomially $u$-bounded, and choose any $u_0\in\R$, $\kappa_0>0$ such that the polar expression of the potential \eqref{e_vpolar} holds for $I=(u_0-\kappa_0, u_0+\kappa_0)$ with leading term of degree $n\geq 3$ and normalization \eqref{e_u0_normailized}.

For each
 $\theta_0, \theta_+\in (0,\pi/(2n)), \theta_0< \theta_+$
and $k\in\{0, ...,n-1\}$ there exists some
$\rho_0>0$ such that any $V$-trajectory $\gamma: [u_0,b)\rightarrow \R^2$ inextensible to the right
with initial conditions at $s=u_0$ as those in Theorem \ref{zt} for $s=0$ (formula \eqref{e_theorem_zt}) and, additionally,
\begin{equation}\label{condicionadicional}
\dot{\rho}(\br u_0 \er) \bb > 0 \eb
\end{equation}
 remains in $D_k[\rho_0,\theta_+]$ and, moreover,  it is incomplete (being $b< u_0+\kappa_0$).

Therefore, a $z$-harmonic polynomial potential which is polynomially $u$-bounded
has complete trajectories if and only if it is a $u$-dependent polynomial in $z$ of  degree at most 2.
\end{theorem}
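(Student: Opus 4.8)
The plan is to re-run, \emph{mutatis mutandis}, the three-step scheme used for Theorem~\ref{zt}, exploiting that on the compact closure $\bar I$ of $I=(u_0-\kappa_0,u_0+\kappa_0)$ all the data of \eqref{e_vpolar} are uniformly controlled. After reducing to $k=0$ by a rotation (Remark~\ref{r_simplificayrefinahomog}) and imposing the normalization \eqref{e_u0_normailized}, the coefficients $\lambda,\lambda_m,\alpha,\alpha_m$ are $C^1$ on $\bar I$; hence the leading $\lambda(u)\ge 1$ is bounded below away from zero while $|\lambda_m|,|\lambda_m'|,|\alpha_m'|$ are bounded above. First I would establish the uniform-in-$u$ analogue of Lemma~\ref{zl1}. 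Since the gradient \eqref{ze3} has exactly the same shape for each fixed $u$ (only the scalars now carry a $u$), the very same majorizations go through with constants \emph{independent} of $u\in\bar I$, because they only used a lower bound for the leading coefficient and upper bounds for the lower-order ones. In particular \eqref{ze17bis}--\eqref{ze17b} hold for all $(\rho,\theta,u)$ with $\rho\ge\rho_0$, $|\theta|<\theta_+$, $u\in\bar I$, and with the same $\delta_0,A,\delta$ (here $n=n_0\ge 3$).

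The genuinely new point is the confinement step (the analogue of Lemmas~\ref{zl2}, \ref{l5bis} and Proposition~\ref{l5}), where energy conservation \eqref{ze18_aut} is no longer available and must be replaced by the balance \eqref{e18} with its extra term $\int_{s_0}^{s}\partial_u V\,d\sigma$. From \eqref{fff} this injection is again of order $\rho^n$, but with a \emph{bounded} coefficient $-\lambda'(u)\cos n(\theta+\alpha)+n\lambda(u)\sin n(\theta+\alpha)\alpha'(u)+O(\rho^{-1})$. The idea is to carry the proof of Lemma~\ref{zl2} through while tracking this injection in the energy $F$ of the $\theta(s_1)$-projection: the contradiction was produced by detecting a point $s_*$ with $\dot F(s_*)<0$ while the dynamics force $\ddot\rho>0$, so one must now show that the $\partial_u V$ contribution to $\dot F$ cannot overturn the sign extracted from the restoring difference $V(\rho,\theta_1,s)-V(\rho,\theta,s)$, which is bounded below by $\delta(\theta_1-\theta)\rho^n$ via \eqref{ze16}--\eqref{ze16bis}. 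This is where the time window enters decisively: enlarging $\rho_0$ forces the motion to blow up in a $u$-span that can be pushed below any prescribed length (the refined part of Lemma~\ref{l_ODEincomplete}(i)), and over such a short span the accumulated injection $\int\partial_u V$ stays dominated by the restoring gain. The exponential comparison of Lemma~\ref{l5bis}, which used only the radial inequality \eqref{aq1} and the angular bound of Lemma~\ref{l_previo}, survives once its hypothesis $\dot\rho(s_0)>0$ is met; this is guaranteed at the start by the extra assumption $\dot\rho(u_0)>0$ in \eqref{condicionadicional} and thereafter by $\ddot\rho>0$.

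With confinement in hand, incompleteness follows as before: inside $D[\rho_0,\theta_+]$ the radial equation gives $\ddot\rho(s)=\rho\dot\theta^2-\partial_\rho V(\rho,\theta,s)\ge \delta_0 n\rho^{n-1}$ \emph{uniformly} in $u\in\bar I$ (here $\lambda(u)\ge 1$ is essential), so Lemma~\ref{l_ODEincomplete}(i) yields $b<\infty$; choosing $\rho_0$ large enough forces $b<u_0+\kappa_0$, whence the whole trajectory stays inside $I$ and the local normalization \eqref{e_u0_normailized} is never violated. Finally, the concluding equivalence combines the two implications: if $V(\cdot,u)$ has degree at least $3$ for some $u$ then, by lower semicontinuity of the degree and density of $\mathcal{D}$ from Proposition~\ref{p_polinomio_no_autonomo}, one may select $u_0\in\mathcal{D}$ with leading degree $n_0\ge 3$ and the above produces incomplete trajectories; conversely, if $V(\cdot,u)$ is an at most quadratic polynomial for every $u$ then $H=-V$ is quadratically polynomially $u$-bounded and completeness follows from \cite[Theorem~2]{CRS} (item~(a)).

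The hardest part will be the self-consistency of the confinement estimate under the lost energy conservation, i.e. proving that the $\partial_u V$-injection of order $\rho^n$ never breaks the monotone-energy mechanism of Lemma~\ref{zl2} nor the escape-preventing bound of Proposition~\ref{l5}. The delicate point is that both the injection and the angular restoring force scale like $\rho^n$, so one cannot argue by absolute smallness of $\partial_u V$; instead the argument must exploit that confinement and blow-up occur in an arbitrarily short $u$-interval (tunable through $\rho_0$), over which $\int\partial_u V$ is quantitatively smaller than the gain furnished by the asymptotic-asymmetry bounds \eqref{ze16}--\eqref{ze17b}. Keeping all these thresholds compatible with the single fixed window $I$, so that $\rho_0$ may be enlarged freely without re-opening the earlier choices (cf. footnote~\ref{foot p12lema4.3}), is the main bookkeeping obstacle.
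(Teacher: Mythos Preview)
Your overall three-step scheme, the uniform-in-$u$ version of Lemma~\ref{zl1}, the final blow-up via Lemma~\ref{l_ODEincomplete}(i), and the concluding equivalence using Proposition~\ref{p_polinomio_no_autonomo} and \cite{CRS} are all correct and match the paper. The genuine gap is in the heart of the argument, the non-autonomous analogue of Lemma~\ref{zl2}.

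You propose to tame the $\partial_uV$ contribution by the \emph{shortness of the $u$-window}: enlarge $\rho_0$, force blow-up in time $<\kappa_0$, and then argue that $\int\partial_uV$ is dominated by the restoring gain $\delta(\theta_1-\theta)\rho^n$. This does not close. Over the interval $[s_0,s_1]$ the radius $\rho$ itself is blowing up, so $\int_{s_0}^{s_1}|\partial_uV|\,d\sigma$ involves $\rho(\sigma)^n$ with $\rho(\sigma)$ potentially far larger than $\rho(s_0)$; the smallness of $s_1-s_0$ does not compensate. Moreover, the contradiction in Lemma~\ref{zl2} is produced \emph{pointwise} at a mean-value time $s_*$ by comparing $\ddot\rho(s_*)$ to $-\partial_\rho V$, and in the non-autonomous version the extra term that appears in $\dot F(s_*)$ is
\[
\frac{1}{\dot\rho(s_*)}\bigl(\partial_uV(\rho(s_*),\theta(s_1),s_*)-\partial_uV(\rho(s_*),\theta(s_*),s_*)\bigr),
\]
a pointwise object, not an integral; a time-window bound does not address it.

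The paper's mechanism is different and this is precisely where the extra hypothesis \eqref{condicionadicional} is actually spent (not, as you suggest, merely to seed $\dot\rho(s_0)>0$ for Lemma~\ref{l5bis}). First, the ``rectified'' energy $F(s)=\tfrac12\dot\rho^2+V(\rho,\theta(s_1),s)-\int_{s_0}^{s}\partial_uV\,d\sigma$ is used, so that $F(s_1)-F(s_0)<0$ follows exactly as in the autonomous case (no integral to estimate at this step). Second, at $s_*$ one needs a \emph{lower} bound on $\dot\rho(s_*)$: from $\ddot\rho\ge\delta_0 n\rho^{n-1}$ one gets $\frac{d}{ds}\dot\rho^2\ge 2\delta_0\frac{d}{ds}\rho^n$, and choosing $\delta_0$ small enough that $\dot\rho(u_0)^2>2\delta_0\rho(u_0)^n$ (possible because $\dot\rho(u_0)>0$) propagates this to $\dot\rho(s_*)^2>\delta_0\,\rho(s_*)^3$. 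Third, Lemma~\ref{zl1} is supplemented with new inequalities of the form
\[
\rho^{-3/2}\,\bigl|\partial_uV(\rho,\theta_1,u)-\partial_uV(\rho,\theta,u)\bigr|
\;<\;\sqrt{\delta_0}\,\bigl(\partial_\rho V(\rho,\theta_1,u)-\partial_\rho V(\rho,\theta,u)\bigr),
\]
valid because both sides carry the same factor $(\theta_1-\theta)$ and the left scales like $\rho^{n-3/2}$ against $\rho^{n-1}$ on the right. Combining the last two facts absorbs the dangerous term into the $\partial_\rho V$ difference and the autonomous contradiction goes through. In short: the competition is resolved not by small elapsed time but by large radial speed, $\dot\rho\gtrsim\rho^{3/2}$, together with the matched $(\theta_1-\theta)$ structure of the $\partial_uV$ difference.
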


\begin{remark}\label{dificultadnoautonomo}{\em
(1) The density of the set $\mathcal{D}\ni u_0$ proven in Proposition \ref{p_polinomio_no_autonomo} makes apparent that, as in the autonomous case, \br   a  multiplicity of \er incomplete trajectories will be obtained.

(2) Recall that \eqref{condicionadicional} is just the strengthening of the non-strict inequality assumed in formula \eqref{e_theorem_zt} into a strict one. This additional hypothesis stresses the unique step in the non-autonomous
case with no analog in the autonomous one, namely, the Claim at the proof of
 Lemma~\ref{l2}. 
 A more careful analysis would allow one to consider this and other cases. Indeed, Theorems \ref{zt} and \ref{t} would hold not only in the case $\dot \theta(u_0)=0$ but also for values of $\dot \theta(u_0)$ small enough for the required technical bounds. Then, the case $\dot\rho(u_0)=0$ would also hold because the value of $\rho_0$ will be so big that $-\partial_\rho V>0$ on all $D_k[\rho_0, \theta_+]$ and, therefore, from the equality $\dot\rho(u_0)=0$ one obtains $\dot\rho(\bar u_0)>0$ for $\bar u_0>u_0$ arbitrarily close to $u_0$ and, so, with $\dot \theta(\bar u_0)$ arbitrarily  small. However, it is not our purpose here to yield sharp estimates of all the possible incomplete trajectories, but only to show the existence of some of them.
\eb
}\end{remark}
As in the autonomous case, we will focus only on the case  $k=0$.
Thus, the system of ODE's to be considered is similar to \eqref{ze.14'}, namely, for $n\geq 3$:
\begin{equation}\label{e.14'}
\left\{\begin{array}{l} \ddot{\rho}(s)-\rho(s)\dot{\theta}^{2}(s)=-\partial_\rho V(\rho(s),\theta(s),s) \\
\ddot{\theta}(s)+\frac{2}{\rho(s)}\dot{\rho}(s)\dot{\theta}(s)=-\frac{1}{\rho(s)^2} \partial_{\theta}V(\rho(s),\theta(s),s)
\\ 
\rho(\br u_0 \er )> \rho_0(>0),\quad\dot{\rho}(\br u_0 \er )\geq 0,\quad |\theta(\br u_0 \er )|<
\theta_+(\in (0,\pi/(2n))),\quad\dot{\theta}(\br u_0 \er )=0
\end{array}\right.
\end{equation}
\br (the reader can consider, with no loss of generality, $u_0=0$). \er Even though the domains $D[\rho_0,\theta_+]$
to be used are equal to those in \eqref{e_Dk} for the autonomous case, a technical strengthening of the bounds in Lemma \ref{zl1} is required.

\begin{lemma} \label{zl1_z}
Chosen $\theta_+\in (0,\pi/(2n))$ and $\delta_0\in (0,\cos(n\theta_+))$, there exists some $\rho_0>1$ and $\kappa\in (0,\kappa_0)$  such that

\begin{equation}\label{ze17bis_z}
-\partial_\rho V(\rho,\theta,u) > \delta_0 n
\rho^{n-1} (>0), \quad \quad \forall \rho\geq\rho_0, \, \forall
\theta\in (-\theta_+,\theta_+), \, \forall u\in [u_0,u_0+\kappa).
\end{equation}
Moreover, if we choose $\epsilon, \theta_-$ with $0<\epsilon<\theta_-<\theta_+$,
then some $\rho_0>1, \delta>0$, $\kappa>0$ can be taken so that  all the following inequalities also hold for all $\rho\geq\rho_0$,  $u\in [u_0,u_0+\kappa)$ and $\theta$ varying as specified below, for any choice $\theta_1\in [\theta_-, \theta_+]$:
\begin{eqnarray} \label{ze16_z}
V(\rho,\theta_1,u)-V(\rho,\theta,u)> \delta  (\theta_1-\theta)
 \rho^{n} ,
& 
 \forall \theta\in (-\theta_1
 +\epsilon ,
\theta_1),
\\ \label{ze16bis_z}
V(\rho, -\theta_1,u)-V(\rho,\theta,u)>\delta  (\theta_1+\theta)
\rho^{n},
& 
\forall \theta\in
(-\theta_1,\theta_1 -\epsilon),
\end{eqnarray}
\begin{eqnarray}\label{ze17_z}
\partial_\rho V(\rho,\theta_1,u)-\partial_\rho V(\rho,\theta,u)> \delta n (\theta_1-\theta)
\rho^{n-1},
& 
\forall \theta\in
(-\theta_1 +\epsilon,
\theta_1),
\\ \label{ze17b_z}
\partial_\rho V(\rho, -\theta_1,u)-\partial_\rho V(\rho,\theta,u)> \delta n (\theta_1+\theta)
\rho^{n-1},
& 
\forall \theta\in (-\theta_1 ,
\theta_1- \epsilon),
\end{eqnarray}
In addition, taking if necessary a bigger $\rho_0$, some $B>0$ can be chosen such that
\begin{eqnarray} \label{d}
\rho^{-3/2}\left|\partial
_u V(\rho, \theta_1,u)-\partial_u V(\rho,\theta,u)\right| < B (\theta_1-\theta) \rho^{n-3/2}
& \\ \label{dbis}
< \sqrt{\delta_0} \, (\partial_\rho (V(\rho, \theta_1,u)-\partial_\rho V(\rho,\theta,u)) ,
& 
\forall \theta\in
(-\theta_1 +\epsilon,
\theta_1)
\\
\label{dd}
 \rho^{-3/2}\left|\partial_u V(\rho, -\theta_1,u)-\partial_u V(\rho,\theta,u)\right| < B (\theta_1+\theta) \rho^{n-3/2}
 & \\
 \label{ddbis}
 < \sqrt{\delta_0} (\partial_\rho V(\rho, -\theta_1,u)-\partial_\rho V(\rho,\theta,u)),
& 
\forall \theta\in
(-\theta_1,\theta_1 -\epsilon)
\end{eqnarray}
Finally, chosen $0<\theta_- < \theta_+ <\pi/(2n)$, one can find $A>0$, $\rho_0>1$, $\kappa\in (0,\kappa_0)$  
such that all the previous inequalities  hold for $\rho\geq \rho_0$,
by replacing $\epsilon$ and $\delta$ by $A/\rho$ and \br $1/\rho$, \er resp., 
in \eqref{ze16_z}---\eqref{ddbis}.
\end{lemma}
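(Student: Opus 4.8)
The plan is to follow the proof of the autonomous Lemma~\ref{zl1} almost verbatim, treating $u$ as a parameter, and then to add the two genuinely new families of bounds \eqref{d}--\eqref{ddbis} controlling the $u$-derivative of the potential. Throughout I would work on a small half-interval $[u_0,u_0+\kappa)$ and shrink $\kappa$ as needed. The one structural input I rely on is that, because $u_0$ is chosen in the dense set $\mathcal{D}$ of Proposition~\ref{p_polinomio_no_autonomo}, all the coefficient functions $\lambda,\lambda_m,\alpha,\alpha_m$ are $C^1$ on $I$; hence they and their first derivatives are uniformly bounded on the compact closure of $[u_0,u_0+\kappa)$, and the strict inequalities in the normalization \eqref{e_u0_normailized} persist, by continuity, after $\kappa$ is made small.

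First I would establish \eqref{ze17bis_z}. The leading term of $-\partial_\rho V$ is $n\rho^{n-1}\lambda(u)\cos n(\theta+\alpha(u))$, exactly as in \eqref{zepartialV}. By \eqref{e_u0_normailized} one has $\lambda(u_0)\cos(n\theta)\geq\lambda(u_0)\cos(n\theta_+)>\cos(n\theta_+)>\delta_0$ on $(-\theta_+,\theta_+)$; since $\lambda$ is continuous and $\alpha(u_0)=0$, shrinking $\kappa$ keeps $\lambda(u)\cos n(\theta+\alpha(u))$ bounded below by a constant strictly above $\delta_0$, uniformly in $u\in[u_0,u_0+\kappa)$, and a large $\rho_0$ absorbs the lower-order terms (whose coefficients are uniformly bounded). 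For \eqref{ze16_z}--\eqref{ze17b_z} I would rerun Steps~1 and~2 of Lemma~\ref{zl1}: the quantity $a(\epsilon_1)$ of \eqref{e_ae0} depends only on $\theta_\pm$, the constant $C$ is replaced by $\sup_{u}\sum_m m|\lambda_m(u)|<\infty$, and the remainders $\mu_1,\mu_2$ of the expansion \eqref{e_pu2} become $u$-dependent but uniformly bounded. The same choices of $\rho_0,\delta$, made uniform by a final shrinking of $\kappa$, then yield \eqref{ze16_z}--\eqref{ze17b_z}.

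The heart of the lemma is the pair \eqref{d}--\eqref{ddbis}. For the left inequalities \eqref{d}, \eqref{dd} I would read off from \eqref{fff} that the leading part of $\partial_u V$ is $g(\theta,u)\rho^n$ with $g(\theta,u)=-\lambda'(u)\cos n(\theta+\alpha(u))+n\lambda(u)\alpha'(u)\sin n(\theta+\alpha(u))$, a trigonometric polynomial in $\theta$ whose $\theta$-Lipschitz constant is controlled by $\sup_u(|\lambda'|+|\lambda|+|\alpha'|)$; the mean value theorem gives $|\partial_u V(\rho,\theta_1,u)-\partial_u V(\rho,\theta,u)|\leq B_0(\theta_1-\theta)\rho^n+O((\theta_1-\theta)\rho^{n-1})$, and enlarging $\rho_0$ collapses this into a single constant $B$, which after multiplying by $\rho^{-3/2}$ is \eqref{d} (and symmetrically \eqref{dd}). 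For the right inequalities \eqref{dbis}, \eqref{ddbis} I would feed in the positive lower bound on the radial-derivative difference already secured in \eqref{ze17_z}, \eqref{ze17b_z}. With a fixed $\delta$, that difference exceeds a positive multiple of $(\theta_1-\theta)\rho^{n-1}$, while the left-hand side carries only $\rho^{n-3/2}$; the comparison thus reduces to $B\rho^{n-3/2}\leq\sqrt{\delta_0}\,(\mathrm{const})\,\rho^{n-1}$, i.e. $B\leq(\mathrm{const})\,\rho^{1/2}$, true for $\rho_0$ large. The exponent $3/2$ and the factor $\sqrt{\delta_0}$ are tuned precisely so that the $u$-derivative contribution is one half-power of $\rho$ smaller than the radial one and can be beaten by a single enlargement of $\rho_0$.

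Finally, the scaling assertion (replacing $\epsilon$ by $A/\rho$ and $\delta$ by $1/\rho$) would be carried out as in the last part of Lemma~\ref{zl1}: the estimate \eqref{e_auxil1} for $a(\epsilon_1)$ is $u$-independent, so the same $A$ and the same $\rho$-dependent thresholds work, uniformly in $u$ after one more shrinking of $\kappa$. I expect the main obstacle to sit exactly in the $\partial_u V$-estimates under this scaling. Unlike the autonomous case there is no conservation of energy in \eqref{e18}, so one must show that the energy injected through $\partial_u V$ is strictly dominated by the radial growth the potential itself forces; the delicate regime is $\theta$ near $-\theta_1$, where the symmetry-breaking lower-order terms weaken the radial-derivative difference down to order $\rho^{-1}$ times its generic size (this is exactly why $\delta$ must degrade to $1/\rho$). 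Keeping the $\partial_u V$-domination alive through the scaling in this near-$(-\theta_1)$ zone, uniformly in $u\in[u_0,u_0+\kappa)$, is the genuinely new difficulty and is what dictates the careful $3/2$-power bookkeeping of the statement.
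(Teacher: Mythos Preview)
Your proposal is correct and follows essentially the same approach as the paper: treat $u$ as a parameter, rerun the two-step argument of Lemma~\ref{zl1} uniformly on a short $u$-interval, then handle the new $\partial_u V$-bounds by a Lipschitz/mean-value estimate and a power-of-$\rho$ comparison against the already-secured $\partial_\rho V$-difference. The only point where you are slightly imprecise is the claim that ``$a(\epsilon_1)$ depends only on $\theta_\pm$'' and is $u$-independent: in the non-autonomous setting the leading cosine is $\cos n(\theta+\alpha(u))$, so the paper explicitly redefines $a(\epsilon_1)$ as the $u$-minimum $a(\epsilon_1,\kappa_1)$ in \eqref{e_ae0_z} and replaces $\sin(n\theta_-)$ by $\sin n(\theta_--\alpha[\kappa_1])$ in the analogue of \eqref{e_auxil1}; your ``final shrinking of $\kappa$'' does exactly this job, but the $\alpha(u)$-shift should be tracked rather than declared absent.
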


\noindent {\it Proof.} Along the proof, we will always assume that  $\kappa\in(0,\kappa_0)$   is small enough so that  $\theta_+ + \alpha[\kappa]<\pi/(2n)$, where
$$\alpha[\kappa]:= \; \hbox{Max}\{|\alpha(u)|: u\in [u_0,u_0+\kappa]\}, \qquad \hbox{for} \; \kappa \in (0,\kappa_0)$$
(recall that $\alpha[\kappa]$ decreases to 0 when $\kappa$ goes to 0, because of the normalization  \eqref{e_u0_normailized}).

 Inequality \eqref{ze17bis_z} can be \br proved as \eqref{ze17bis} \er just by choosing a (smaller)
$\kappa$ so that  $\delta_0 < \cos n(\theta_++ \alpha[\kappa])$, taking $\rho_0$ large enough so that $\delta_0 \leq \cos n(\theta_++ \alpha[\kappa])-C/\rho_0$ (with \br $C$ as below \er \eqref{zepartialV}) and taking into account that  $\lambda(u)\geq 1$ by \eqref{e_u0_normailized}.

For  inequalities \eqref{ze16_z}, \eqref{ze16bis_z}, \eqref{ze17b_z}, all  the  changes are as for \eqref{ze17_z}. Thus, recall,
\[
\begin{array}{c}
-\partial_\rho V(\rho,\theta,u)+\partial_\rho V(\rho,\theta_1,u)=n\lambda(u)\rho^{n-1}
(\cos n(\theta + \alpha(u))-\cos n(\theta_1 + \alpha(u))) \\ +\sum_{m=1}^{n-1}m\lambda_m(u)\rho^{m-1}(\cos m(\theta + \alpha_m(u))-\cos m(\theta_1 + \alpha_m(u))).
\end{array}
\]
Taking into account the proof of the inequality \eqref{ze17},  consider  two steps. For Step~1, consider any $\epsilon_1\in (0,\theta_-)$ and $\kappa_1>0$  (smaller than the value of $\kappa$ for the previous inequalities) so that $\epsilon_1+\alpha[\kappa_1]<\theta_-$. Redefine $a(\epsilon_1)$ in \eqref{e_ae0} as
\begin{equation}\label{e_ae0_z}
a(\epsilon_1, \kappa_1):=   \hbox{Min}_{u\in [u_0,u_0+\kappa_1]}\{\cos(n(\theta_- -\epsilon_1+\alpha(u)))-\cos(n(\theta_- +\alpha(u)))\}
 \end{equation}
and consider the corresponding bound for $\rho_1$ as in \eqref{e_choice_rho}. For Step 2 and the remainder, the only caution is
to take $\kappa$ small so that $\alpha[\kappa ]<\theta_-/2$.
Indeed, this allows to bound $\cos(n(\theta+\alpha(u)))-
\cos(n(\theta_1+\alpha(u)))$  expanding in a power series \br around $\theta_1+\alpha(u)$ \er as in  \eqref{e_pu2}  but
replacing \br   $\sin (n\theta_-)$ in the lower bound by  $\sin (n\theta_-/2)$ (again, independent of $u$).
\er So, one can proceed as in that proof.

 For \eqref{d}, \eqref{dbis} (and, analogously, \eqref{dd}, \eqref{ddbis}), recall
 \[
\begin{array}{l}
 \rho^{-3/2} (\partial_u V(\rho,\theta_1,u)-\partial_u V(\rho,\theta,u))=
 \\
-\lambda'(u)\rho^{n-3/2}(\cos n(\theta_1+\alpha(u))-\cos n(\theta+\alpha(u)))
\\
n\lambda(u)\alpha'(u)\rho^{n-3/2}(\sin n(\theta_1+\alpha(u))-\sin n(\theta+\alpha(u)))
\\
-\sum_{m=1}^{n-1}\lambda'_{m}(u)\rho^{m-3/2}(\cos m(\theta_1+\alpha_m(u))-\cos m(\theta+\alpha_m(u)))
\\
\sum_{m=1}^{n-1}m\lambda_{m}(u)\alpha'_m(u)\rho^{m-3/2}(\sin m(\theta_1+\alpha_m(u))-\sin m(\theta+\alpha_m(u))).
\end{array}
\]
Whenever $u$ varies in a compact interval, say, $[u_0, u_0+\kappa]$, all the functions of $u$ and their derivatives are bounded; therefore, one obtains a bound type $\leq B \rho^{n-3/2}$ for the absolute value of the right hand side.
Moreover, expanding again in a power series \br around $\theta_1+\alpha(u)$ and $\theta_1+\alpha_m(u)$, \er one obtains a better estimate multiplying by the factor $(\theta_1-\theta)$ close to $\theta_1$ (eventually taking a bigger
$B$), \br thus proving\footnote{This bound \eqref{d} is valid \br for all $\theta \in (-\theta_1,\theta_1)$, with no need of $\epsilon$ in the interval at \eqref{dbis}, \er because we are looking for an upper bound, in contrast with  previous lower bounds.} \eqref{d}. \er Then, the \br bound \eqref{dbis} \er in terms of $\partial_\rho V$ (now in the required interval for $\theta$, which depends on $\epsilon$) \br follows from \eqref{ze17_z}, \er 
as  $\rho_0$ can be chosen bigger again (without altering the other constants) so that $B/\sqrt{\rho_0}<\sqrt{\delta_0} \, \delta$.

Finally, the \br last part \er follows as in Lemma \ref{zl1} taking into account that, once $\epsilon_1, \kappa_1$ satisfying  $\epsilon_1+\alpha[\kappa_1]<\theta_-$ has been chosen, inequalities analog to \eqref{e_auxil1} \br (with $a(\epsilon_1,\kappa_1)$ in \eqref{e_ae0_z} playing the role of $a(\epsilon_1)$) \er hold just replacing $\sin (n\theta_-)$ by \br $\sin n(\theta_--\alpha[\kappa_1])$. \er
 $ \Box$

\begin{remark}{\em
Here, not only the bounds of Lemma \ref{zl1} are improved  by taking into account the $u$-dependence of $V$,
but also the new inequalities \eqref{d}---\eqref{ddbis}
appear. In these inequalities,  the scale factor $\rho^{-3/2}$ (which encourages the corresponding inequality) has been introduced by convenience. We will not use the intermediate inequality  involving  $B$ \br in \eqref{d}, \eqref{dbis} nor in  \eqref{dd}, \eqref{ddbis};  however, they stress \er the simple character of the obtained \br upper \er bound, in comparison with the previous ones.

}\end{remark}

The key in the non-autonomous case is to show that the comparison of energies which leaded to Lemma \ref{zl2} still allows to obtain suitable estimates. With this aim, rewrite the
 conservation law \eqref{e18} as: \begin{equation}\label{e18_z}
\begin{array}{c}
\frac{1}{2}(\dot{\rho}(s)^2+\rho(s)^2\dot{\theta}(s)^2)+V(\rho(s),\theta(s),s) 
= E(s_0)
+ \int_{s_0}^s\partial_u V(\rho(\sigma),\theta(\sigma),\sigma)d\sigma,
\end{array}
\end{equation}
for all $s, s_0\in I=[0,b)$.
In what follows, $s_0$ will be prescribed in the region $[u_0,u_0+\kappa)$ satisfying $\dot \theta({s_0})=0$; so, its total energy is
$E(s_0)=V(\rho(s_0),\theta(s_0),s_0)+ \dot\rho(s_0)^2/2.$
Moreover, for $u_0, \kappa$ as above and $s_1\in [u_0, u_0+\kappa)$
the {\em rectified energy} of its  $\theta(s_1)$-projection $\gamma_{\theta(s_1)}(s)\equiv (\rho(s), \theta(s_1))$ is

\begin{equation}\label{ef1_z}
F (s):=\frac{1}{2}\dot{\rho}(s)^2+V(\rho(s),\theta(s_1),s) -\int_{s_0}^s\partial_u V(\rho(\sigma),\theta(\sigma),\sigma)d\sigma .
\end{equation}

The crucial step is the following  analog to Lemma \ref{zl2}.
\begin{lemma}\label{l2} Choose $0<\theta_-<\theta_+<\pi/(2n)$, and let $A,\rho_0,\kappa>0$ be as in the last part of Lemma \ref{zl1_z}. Let $(\rho(s),\theta(s))$, $s\in [u_0,u_0+\kappa )$ be a solution with $(\rho(s_0),\theta(s_0))\in D[\rho_0,\theta_+]$, $\dot{\theta}(s_0)=0$ and $\dot{\rho}(s_0)\geq 0$ for some $s_0\in [u_0,u_0+\kappa )$. If $s_1\in (s_0,u_0+\kappa)$ satisfies $|\theta(s_0)|<|\theta(s_1)|<\theta_+$ and $\theta(s)$ is 
monotonous on $(s_0,s_1)$,  then $|\theta(s_1)|-|\theta(s_0)|\leq A/\rho(s_0)$.
\end{lemma}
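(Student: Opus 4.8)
The plan is to reproduce the scheme of Lemma~\ref{zl2}, replacing the conserved energy by the \emph{rectified energy} $F$ of \eqref{ef1_z} and carefully tracking the extra term produced by the $u$-dependence of $V$. First I would argue by contradiction, assuming $|\theta(s_1)|-|\theta(s_0)|>A/\rho(s_0)$, and establish as before that $\dot\rho(s)>0$ on $(s_0,s_1]$ with $\rho$ strictly increasing, using \eqref{ze17bis_z} and the first equation of \eqref{e.14'}. Inserting the conservation law \eqref{e18_z} into the definition \eqref{ef1_z}, the integral of $\partial_u V$ cancels and one obtains the clean identity
\[
F(s)=E(s_0)-\tfrac12\rho(s)^2\dot\theta(s)^2+\big(V(\rho(s),\theta(s_1),s)-V(\rho(s),\theta(s),s)\big),
\]
exactly parallel to \eqref{zg}. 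Evaluating at $s_0$ and $s_1$ and using \eqref{ze16_z}, \eqref{ze16bis_z} with $u=s_0$ then yields $F(s_1)-F(s_0)<0$, so the mean value theorem produces some $s_*\in(s_0,s_1)$ with $\dot F(s_*)<0$.

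The genuinely new feature appears upon differentiating $F$: since $V$ now depends on $s$,
\[
\dot F(s)=\dot\rho\,\ddot\rho+\partial_\rho V(\rho,\theta(s_1),s)\,\dot\rho+\big(\partial_u V(\rho,\theta(s_1),s)-\partial_u V(\rho,\theta(s),s)\big),
\]
the last difference being absent in the autonomous Lemma~\ref{zl2}. Substituting $\ddot\rho=\rho\dot\theta^2-\partial_\rho V(\rho,\theta(s),s)$ from \eqref{e.14'}, the inequality $\dot F(s_*)<0$ becomes (arguments $\rho(s_*),\cdot,s_*$ suppressed)
\[
\dot\rho(s_*)\rho(s_*)\dot\theta(s_*)^2+\dot\rho(s_*)\big(\partial_\rho V(\theta(s_1))-\partial_\rho V(\theta(s_*))\big)+\big(\partial_u V(\theta(s_1))-\partial_u V(\theta(s_*))\big)<0.
\]
The first term is nonnegative, and by the monotonicity of $\theta$ together with \eqref{ze17_z}, \eqref{ze17b_z} the $\partial_\rho V$-difference is strictly positive; the aim is to show that the $\partial_u V$-difference cannot overcome it. Here the bounds \eqref{d}--\eqref{ddbis} are decisive: they give $|\partial_u V(\theta(s_1))-\partial_u V(\theta(s_*))|<\sqrt{\delta_0}\,\rho(s_*)^{3/2}\big(\partial_\rho V(\theta(s_1))-\partial_\rho V(\theta(s_*))\big)$, so the left-hand side above is bounded below by $\big(\partial_\rho V(\theta(s_1))-\partial_\rho V(\theta(s_*))\big)\big(\dot\rho(s_*)-\sqrt{\delta_0}\,\rho(s_*)^{3/2}\big)$.

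Everything then reduces to the quantitative lower bound $\dot\rho(s_*)>\sqrt{\delta_0}\,\rho(s_*)^{3/2}$, and securing this is the new step — the Claim referred to in Remark~\ref{dificultadnoautonomo}(2), which is precisely why the strict positivity $\dot\rho(s_0)>0$ (available in the applications through condition \eqref{condicionadicional}) is imposed. I would obtain it from an energy estimate for the radial motion: multiplying $\ddot\rho>\delta_0 n\rho^{n-1}$ (valid in $D[\rho_0,\theta_+]$ by \eqref{ze17bis_z}) by $\dot\rho>0$ and integrating gives $\dot\rho(s)^2>\dot\rho(s_0)^2+2\delta_0(\rho(s)^n-\rho(s_0)^n)$; since $n\ge3$ one has $\rho(s)^n\ge\rho(s)^3$, so once $\rho$ is sufficiently larger than $\rho(s_0)$ the right-hand side exceeds $\delta_0\rho(s)^3$, exactly because the scale $\rho^{3/2}$ built into \eqref{d}--\eqref{ddbis} is the borderline outpaced by the radial blow-up $\dot\rho\sim\rho^{n/2}$. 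I expect the main obstacle to be that this comparison degenerates immediately to the right of $s_0$, where $\rho(s_*)-\rho(s_0)$ and $\dot\rho(s_*)$ may both be small: controlling this initial window is where $\dot\rho(s_0)>0$, the monotone growth of $\dot\rho^2-\delta_0\rho^3$ (whose derivative $\dot\rho\,(2\ddot\rho-3\delta_0\rho^2)$ is positive for $\rho\ge\rho_0$, $n\ge3$, whenever $\dot\rho>0$), and the largeness of $\rho_0$ must be combined. Once $\dot\rho(s_*)>\sqrt{\delta_0}\,\rho(s_*)^{3/2}$ is established, the displayed lower bound is nonnegative, contradicting $\dot F(s_*)<0$; this forces $|\theta(s_1)|-|\theta(s_0)|\le A/\rho(s_0)$, as claimed.
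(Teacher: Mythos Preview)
Your overall architecture mirrors the paper's proof: the rectified energy $F$, the contradiction via $F(s_1)-F(s_0)<0$, the mean-value point $s_*$, and the new $\partial_u V$-term that has to be absorbed by the $\partial_\rho V$-difference using \eqref{d}--\eqref{ddbis} and a lower bound $\dot\rho(s_*)>\sqrt{\delta_0}\,\rho(s_*)^{3/2}$. This is exactly the paper's route, and your reduction to that single radial inequality is correct.

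The gap is in your sketch of the Claim. Integrating $\ddot\rho>\delta_0 n\rho^{n-1}$ from $s_0$ gives $\dot\rho(s)^2>\dot\rho(s_0)^2+2\delta_0(\rho(s)^n-\rho(s_0)^n)$, but nothing forces $\rho(s_*)$ to be ``sufficiently larger than $\rho(s_0)$''; $s_*$ can sit arbitrarily close to $s_0$. Your proposed fix---monotonicity of $\dot\rho^2-\delta_0\rho^3$ together with $\dot\rho(s_0)>0$ and ``largeness of $\rho_0$''---does not close either: monotonicity only propagates an inequality you already have at the left endpoint, and what you would need there is $\dot\rho(s_0)^2>\delta_0\rho(s_0)^3$, which does not follow from $\dot\rho(s_0)>0$. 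Enlarging $\rho_0$ makes this \emph{harder}, not easier, since the right-hand side grows like $\rho(s_0)^3$.

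The paper resolves this by going back to the \emph{initial} time $u_0$ (not $s_0$) and exploiting the freedom in $\delta_0$: given the trajectory, pick $\delta_0\in(0,\cos(n\theta_+))$ so small that $\dot\rho(u_0)^2>2\delta_0\,\rho(u_0)^n$, which is possible precisely because of the strict hypothesis $\dot\rho(u_0)>0$ in \eqref{condicionadicional}. Since $\frac{d}{ds}\big(\dot\rho^2-2\delta_0\rho^n\big)=2\dot\rho(\ddot\rho-\delta_0 n\rho^{n-1})\ge 0$ throughout $D[\rho_0,\theta_+]$, the inequality $\dot\rho(s)^2>2\delta_0\rho(s)^n$ propagates to all $s\in[u_0,u_0+\kappa)$; in particular $\dot\rho(s_*)^2>2\delta_0\rho(s_*)^n\ge\delta_0\rho(s_*)^3$ (using $\rho(s_*)>1$, $n\ge3$), which is exactly the Claim. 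The key point you are missing is that $\delta_0$ is allowed to depend on the initial data of the specific trajectory, and the inequality is seeded at $u_0$, not at $s_0$.
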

\noindent {\em Proof}.
As in the proof of Lemma \ref{zl2}, by the first equation (\ref{e.14'}) and inequality (\ref{ze17bis_z}),
\begin{equation}\label{zyy_z}
\ddot{\rho}(s)\geq -\partial_\rho V(\rho(s),\theta(s))>\delta_0 n\rho(s)^{n-1}>0, \qquad \qquad \dot \rho(s)>0,
\end{equation}
for some $\delta_0>0$ and all $s\in(s_0,s_1]$.
Using the conservation law (\ref{e18_z}), $F$ in \eqref{ef1_z} is

\begin{equation}\label{g}
F(s)=E(s_0)-\frac{1}{2}\rho(s)^2\dot{\theta}(s)^2
+ V(\rho(s),\theta(s_1),s) -V(\rho(s),\theta(s),s).
\end{equation}
Reasoning  by contradiction as in Lemma \ref{zl2}, if
$|\theta(s_1)|-|\theta(s_0)|>A/\rho(s_0)$
the last part of Lemma \ref{zl1_z} will be applicable and
\begin{equation}\label{vv}
F(s_1)-F(s_0)\leq -V(\rho(s_0),\theta(s_1),s_0)+V(\rho(s_0),\theta(s_0),s_0)<0.
\end{equation}
Thus, \br $\dot F(s_*)<0$ \er for some $s_*\in(s_0,s_1)$ and \br from \eqref{ef1_z} \er
\begin{equation}\label{eq}
\begin{array}{c}
\ddot{\rho}(s_*)<-\partial_{\rho}V(\rho(s_*),\theta(s_1), \br s_* \er ) \qquad\qquad\qquad\qquad\qquad\qquad\qquad\qquad
\\ \qquad\qquad\qquad\qquad -\frac{1}{\dot{\rho}(s_*)}\left(\partial_u V(\rho(s_*),\theta(s_1), \br s_* \er )-\partial_u V(\rho(s_*),\theta(s_*), \br s_* \er ))\right).
\end{array}
\end{equation}
In the case that the last line of this formula is negative, one can drop it and apply  \eqref{ze17_z}, \eqref{ze17b_z}   in order to obtain \br (as in Lemma \ref{zl2}) \er
\begin{equation}\label{e20'}
\begin{array}{rl}
\ddot{\rho}(s_*)<-\partial_{\rho}V(\rho(s_*),\theta(s_*),s_*).
\end{array}
\end{equation}
So, a contradiction with the first equation of (\ref{e.14'}) follows.
Nevertheless, when the last line of \eqref{eq} is positive, a new type of bound 
\bb is required and, as announced in Remark \ref{dificultadnoautonomo} (2), the  hypothesis \eqref{condicionadicional}
will be used. \eb

\bigskip

\noindent {\em Claim. } For any trajectory as above (satisfying \bb $\dot{\rho}( u_0 ) >0$) \eb
and any choice \bb $\delta_0 \in  (0,\cos(n\theta_+)) $ \eb   (as in \bb Lemma \ref{zl1_z}) such that  $\dot{\rho}(u_0)^2>2\delta_0 \rho(u_0)^n$, \eb one has:
$$\dot{\rho}(s_*)^{-1} \br < \er \sqrt{\delta_0^{-1} \, \rho(s_*)^{-3}}.$$

\smallskip

{\em Proof of Claim}. From (\ref{zyy_z}) (recall $n\geq 3)$:
\begin{equation}
\label{e_penultima}
\ddot{\rho}(s)\geq \delta_0 n\rho(s)^{n-1}>0\;\; \Rightarrow\;\;\ddot{\rho}(s)\dot{\rho}(s)\geq \delta_0 n\rho(s)^{n-1}\dot{\rho}(s) \;\; \Rightarrow\;\; \frac{d}{ds}\dot{\rho}(s)^2
\geq 2 \delta_0\frac{d}{ds}\rho(s)^{n}.
\end{equation}
\br Recall that this inequality holds on all $D[\rho_0,\theta_+]$ and, in particular, it implies
\begin{equation}\label{e_ultima}
\dot{\rho}(s)^2>2\delta_0 \rho(s)^n \qquad \forall s\in [u_0,u_0+\kappa ),
\end{equation}
\er \bb as the latter holds  for $s=u_0$ by hypothesis. \eb
\br Integrating \eqref{e_penultima},   \er
\[
\frac{1}{2}(\dot{\rho}(s)^2-\dot{\rho}(s_0)^2)\geq \delta_0 (\rho(s)^n-\rho(s_0)^n),\qquad\forall s\in \br [s_0,s_1] \er
\]
(for $s_0, s_1$ in the lemma). In particular, at the critical point \br $s_*\in(s_0,s_1)$ \er of $F$,
\[
\br \dot{\rho}(s_*)^2\geq \dot{\rho}(s_0 )^2+2\delta_0(\rho(s_*)^n-\rho(s_0)^n)=
\delta_0 \rho(s_*)^3
\left(2\rho(s_*)^{n-3}+\frac{\dot{\rho}(s_0)^2-2\delta_0\rho(s_0)^n}{\delta_0 \rho(s_*)^3}\right). \er
\]
Thus, it is enough to check that the sum  of the terms inside the last big parentheses is bigger than one, \br which holds from   \eqref{e_ultima} plus our overall assumptions (i.e., $1\leq \rho_0 < \rho(s_*)$, $n\geq 3$). \er
 $\Box$

\bigskip
So, when the last line of \eqref{eq} is positive,  the Claim yields:
\begin{equation}\label{ddd}
\begin{array}{c}
\ddot{\rho}(s_*)<-\partial_{\rho}V(\rho(s_*),\theta(s_1),s_*) \qquad\qquad\qquad\qquad\qquad\qquad\qquad\qquad
\\ \qquad\qquad\qquad
-\frac{1}{\sqrt{\delta_0}} \rho(s_*)^{-3/2}
(\partial_u V(\rho(s_*),\theta(s_1),s_*)-\partial_u V(\rho(s_*),\theta(s_*),s_*)).
\end{array}
\end{equation}
As in the previous case (and in the proof of Lemma~\ref{zl2}), the
last paragraph of Lemma~\ref{zl1_z} becomes applicable with $\theta= \theta(s_*)$, $\theta_1= |\theta(s_1)|$ and, using \eqref{d}---\eqref{ddbis}:
\[
\begin{array}{c}
-\frac{1}{\sqrt{\delta_0}} \rho(s_*)^{-3/2}
(\partial_u V(\rho(s_*),\theta(s_1),s_*)-\partial_u V(\rho(s_*),\theta(s_*),s_*))
\\
<  \partial_{\rho} V(\rho(s_*),\theta(s_1),s_*)-\partial_{\rho} V(\rho(s_*),\theta(s_*),s_*).
\end{array}
\]
This last inequality
plus (\ref{ddd}), implies \eqref{e20'} and, thus, the required contradiction.
 $ \Box$

The previous lemmas are enough for a proof as in the autonomous case. Indeed:

\begin{proposition}\label{zl5_z} Under the hypotheses of Prop. \ref{l5}, and for any $u_0\in \R$,  as in Prop. \ref{p_polinomio_no_autonomo}, there exists $\rho_0>1$ and $\kappa_0>0$ such that any solution $\gamma(s)=(\rho(s), \theta(s))$, $s\in [u_0,b), b\in (u_0, u_0+\kappa_0]$ of \eqref{e.14'} satisfying  \br the initial conditions in Prop. \ref{l5}  (formula \eqref{e_condicionesiniciales})   at $u_0$, remains in $D[\rho_0, \theta_+]$. \er 
\end{proposition}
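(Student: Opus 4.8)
The plan is to transcribe the proof of Proposition~\ref{l5} almost word for word to the present $u$-dependent setting, keeping the same architecture (Claim~0 on the isolation of the angular extrema, the construction of the subsequence $\{s_k\}_k$, Claims~1--4 bounding the successive angular increments, and the final summation via the angular length) and replacing each autonomous ingredient by the non-autonomous one proved above. Thus Lemma~\ref{l2} takes over the role of Lemma~\ref{zl2}, the bounds of Lemma~\ref{zl1_z} replace those of Lemma~\ref{zl1}, and the whole argument is confined to the interval $[u_0,u_0+\kappa)$ on which those estimates are valid. Concretely, I would first fix $\theta_*,\theta_-$ as in \eqref{e_f1}; then choose $A>0$, $\rho_0>1$ and $\kappa\in(0,\kappa_0)$ from the last part of Lemma~\ref{zl1_z} and from Lemma~\ref{l2}; and finally, enlarging $\rho_0$ and shrinking $\kappa$ as needed, impose the non-autonomous analogues of Lemma~\ref{l_previo} and Lemma~\ref{l5bis} together with the summability bound \eqref{e_f2}. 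This $\kappa$ is then renamed $\kappa_0$ in the statement, so that $b\in(u_0,u_0+\kappa_0]$.

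Two autonomous auxiliary lemmas used inside Proposition~\ref{l5} need a $u$-uniform version, and each is obtained with only cosmetic changes. For the analogue of Lemma~\ref{l_previo}, the bound $|\partial_\theta V(\rho,\theta,u)|\leq\overline{\delta}\,n\rho^{n}$ and the correct sign of $\partial_\theta V$ for $\theta$ near $\pm\pi/(2n)$ hold uniformly in $u\in[u_0,u_0+\kappa)$: the coefficients $\lambda,\lambda_m,\alpha,\alpha_m$ (and the products entering $\partial_\theta V$) are continuous, hence bounded on the compact closure of that interval, and the sign statement follows once $\kappa$ is small enough that $\theta_++\alpha[\kappa]<\pi/(2n)$, exactly as in the derivation of \eqref{ze17bis_z}. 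For the analogue of Lemma~\ref{l5bis} even less is needed: the first equation of \eqref{e.14'} has literally the same form as in the autonomous case, the lower bound $-\partial_\rho V>\delta_0 n\rho^{n-1}$ on $[u_0,u_0+\kappa)$ is supplied by \eqref{ze17bis_z}, and the bound on $|\partial_\theta V|$ just quoted controls $\ddot{\overline{\theta}}$; hence the comparison of $\rho$ with $f(s)=\rho(s_0)e^{\overline{\theta}(s)/\Lambda}$ runs verbatim and yields $\rho(s)>\rho(s_0)e^{\overline{\theta}(s)/\Lambda}$ as long as $\gamma$ stays in $D[\rho_0,\theta_+]$.

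With these substitutions the combinatorial core is identical to the autonomous one. Claim~0 (the sign of $\ddot\theta$ at a zero of $\dot\theta$ with $|\theta|\in[\theta_0,\theta_+]$, hence strictly isolated extrema) follows from the second equation of \eqref{e.14'} and the non-autonomous Lemma~\ref{l_previo}. One then extracts the increasing subsequence $\{s_k\}_k$ of zeros of $\dot\theta$ along which $|\theta(s_k)|$ strictly grows past $\theta_*$, treats the degenerate cases (no $s_k$, or infinitely many) as in Claim~1, bounds each increment by $|\theta(s_{k+1})|-|\theta(s_k)|\leq A/\rho(s_k)$ through Lemma~\ref{l2} (Claims~2--4), and sums these using the exponential lower bound for $\rho$ from the non-autonomous Lemma~\ref{l5bis} and the definition of $S$ in \eqref{e_f2}. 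Adding the three contributions and invoking \eqref{e_f1} gives $|\theta(s)|<\theta_+$ throughout $[u_0,b)$, i.e.\ $\gamma$ remains in $D[\rho_0,\theta_+]$.

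I expect the only genuinely new difficulty, relative to the autonomous case, to be the extra term $\int_{s_0}^s\partial_u V\,d\sigma$ appearing in the energy balance \eqref{e18_z} and in the rectified energy \eqref{ef1_z}, which a priori spoils the sign control of $\dot F(s_*)$ used to derive the contradiction. This is, however, precisely the obstacle already absorbed by Lemma~\ref{l2}: when the $\partial_u V$ contribution to $\ddot\rho(s_*)$ is positive, the Claim inside that lemma---propagating $\dot\rho(s)^2>2\delta_0\rho(s)^n$ forward from $s=u_0$ via \eqref{e_penultima}---bounds $\dot\rho(s_*)^{-1}$ by $\sqrt{\delta_0^{-1}\rho(s_*)^{-3}}$, after which the scaled estimates \eqref{d}--\eqref{ddbis} dominate the offending term by the difference of the $\partial_\rho V$'s, so that the contradiction driving Lemma~\ref{l2} survives. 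The only point to watch in the transcription is therefore to arrange, in the choice of constants, that the Claim of Lemma~\ref{l2} is applicable---that is, that $\dot\rho(u_0)^2>2\delta_0\rho(u_0)^n$ holds so that \eqref{e_ultima} propagates---which is exactly the purpose of hypothesis \eqref{condicionadicional} (see also the caveat of Remark~\ref{dificultadnoautonomo}(2)); with this secured, nothing beyond the transcription above is required.
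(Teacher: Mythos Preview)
Your proposal is correct and follows essentially the same approach as the paper's own proof, which is deliberately terse: it says only to ``follow the proof of Prop.~\ref{l5}'' using Lemmas~\ref{zl1_z} and~\ref{l2} in place of their autonomous counterparts, with $\kappa$ chosen small enough (the paper singles out the extra constraint $\alpha[\kappa]+\theta_0<\theta_+$, which serves the same purpose as your $\theta_++\alpha[\kappa]<\pi/(2n)$ in securing the $u$-uniform sign bounds for $\partial_\theta V$). Your more detailed expansion---spelling out the $u$-uniform analogues of Lemmas~\ref{l_previo} and~\ref{l5bis} and explaining why the $\partial_u V$ obstruction is already absorbed inside Lemma~\ref{l2} via its Claim---is exactly the content the paper leaves implicit; note only that the inequality $\dot\rho(u_0)^2>2\delta_0\rho(u_0)^n$ needed for that Claim is not literally a hypothesis of Proposition~\ref{zl5_z} but enters through \eqref{condicionadicional} in Theorem~\ref{t}, as you correctly flag (and as the paper itself acknowledges in Remark~\ref{dificultadnoautonomo}(2)).
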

\noindent {\em Proof.} Follow  the proof of Prop. \ref{l5}, just taking into account  the caution to choose $\kappa$ small as in Lemmas \ref{zl1_z} (and, thus, Lemma \ref{l2}) and satisfying, additionally, $\alpha[\kappa]+ \theta_0 <\theta_+$. Indeed, this allows to obtain the
inequalities in Lemma \ref{l_previo}, and only these inequalities plus the general ones in Lemmas \ref{zl1_z}, \ref{l2} are necessary to establish the conclusions of Lemma \ref{l5bis}  and carry out a proof analogous to Proposition \ref{l5}. $ \Box$

\bigskip

\noindent {\em Proof of Theorem \ref{t}.} Reasoning for $k=0$ and
 $0<\theta_0<\theta_+<\pi/(2n)$ as in the statement of the theorem,  take $\kappa_0>0$
and $\rho_0>0$ provided by Prop. \ref{zl5_z} and
consider the inextensible trajectory $\gamma:[u_0,b)\rightarrow \R^2$ under the \br posed initial conditions. \er  
Whenever  its parameter satisfies $s \leq u_0+\kappa_0$, Prop. \ref{zl5_z} ensures that $\gamma$ remains in $D[\rho_0,\theta_+]$. Thus,
the first equation of (\ref{e.14'}) and 
inequality \eqref{ze17bis_z}, imply
\begin{equation}\label{ff'}
\ddot{\rho}(s)=\rho(s)\dot{\theta}(s)^{2}-\frac{\partial
V}{\partial \rho}(\rho(s),\theta(s),u(s))>\delta_0 n\rho(s)^{n-1}.
\end{equation}
Then, the part (i) of  Lemma \ref{e_ODEincomplete} assures that, for the already prescribed value of $\kappa_0$, one can take, if necessary, a bigger $\rho_0$ so that $b<u_0+\kappa_0$ and, thus, all the required trajectories become incomplete.
 $\Box$

\section{Conclusion}

 In this paper, the physical and mathematical fundamentals of the EK conjecture has been clarified, stressing an important link to dynamical systems and potential theory. The solved polynomially bounded case becomes significant in several fields: (i) Relativity, in the foundational basis of gravitational waves, (ii) Classical Mechanics,  as completeness is also relevant in this field and the forces under consideration (coming from a divergence free gradient potential) are the most standard ones in Mechanics, (iii) Dynamical Systems, because the proof develops some ideas which may be of interest in the field \br (namely, how   to control  the oscillations of trajectories, in order to assure their  confinement    in suitable regions), \er and (iv) the theory of complex holomorphic vector fields, as our results  (obtained by avoiding analyticity so that the non-autonomous case is included) makes sense in that field and, prospectively, may introduce new tools and ideas there.
Extensions of these ideas could make sense in recent proposals which develop a more general modeling of pp-waves
\cite{FP,SSS}.
What is more, the EK conjecture introduces the pattern:
$$
\hbox{Source-free dynamics} \Longrightarrow
\left\{
\begin{array}{l}
\hbox{Natural (mathematical) vacuum solutions, or}\\
\hbox{Incompleteness (eventually missed source),}
\end{array}
\right.
$$
which may serve as a paradigm for other parts of Physics, as well as for its mathematical modeling.
\bigskip

\section*{Acknowledgements}
The authors acknowledge warmly  A. Bustinduy (Univ. A. Nebrija, Madrid) and D. Peralta-Salas (ICMAT, Madrid) for discussions on the topic along several years. They are also very grateful to S. Nemirovski (Univ.  Bochum and Steklov Inst. Math.) for his comments, including to point out  the reference \cite{Re}. The careful reading and suggestions by the anonymous referee are also
acknowledged.

{\em Funding: } This work has been partially supported by the grants number MTM2016-78807-C2-2-P
(JLF) and MTM2016-78807-C2-1-P (MS), both of them funded by the Spanish
Ministry of Economy and Competitiveness (MINECO) and the European Regional
Development Fund (ERDF).
%



\end{document}